\documentclass[10pt,a4paper]{article}
\usepackage[utf8]{inputenc}
\usepackage[T1]{fontenc}
\usepackage[english]{babel}
\usepackage{amsmath, amssymb, amsthm}
\usepackage{graphicx}
\usepackage{booktabs}
\usepackage{tabularx}
\usepackage{hyperref}
\usepackage{geometry}
\usepackage{authblk}
\usepackage{subcaption}
\usepackage{enumitem}
\usepackage{stmaryrd}
\numberwithin{equation}{section}
\usepackage{float}
\usepackage{lipsum}
\usepackage{cmbright}
\usepackage{algorithm}
\usepackage{algorithmic}
\usepackage[authoryear]{natbib}
\usepackage{microtype}
\usepackage{xurl}
\title{\textbf{Post Hoc Inference for Component Attribution in Multivariate Change-Point Detection}}

\author[1,2]{Dhia-Elhaq Ouerfelli \thanks{dhia-el-haq.ouerfelli@universite-paris-saclay.fr}}
\author[1,2]{Sylvain Arlot \thanks{sylvain.arlot@inria.fr}}
\author[1,2]{Kevin Bleakley \thanks{kevin.bleakley@inria.fr}}
\author[1,2]{Patrick Pamphile \thanks{patrick.pamphile@universite-paris-saclay.fr}}
\affil[1]{Université Paris-Saclay, Laboratoire de Mathématiques d'Orsay, France}
\affil[2]{Inria Saclay, France}

\date{} 

\newtheorem{theorem}{Theorem}

\newtheorem{lemma}[theorem]{Lemma}

\newtheorem{remark}{Remark}

\newtheorem{assumption}{Assumption}

\usepackage{xcolor}

\usepackage{comment}

\usepackage{tikz}
\usetikzlibrary{decorations.pathreplacing} 
\usetikzlibrary{calc} 
\usetikzlibrary{positioning,arrows.meta}

\begin{document}

\maketitle
\begin{abstract}
We consider the post-detection analysis of change-points for multivariate time series, with the goal of identifying which coordinates are responsible for a detected change. After a change-point has been located by an offline detection algorithm, we propose \emph{post hoc} statistical procedures to determine whether the change occurs in either of two predefined blocks of coordinates or in both. Our methods rely on two-sample testing procedures with a particular focus on nonparametric tests; we provide theoretical guarantees for Type I error control. Simulations and a real-data experiment demonstrate the strong performance of the proposed procedures.
\end{abstract}

\section{Introduction}

Change-point detection is a fundamental issue in time series analysis. The goal is to identify times when the distribution of observations changes abruptly, whether these changes affect the mean, variance, or more general distributional properties \cite{Basseville1993, Aminikhanghahi2017}. Extensive literature has driven the development of diverse procedures, covering both parametric and nonparametric models \cite{harchaoui2007, matteson2014nonparametric, Arlot2019}, which provide strong theoretical guarantees regarding the estimation of the number of change-points and their temporal locations \cite{Truong_2020, Garreau2018}. However, in multivariate settings, detecting a change-point is often only the first step. Once a structural change is detected in a multivariate time series, practitioners need to identify which specific variables or components of the signal are responsible for the observed distributional shift. This \emph{post hoc} interpretability is crucial for effective decision-making across diverse domains. For example, in industrial process control, one must identify which subsystem or specific group of sensors is responsible for a detected anomaly \cite{Oakland2007, DING2023, TANG2025}. Similarly, in epidemiology, effective intervention requires determining which geographical regions contribute to an observed spike in reported cases \cite{unkel2012statistical, Chen2024}. Furthermore, in finance, practitioners seek to identify which specific sectors or assets are driving a structural break detected in the broader market \cite{Ross2013, McMillan01082011, Malik2021}.

From a statistical perspective, identifying these variables is a problem of \emph{post hoc} inference and explainability. In practice, this inference is often conducted naively by comparing observations before and after the estimated change-point. However, because the samples used for this inference are defined based on a change-point that is itself estimated from the data, a critical selection bias is introduced. Consequently, the observations are no longer independent of the selection procedure, invalidating the guarantees of standard statistical tests \cite{lehmann2005testing, gretton2012}. Ignoring this dependence in nonparametric two-sample testing leads to a severe loss of Type I error control. This issue falls within the broader framework of post-selection inference, where a data-driven selection step affects the distribution of subsequent test statistics.

A substantial literature focuses on hypothesis testing for change-point detection, with the aim of assessing whether a detected change-point corresponds to a statistically significant distributional change. These tests are typically constructed conditionally on the event that a change-point has been detected by a given algorithm \cite{jewell2022testing, bhattacharyya2025theoretical, carrington2025improving, carrington2024post}. Notably, \cite{umezu2017selective} and \cite{sugiyama2021valid} perform inference by conditioning on both the detected change-points and the coordinate support selected by the detection algorithm. However, these works address a different inferential question from ours: they aim to determine whether a detected change-point corresponds to a true change in distribution. In contrast, our goal is not to assess the existence of a change-point, but to identify which variables or coordinates are responsible for the detected change. 

Other works investigate variable selection or interpretability in the context of distributional comparison, often through kernel-based methods \cite{mitsuzawa2023variable, mitsuzawa2024variable, wang2023variable}. These approaches aim to identify which coordinates contribute to differences between two multivariate distributions. However, they typically assume that the two samples being compared are fixed and independent \emph{a priori}, thereby failing to account for the \emph{post hoc} selection bias induced when samples are defined based on an estimated change-point.

Some further contributions address variable attribution explicitly within the context of change-point detection, which is closely related to our setting. For instance, \cite{delaconcha2023} propose an online nonparametric approach for attributing distributional changes in a graph-structured framework, based on likelihood-ratio estimation, with the goal of identifying nodes responsible for a detected change. However, their method does not provide theoretical guarantees on exact block recovery or error control. Another recent work \cite{Chen2024} considers online change-point detection under parametric assumptions, such as mean shifts, and develops inference procedures for change-point locations. They also provide theoretical guarantees for correctly estimating the support of the affected coordinates.

While existing approaches address related objectives, the validity of \emph{post hoc} attribution procedures remains largely unexplored in broader contexts, especially when multiple change-points are involved and the underlying distributional change is not restricted to a specific parametric form.

In this paper, we address these gaps by proposing an offline, nonparametric, and multivariate framework for \emph{post hoc} change-point attribution. Our contributions are explicitly tailored to overcome the limitations of existing approaches. First, we introduce a novel inference procedure, which we call GTST, specifically designed to identify predefined blocks of variables affected by an estimated change-point. Second, alongside GTST, we formalize the mathematical framework and assumptions required to properly apply sample splitting techniques, namely the hold-out method \cite{Cox_1975}, in this \emph{post hoc} context. Crucially, we provide theoretical guarantees for the valid control of the Type I error rate for both the GTST procedure and the hold-out method, effectively overcoming the selection bias induced when samples are defined from a data-driven change-point estimate \cite{Garreau2018}. Formally, the coordinates of the multivariate time series are partitioned into two predefined blocks, and our procedures determine whether the detected general distributional change occurs in the first block, in the second block, or in both. Finally, we support our theoretical results with extensive empirical evaluations---on both synthetic and real-world data---demonstrating that our proposed procedures not only control the Type I error but also achieve high statistical power.

This block-wise approach allows for a precise localization of the change. For instance, in clinical monitoring, variables can be partitioned into Left and Right Hemisphere blocks: a change detected exclusively in one suggests a localized neurological event, whereas a simultaneous change in both indicates a systemic physiological shift. Similarly, in infrastructure monitoring—such as a water network divided into North and South sectors—a change detected in only one sector pinpoints a localized leak, while changes in both suggest a main supply line rupture. Beyond these examples, this framework extends to physical sensor networks (e.g., seismic stations or railway surveillance \cite{delaconcha2023}), human activity recognition using sensors distributed across different body parts \cite{sugiyama2021valid}, and biomedical studies targeting anomalies localized within specific genomic regions \cite{umezu2017selective}.

While our analysis focuses on two blocks of coordinates, the proposed framework extends to multiple blocks through suitable multiple testing corrections. More importantly, the framework is modular: it does not rely on a specific change-point detection algorithm or a particular two-sample testing procedure. The theoretical guarantees only require that the change-point estimator satisfies appropriate localization properties and that the correct number of change-points is either known or consistently estimated. Overall, the proposed approach provides a general and theoretically grounded framework for \emph{post hoc} attribution of distributional changes in multivariate time series.

\paragraph{Outline.} The paper is organized as follows: Section~\ref{sec:background} introduces the background and assumptions needed in the rest of the paper. In Section~\ref{sec:methods}, we present new methods to test whether there is a change in a specific block of coordinates around the estimated change-point, and provide a theoretical analysis for Type I error control. Section~\ref{sec:experiments} reports the results of empirical evaluations on various data sets, including synthetic and real-world data. Section~\ref{sec:conclusion} concludes the paper and discusses future research directions.

\section{Background}
\label{sec:background}

\subsection{Definition of parameters}

Let $\mathcal{X}$ be a measurable space. Consider a multivariate time series $\mathbf{X} := (X_t)_{t \in \llbracket 1, n \rrbracket}$ of length $n \geq 2$, where each $X_t \in \mathcal{X}$. We assume that there exist true change-point instants 
\[
  1 \le \tau_1^\star < \tau_2^\star < \cdots < \tau_{\kappa^\star-1}^\star \le n-1,
\]
where $\kappa^\star \in \mathbb{N}^\star$ denotes the number of segments (i.e., the number of true change-points plus one). We extend this notation by setting $\tau_0^\star := 0$ and $\tau_{\kappa^\star}^\star := n$.

The intervals $\llbracket \tau_{i-1}^\star + 1, \tau_i^\star \rrbracket$, for $i \in \llbracket 1, \kappa^\star \rrbracket$, form a partition of $\llbracket 1, n \rrbracket$ into segments. We assume that all observations $(X_t)_{t \in \llbracket 1, n \rrbracket}$ are mutually independent. Specifically, for each $i \in \llbracket 1, \kappa^\star \rrbracket$, the observations within the segment $\llbracket \tau_{i-1}^\star + 1, \tau_{i}^\star \rrbracket$ are independent and identically distributed (i.i.d.) according to a distribution denoted by $P_{(i)}$. By definition of a true change-point, the distribution changes from one segment to the next, meaning:
\[
  P_{(i)} \neq P_{(i+1)} \qquad \text{for all } i \in \llbracket 1, \kappa^\star-1 \rrbracket.
\]
We define the parameter $\underline{\Lambda}_{\tau^\star}$ as the relative length of the smallest segment between two change-points:
\[
\underline{\Lambda}_{\tau^\star}:= \frac{1}{n}\min_{1 \leq i \leq \kappa^\star} \left(\tau^\star_{i} - \tau^\star_{i-1}\right).
\]
This parameter quantifies the minimal temporal separation between two consecutive change-points as a proportion of the total length of the time series. A small value for $\underline{\Lambda}_{\tau^\star}$ indicates that some change-points are close to each other, making them difficult to identify accurately, whereas a large value ensures they are well-separated and easier to estimate reliably.

\par The Hausdorff distance \cite{Garreau2018}, denoted by $d_H$, can be used to measure the discrepancy between two segmentations. For two segmentations $\tau^1$ and $\tau^2$ of $\llbracket 1, n\rrbracket $, we have
\[
d_H(\tau^1, \tau^2) := \max\left\{d_{\infty}^{(1)}(\tau^1, \tau^2), d_{\infty}^{(1)}(\tau^2, \tau^1)\right\},
\]
where
\[
   d_{\infty}^{(1)}(\tau^1, \tau^2):= \max_{1 \leq i \leq |\tau^1|-2} \left\{ \min_{1 \leq j \leq |\tau^2|-2} \left| \tau_i^1 - \tau_j^2 \right| \right\}.
\]

\subsection{Assumptions and properties related to change-point detection}

Let $\alpha_0 \in (0, 1)$. We assume the existence of a high-probability event related to the initial change-point detection algorithm, which we formalize as follows:

\begin{assumption}[Change-point detection and localization event]
\label{assum:detection_event}
There exists an event $\Omega_{\alpha_0}$ with $\mathbb{P}(\Omega_{\alpha_0}) \geq 1 - \alpha_0$ upon which the change-point detection algorithm provides an estimated segmentation $\hat{\tau}$ (with $|\hat{\tau}| = \widehat{\kappa} + 1$) satisfying the following two properties:
\begin{enumerate}
    \item \textbf{Correct number of change-points:} 
    \begin{equation}\label{eq:kappa_star}
         \widehat{\kappa} = \kappa^\star.
    \end{equation}
    \item \textbf{Localization accuracy:} There exists $\delta_n = \delta_n(\alpha_0) \in \mathbb{N}$ such that:
    \begin{equation}\label{eq:delta_n}
        d_{\infty}^{(1)}(\tau^\star, \hat{\tau}) \leq \delta_n.
    \end{equation}
\end{enumerate}
\end{assumption}

We then introduce an additional structural assumption regarding the minimal spacing between true change-points:

\begin{assumption}[Minimal spacing]
\label{assum:minimal_spacing}
We assume that the parameter $\underline{\Lambda}_{\tau^\star}$ satisfies:
\begin{equation}\label{eq:delta_n'}
    n \underline{\Lambda}_{\tau^\star} >
    \begin{cases}
    2\delta_n & \text{if } \kappa^\star > 2, \\
    \delta_n & \text{if } \kappa^\star = 2.
    \end{cases}
\end{equation}
\end{assumption}
\noindent We refer to inequality \eqref{eq:delta_n'} as the separability condition. To simplify notations, we define
\[
\delta_n' := \delta_n \left( 1 + \mathbf{1}_{\kappa^\star > 2} \right).
\]
Under this notation, the inequality \eqref{eq:delta_n'} is equivalent to $n \underline{\Lambda}_{\tau^\star} > \delta_n'$. 
\begin{remark}[On the detection event and assumptions]
Property \eqref{eq:delta_n} allows us to construct valid confidence intervals for the temporal positions of each true change-point. Furthermore, Assumption~\ref{assum:minimal_spacing} ensures that these confidence regions remain well separated, providing sufficient uncontaminated data to perform \emph{post hoc} inference. Specifically, when there are at least two true change-points ($\kappa^\star > 2$), each must be separated from its neighbors by a margin greater than $2\delta_n$ to guarantee that confidence intervals do not overlap. In the simpler case of a single true change-point ($\kappa^\star = 2$), a margin of $\delta_n$ is sufficient to keep the confidence interval within the observation boundaries. 

Finally, we frequently use $\kappa^\star$ in the remainder of this paper to construct temporal grids and conduct inference. Although $\kappa^\star$ is typically an unknown parameter in practice, conditionally on the high-probability event $\Omega_{\alpha_0}$, we have the exact equality $\widehat{\kappa} = \kappa^\star$ given by \eqref{eq:kappa_star}. Consequently, within this event, the true number of change-points behaves as a known quantity that coincides with the estimated one.
\end{remark}

With these structural conditions in place, the following lemma formalizes the localization properties of the estimated change-points, which are central to our \emph{post hoc} analysis.

\medskip

\begin{lemma}\label{lem:IC}
Under Assumptions~\ref{assum:detection_event} and~\ref{assum:minimal_spacing}, the following properties hold on the event $\Omega_{\alpha_0}$:
\begin{enumerate}
    \item Uniqueness: For all $i \in \llbracket 1,\kappa^\star-1 \rrbracket$, there exists a unique $j \in \llbracket 1,\kappa^\star-1 \rrbracket$ such that $|\tau^\star_i - \hat{\tau}_j| \leq \delta_n$.
    
    \item Confidence intervals: For all $i \in \llbracket 1,\kappa^\star-1 \rrbracket$, we have $|\tau^\star_i - \hat{\tau}_i| \leq \delta_n$, which allows us to define the confidence interval $\text{CI}_i$ of the change-point $\tau^\star_i$ as:
    \[
    \text{CI}_i:= \llbracket \hat{\tau}_i - \delta_n, \hat{\tau}_i + \delta_n \rrbracket.
    \]
    
    \item Hausdorff distance equality: 
    \[
    d_H(\tau^\star, \hat{\tau}) = d_{\infty}^{(1)}(\tau^\star, \hat{\tau}) = d_{\infty}^{(1)}(\hat{\tau}, \tau^\star) = \max_{1 \leq i \leq \kappa^\star-1} \left| \tau_i^\star - \hat{\tau}_i \right|.
    \]
    
    \item Separation of estimated points: For all $i \in \llbracket 0,\kappa^\star-1 \rrbracket$, we have
    \[
    |\hat{\tau}_{i+1} - \hat{\tau}_i| \geq n \underline{\Lambda}_{\tau^\star}  - \delta_n'.
    \]
\end{enumerate}
\end{lemma}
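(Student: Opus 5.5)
The whole argument is a pigeonhole on the event $\Omega_{\alpha_0}$. I will work on this event throughout, so that $\widehat{\kappa}=\kappa^\star$ and $d_\infty^{(1)}(\tau^\star,\hat\tau)\le\delta_n$. This says that every true change-point $\tau^\star_i$ has some estimated change-point within distance $\delta_n$. The plan is to prove the items in order 1, 2, 3, 4, each one feeding the next. The case $\kappa^\star=1$ is vacuous, so I assume $\kappa^\star\ge 2$. If $\kappa^\star=2$ there is a single change-point on each side, so uniqueness is trivial and items 2 and 3 follow at once.

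\textbf{Item 1 (uniqueness).} Existence of a suitable $j$ is exactly \eqref{eq:delta_n}.

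For uniqueness, note that two distinct true change-points cannot share an estimated point within $\delta_n$. If $|\tau^\star_i-\hat\tau_j|\le\delta_n$ and $|\tau^\star_{i'}-\hat\tau_j|\le\delta_n$ with $i\ne i'$, then
\[
|\tau^\star_i-\tau^\star_{i'}|\le 2\delta_n<n\underline{\Lambda}_{\tau^\star},
\]
by Assumption~\ref{assum:minimal_spacing} (here $\kappa^\star>2$). This contradicts the fact that true change-points are at least $n\underline{\Lambda}_{\tau^\star}$ apart. Hence the map $\phi:i\mapsto j$ choosing some admissible $j$ is injective from $\llbracket 1,\kappa^\star-1\rrbracket$ into itself, and therefore bijective because $\widehat\kappa=\kappa^\star$.

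This shows that each $\hat\tau_j$ is close to exactly one $\tau^\star_i$. To get the converse uniqueness, suppose $\tau^\star_i$ had two admissible estimates $\hat\tau_j$ and $\hat\tau_{j'}$. Bijectivity means $\hat\tau_{j'}$ is the image $\phi(i')$ of some $i'$. If $i'\ne i$, then $\hat\tau_{j'}$ would be close to two distinct true points, which is the contradiction just derived. If $i'=i$, we are choosing between $j$ and $j'$ for the same $i$; but every index is the image of exactly one $i$, so counting shows there is no room for a second admissible $j$ per $i$. I expect this counting argument to be the fiddliest step, and I will phrase it carefully as a bipartite matching with all degrees at least one and right degrees at most one.

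\textbf{Item 2 (matching indices).} Both sequences are strictly increasing, so I want $\phi$ to be monotone, which forces $\phi=\mathrm{id}$. Suppose $i<i'$ but $\phi(i)>\phi(i')$. Then
\[
\hat\tau_{\phi(i')}<\hat\tau_{\phi(i)}\le\tau^\star_i+\delta_n
\quad\text{and}\quad
\hat\tau_{\phi(i')}\ge\tau^\star_{i'}-\delta_n\ge\tau^\star_i+n\underline{\Lambda}_{\tau^\star}-\delta_n,
\]
which gives $n\underline{\Lambda}_{\tau^\star}<2\delta_n$, a contradiction. The definition of $\text{CI}_i$ is then immediate.

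\textbf{Item 3 (Hausdorff distance).} By item 2,
\[
d_\infty^{(1)}(\tau^\star,\hat\tau)\le\max_i|\tau^\star_i-\hat\tau_i|.
\]
For the reverse inequality, the minimum over $j$ for $\tau^\star_i$ is attained at $j=i$, since any other $j$ is farther than $\delta_n\ge|\tau^\star_i-\hat\tau_i|$ by uniqueness; this gives equality. Symmetrically, for $\hat\tau_j$ the closest true point is $\tau^\star_j$. Indeed, any other $\tau^\star_{i}$ satisfies
\[
|\tau^\star_i-\hat\tau_j|\ge|\tau^\star_i-\tau^\star_j|-\delta_n\ge n\underline{\Lambda}_{\tau^\star}-\delta_n>\delta_n
\]
when $\kappa^\star>2$. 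So both directed distances equal $\max_i|\tau^\star_i-\hat\tau_i|$.

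\textbf{Item 4 (separation).} For interior indices, the triangle inequality with item 2 gives
\[
\hat\tau_{i+1}-\hat\tau_i\ge(\tau^\star_{i+1}-\tau^\star_i)-2\delta_n\ge n\underline{\Lambda}_{\tau^\star}-\delta_n'.
\]
For the boundary indices $i=0$ or $i+1=\kappa^\star$, the endpoints $\hat\tau_0=\tau^\star_0=0$ and $\hat\tau_{\kappa^\star}=n$ are exact, so only one $\delta_n$ is lost. In the case $\kappa^\star=2$, every gap is a boundary gap, so the loss is $\delta_n=\delta_n'$.
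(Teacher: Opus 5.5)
Your proof is correct. For item 4 it is the paper's argument: the triangle inequality through $\hat\tau_i$ and $\hat\tau_{i+1}$, losing only one $\delta_n$ at the exact endpoints $\hat\tau_0=0$ and $\hat\tau_{\kappa^\star}=n$.

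For items 1--3 the routes differ. The paper gives no argument of its own and simply invokes Lemma~3.1 of \cite{Garreau2018}. You prove them from scratch with a matching argument:
\begin{enumerate}
\item When $n\underline{\Lambda}_{\tau^\star}>2\delta_n$, no $\hat\tau_j$ can be within $\delta_n$ of two distinct true change-points.
\item Every $\tau^\star_i$ has at least one such $\hat\tau_j$.
\item Since $\widehat\kappa=\kappa^\star$, counting edges forces a perfect matching.
\item Your ordering argument identifies this matching with $i\mapsto i$.
\item The Hausdorff identities follow by checking that the minimum in each directed distance is attained at the matched index.
\end{enumerate}

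The citation buys brevity. It leaves the reader to verify that the external lemma's hypotheses are met under Assumptions~\ref{assum:detection_event} and~\ref{assum:minimal_spacing}: normalisation by $n$, equal numbers of segments, and strict spacing.

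Your route is self-contained and shows exactly where $n\underline{\Lambda}_{\tau^\star}>2\delta_n$ enters: injectivity, monotonicity of the matching, and the bound on $d_\infty^{(1)}(\hat\tau,\tau^\star)$. It also makes explicit that for $\kappa^\star=2$ items 1--3 need no spacing at all. The weaker condition $n\underline{\Lambda}_{\tau^\star}>\delta_n$ imposed in that case is therefore harmless for this lemma, a point the bare citation leaves implicit.

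One small clean-up: in your uniqueness step the case $i'=i$ cannot arise. Your matching map sends $i$ to $j\neq j'$, so the preimage of $j'$ automatically differs from $i$. The degree-counting formulation you mention settles it in one line: left degrees are at least one, right degrees at most one, and both sides have size $\kappa^\star-1$.
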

The proof is provided in Appendix~\ref{app:proof_IC}.

\subsection{Post hoc attribution for multivariate change-point detection}

In multivariate time series settings, change-point detection procedures typically provide an estimate of the number of change-points $\widehat{\kappa}$ together with their locations $\hat{\tau} =(\hat{\tau}_1,\dots,\hat{\tau}_{\widehat{\kappa}})$, but do not indicate which coordinates of the multivariate time series are responsible for the detected changes.

This lack of interpretability motivates the need for \emph{post hoc} attribution (often referred to as explainability in multivariate settings), especially in applications where understanding the origin of a structural change is crucial. Often, a detected ``global'' change-point may be driven by modifications occurring only in a subset of coordinates, such as specific sensors, economic indicators, or clinical measurements. Moreover, a detected change may involve complex distributional features, such as changes in the covariance structure, rather than marginal effects alone. To the best of our knowledge, existing literature on \emph{post hoc} attribution for change-point detection with error control is limited to mean shifts \cite{Chen2024}. Here, we focus on the more general setting of distributional changes.

In principle, a complete explanation of a detected change would require testing all possible subsets of coordinates in order to identify those responsible for the change. However, such an exhaustive approach leads to combinatorial explosion, since the number of subsets grows exponentially with the dimension. To make the problem tractable, we restrict our analysis to a structured setting where the coordinates of the multivariate time series are partitioned into two predefined blocks. The goal is then to determine, for a given detected change-point, whether the distributional change occurs in the first block, in the second block, or in both.

A first step in addressing this question is to assess the effect of predefined blocks and conduct inference on these blocks. To this end, let $\phi: \mathcal{X} \to \mathcal{Z}$ be a measurable projection onto one of the blocks, where $\mathcal{Z}$ is a measurable space, and define the projected time series $\mathbf{Z}:=(Z_t)_{t \in \llbracket 1,n \rrbracket}$ by:
\[
Z_t := \phi(X_t).
\]
For example, if $\mathcal{X} \subseteq \mathbb{R}^d$ with $d \geq 2$, the mapping $\phi$ may extract a single coordinate or a specific subset of coordinates of the multivariate time series. The attribution problem for the original process $\mathbf{X}$ can then be reformulated as testing for a statistically significant distributional change in $\mathbf{Z}$ at the estimated change-point. This can be assessed by comparing the distribution of $\mathbf{Z}$ before and after the candidate change-point using local two-sample tests. For each $i \in \llbracket 1, \kappa^\star \rrbracket$, we denote the distribution of $Z_t$ for $t \in \llbracket \tau_{i-1}^\star+1, \tau_{i}^\star \rrbracket$ by $P_{(i)}^\phi$. 

We focus on a single true change-point $\tau^\star_{i_0}$ with $i_0 \in \llbracket 1, \kappa^\star - 1 \rrbracket$. A \emph{post hoc} attribution step for this change-point then amounts to determining whether $P_{(i_0)}^\phi \neq P_{(i_0+1)}^\phi$.

A naive approach to test $P_{(i_0)}^\phi \neq P_{(i_0+1)}^\phi$ would be to perform a classical two-sample test directly around the estimated change-point $\hat{\tau}_{i_0}$. However, such a procedure faces two major theoretical challenges: segment contamination and selection bias.

First, to avoid contamination from neighboring true change-points, we must exploit the localization guarantees of the event $\Omega_{\alpha_0}$ and Assumption~\ref{assum:minimal_spacing}. Thanks to Lemma~\ref{lem:IC}, we can isolate two safe, data-driven segments that are guaranteed to lie within the correct true structural segments:
\[
S_{i_0}^-:= \llbracket \hat{\tau}_{i_0-1} + 1 + \delta_n, \hat{\tau}_{i_0} - \delta_n \rrbracket \subseteq \llbracket \tau^\star_{i_0-1}+1, \tau^\star_{i_0} \rrbracket,
\]
where $Z_t \sim P_{(i_0)}^{\phi}$ for all $t \in S_{i_0}^-$, and:
\[
S_{i_0}^+:= \llbracket \hat{\tau}_{i_0} + 1 + \delta_n, \hat{\tau}_{i_0+1} - \delta_n \rrbracket \subseteq \llbracket \tau^\star_{i_0}+1, \tau^\star_{i_0+1} \rrbracket,
\]
where $Z_t \sim P_{(i_0+1)}^{\phi}$ for all $t \in S_{i_0}^+$.

Second, although constructing these segments successfully resolves the contamination issue, the selection bias remains. Because the boundaries of $S_{i_0}^-$ and $S_{i_0}^+$ depend directly on the estimated change-points $\hat{\tau}$, these segments are non-deterministic. This data-dependent selection step induces a critical dependency between the sampling windows and the observed data, which completely invalidates the Type I error control of standard classical two-sample tests. 

This fundamental limitation motivates the development of specialized \emph{post hoc} inference procedures capable of conditioning on this selection event, which we present in the following section.

\begin{remark}
\leavevmode
If $\delta_n = 0$, the change-point estimation is exact, meaning the segments $S_{i_0}^-$ and $S_{i_0}^+$ become deterministic and the selection bias vanishes. In all that follows, we assume the realistic case where $\delta_n > 0$.
\end{remark}

\section{Methods (post hoc procedures)}
\label{sec:methods}
\subsection{Grid-based Two-Sample Test (GTST)}
\label{subsec:GTST}
\subsubsection{Method description}

To statistically evaluate the change occurring at a true change-point $\tau_{i_0}^\star$, one would ideally perform a two-sample test on the exact preceding and succeeding segments. In practice, however, these true boundaries are unknown. We only have access to their estimates $\widehat\tau_{i_0-1}$, $\widehat\tau_{i_0}$, and $\widehat\tau_{i_0+1}$, each subject to a localization error bounded by $\delta_n$. Intuitively, one could handle this uncertainty by exhaustively exploring all possible segment boundaries within the $\delta_n$-neighborhoods of these estimates. However, running a two-sample test for every single valid configuration requires up to $(2\delta_n+1)^3$ tests per change-point. As $\delta_n$ grows, this exhaustive search quickly becomes computationally prohibitive. To reduce this computational burden while maintaining theoretical validity, we introduce a discretized approach. By restricting the candidate boundaries to a predefined grid, we can capture the necessary local behavior with a fraction of the computational cost.

To this end, we introduce a discretization parameter $\eta \in 
\llbracket 1,\, 2\delta_n \rrbracket$, referred to as the \emph{grid step}, and restrict all candidate segment boundaries to integer multiples of~$\eta$.
This grid-based construction drastically reduces the number of admissible segments, and therefore the total number of two-sample tests to be performed,
while still exploring all configurations that are relevant at the scale~$\delta_n$.
Let
\[
m:= \left\lceil \frac{2\delta_n}{\eta} \right\rceil ,
\]
which determines the number of grid points used to parametrize the segments.
The formal definition of this discretized family is as follows.
For $\beta,\gamma,\zeta\in\{0,\dots,m\}$, we define:
\begin{align*}
a_{\beta}(\eta)
&:= \Biggl(
      \Biggl\lfloor
        \frac{\widehat\tau_{i_0-1}}{\eta}
        - \frac{\delta_n}{\eta}\,\mathbf 1_{\{i_0-1\neq 0\}}
      \Biggr\rfloor
      + (1+\beta)\,\mathbf 1_{\{i_0-1\neq 0\}}
    \Biggr)\eta
    + \mathbf 1_{\{i_0-1=0\}}, \\[0.5ex]
b_{\gamma}(\eta)
&:= \Biggl(
      \Biggl\lfloor
        \frac{\widehat\tau_{i_0}}{\eta}
        - \frac{\delta_n}{\eta}
      \Biggr\rfloor
      + \gamma
    \Biggr)\eta, \\[0.5ex]
c_{\zeta}(\eta)
&:= \max\Biggl(
      \min\Biggl(
        \Biggl(
          \Biggl\lfloor
            \frac{\widehat\tau_{i_0+1}}{\eta}
            - \frac{\delta_n}{\eta}\,\mathbf 1_{\{i_0+1\neq \kappa^\star\}}
          \Biggr\rfloor
          + \zeta\,\mathbf 1_{\{i_0+1\neq \kappa^\star\}}
        \Biggr)\eta,
        \, n
      \Biggr),
      \, n\,\mathbf 1_{\{i_0+1=\kappa^\star\}}
    \Biggr).
\end{align*}
Then, we define:
\[
\widetilde{\mathcal{E}}_{i_0,\eta} := \Bigl\{ \bigl(\widetilde{S}^{-}_{i_0,\eta}, \widetilde{S}^{+}_{i_0,\eta}\bigr) = \bigl( \llbracket a_{\beta}(\eta), b_{\gamma}(\eta) \rrbracket, \llbracket b_{\gamma}(\eta)+\eta, c_{\zeta}(\eta) \rrbracket \bigr), \text{ with } (\beta, \gamma, \zeta) \in \llbracket 0, m \rrbracket^3 \Bigr\}.
\]
We also define the projected oracle segments $S^{\star-}_{i_0}$ and $S^{\star+}_{i_0}$ on the grid as: 
\[
S^{\star-}_{i_0,\eta}:= \left[\!\!\left[\left(\left\lfloor \frac{\tau^\star_{i_0-1}}{\eta} \right\rfloor + \mathbf{1}_{\{i_0 - 1 \neq 0\}} \right) \eta + \mathbf{1}_{\{i_0 - 1 = 0\}}, \left\lfloor \frac{\tau^\star_{i_0}}{\eta} \right\rfloor  \eta\right]\!\!\right]
\]
and
\[
S^{\star+ }_{i_0, \eta}:= \left[\!\!\left[\left(\left\lfloor \frac{\tau^\star_{i_0}}{\eta} \right\rfloor + 1 \right) \eta, \max\left(\left\lfloor \frac{\tau^\star_{i_0 + 1}}{\eta} \right\rfloor  \eta \, , \, n \cdot \mathbf{1}_{\{i_0 + 1 = \kappa^\star\}}\right)\right]\!\!\right].
\]
To provide a comprehensive overview of our approach, Figure~\ref{fig:gtst_full_workflow} illustrates the complete step-by-step workflow of the GTST procedure.
\begin{figure}[tbp]
\centering
\resizebox{\textwidth}{!}{
\begin{tikzpicture}[x=0.85cm,y=1cm]
  \definecolor{dgreen}{RGB}{37,194,105}
  \definecolor{pblue}{RGB}{50,100,200}
  \definecolor{gtstpurple}{RGB}{142,68,173}

  \def\etav{2.0}

  \def\tmstar{2}   
  \def\tstar{9}    
  \def\tpstar{16}  

  \def\thm{3}      
  \def\th{9}       
  \def\thp{15}     

  \def\deltan{2}   

  \draw[draw=gray!50, thick, rounded corners, fill=gray!5] (-2.5, 5.1) rectangle (19.5, 7.3); 
  \draw[draw=gray!50, thick, rounded corners, fill=gray!5] (-2.5, 3.1) rectangle (19.5, 4.9); 
  \draw[draw=gray!50, thick, rounded corners, fill=gray!5] (-2.5, 1.1) rectangle (19.5, 2.9); 
  \draw[draw=gray!50, thick, rounded corners, fill=gray!5] (-2.5, -1.6) rectangle (19.5, 0.9);  

  \node[left, font=\bfseries] at (-0.2,6) {Oracle};
  \draw[->] (0,6) -- (19,6);

  \foreach \x in {0,1,...,19} {
    \draw[gray!60, thin] (\x, 5.85) -- (\x, 6.15);
  }

  \draw[dgreen, line width=4pt] (\tmstar+1,6) -- (\tstar,6) node[midway,above=4pt, text=black] {$S^{\star-}_{i_0}$};
  \draw[dgreen, line width=4pt] (\tstar+1,6) -- (\tpstar,6) node[midway,above=4pt, text=black] {$S^{\star+}_{i_0}$};

  \draw[dashed, thick, dgreen] (\tmstar,6.6) -- (\tmstar,5.4); 
  \node[above left, inner sep=2pt] at (\tmstar,6.6) {$\tau^{\star}_{i_0-1}$};
  
  \draw[dashed, thick, dgreen] (\tmstar+1,6.6) -- (\tmstar+1,5.4); 
  \node[above right, inner sep=2pt] at (\tmstar+1,6.6) {$\tau^{\star}_{i_0-1}+1$};

  \draw[dashed, thick, dgreen] (\tstar,6.6) -- (\tstar,5.4);   
  \node[above left, inner sep=2pt] at (\tstar,6.6) {$\tau^{\star}_{i_0}$};
  
  \draw[dashed, thick, dgreen] (\tstar+1,6.6) -- (\tstar+1,5.4); 
  \node[above right, inner sep=2pt] at (\tstar+1,6.6) {$\tau^{\star}_{i_0}+1$};
  
  \draw[dashed, thick, dgreen] (\tpstar,6.6) -- (\tpstar,5.4); 
  \node[above=2pt] at (\tpstar,6.6) {$\tau^{\star}_{i_0+1}$};

  \node[left, font=\bfseries] at (-0.2,4) {Proj. oracle};
  \draw[->] (0,4) -- (19,4);

  \foreach \x in {0,2,...,18} {
    \draw[gray!90, thick] (\x, 3.8) -- (\x, 4.2);
  }
  \node[right, text=gray] at (18.5, 4.3) {Grid ($\eta=2$)};

  \def\prm{4}    
  \def\prto{8}   
  \def\prtpo{10} 
  \def\prp{16}   

  \draw[pblue, line width=4pt] (\prm,4) -- (\prto,4) node[midway,above=4pt, text=black] {$S^{\star-}_{i_0, \eta}$};
  \draw[pblue, line width=4pt] (\prtpo,4) -- (\prp,4) node[midway,above=4pt, text=black] {$S^{\star+}_{i_0, \eta}$};

  \draw[->, pblue, thin, dashed] (\tmstar+1, 5.8) -- (\prm, 4.2);
  \draw[->, pblue, thin, dashed] (\tstar, 5.8) -- (\prto, 4.2);
  \draw[->, pblue, thin, dashed] (\tstar+1, 5.8) -- (\prtpo, 4.2);
  \draw[->, pblue, thin, dashed] (\tpstar, 5.8) -- (\prp, 4.2);

  \node[left, font=\bfseries] at (-0.2,2) {Data};
  \draw[->] (0,2) -- (19,2);

  \foreach \x in {0,1,...,19} {
    \draw[gray!60, thin] (\x, 1.85) -- (\x, 2.15);
  }

  \draw[red, very thick, |-|] (\thm-\deltan,2.3) -- (\thm+\deltan,2.3) node[midway,above=2pt, scale=0.85, text=black] {$[\widehat\tau_{i_0-1}\pm\delta_n]$};
  \draw[red, very thick, |-|] (\th-\deltan,2.3) -- (\th+\deltan,2.3) node[midway,above=2pt, scale=0.85, text=black] {$[\widehat\tau_{i_0}\pm\delta_n]$};
  \draw[red, very thick, |-|] (\thp-\deltan,2.3) -- (\thp+\deltan,2.3) node[midway,above=2pt, scale=0.85, text=black] {$[\widehat\tau_{i_0+1}\pm\delta_n]$};

  \draw[thick] (\thm,2.5) -- (\thm,1.5) node[below] {$\widehat\tau_{i_0-1}$};
  \draw[thick] (\th,2.5) -- (\th,1.5) node[below] {$\widehat\tau_{i_0}$};
  \draw[thick] (\thp,2.5) -- (\thp,1.5) node[below] {$\widehat\tau_{i_0+1}$};

  \draw[dotted, thick, gray] (\tmstar,5.5) -- (\tmstar,2.5);
  \draw[dotted, thick, gray] (\tstar,5.5) -- (\tstar,2.5);
  \draw[dotted, thick, gray] (\tpstar,5.5) -- (\tpstar,2.5);

  \node[left, font=\bfseries] at (-0.2,0) {GTST};
  \node[left, font=\bfseries] at (-0.2,-0.4) {candidates};
  \draw[->] (0,0) -- (19,0);

  \foreach \x in {0,2,...,18} {
    \draw[gray!90, thick] (\x, -0.2) -- (\x, 0.2);
  }
  \node[right, text=gray] at (18.5, 0.3) {Grid ($\eta=2$)};

  \def\a{2}            
  \def\endZoneOne{6}   

  \def\b{8}            
  \def\bp{10}          
  \def\endZoneTwo{12}  
  
  \def\startZoneThree{14} 
  \def\c{18}           

  \fill[red!10, opacity=0.5] (\a, -0.8) rectangle (\endZoneOne, 2.2);
  \fill[red!10, opacity=0.5] (\b, -0.8) rectangle (\endZoneTwo, 2.2);
  \fill[red!10, opacity=0.5] (\startZoneThree, -0.8) rectangle (\c, 2.2); 

  \draw[gtstpurple, line width=4pt] (\a,0) -- (\b,0) node[midway,below=4pt, text=black] {$\widetilde S^-_{i_0, \eta}$};
  \draw[gtstpurple, line width=4pt] (\bp,0) -- (\c,0) node[midway,below=4pt, text=black] {$\widetilde S^+_{i_0, \eta}$};

  \draw[gtstpurple, line width=2pt] (\a,0.25)  -- (\a,-0.25);
  \draw[gtstpurple, line width=2pt] (\b,0.25)  -- (\b,-0.25);
  \draw[gtstpurple, line width=2pt] (\bp,0.25) -- (\bp,-0.25);
  \draw[gtstpurple, line width=2pt] (\c,0.25)  -- (\c,-0.25);

  \node[below, text=gtstpurple, yshift=-12pt] at (\a,-0.25) {$a_\beta(\eta)$};
  \node[below, text=gtstpurple, yshift=-12pt] at (\c,-0.25) {$c_\zeta(\eta)$};
  \node[below, xshift=-16pt, text=gtstpurple, yshift=-12pt] at (\b,-0.25) {$b_\gamma(\eta)$};
  \node[below, xshift=+22pt, text=gtstpurple, yshift=-12pt] at (\bp,-0.25) {$b_\gamma(\eta)+\eta$};

  \draw[->, thick, dashed, red, shorten >=2pt, shorten <=2pt] (\thm-\deltan, 2.2) to[out=-60, in=120] (\a, 0.25);
  \draw[->, thick, dashed, red, shorten >=2pt, shorten <=2pt] (\thm+\deltan, 2.2) to[out=-60, in=120] (\endZoneOne, 0.25);

  \draw[->, thick, dashed, red, shorten >=2pt, shorten <=2pt] (\th-\deltan, 2.2) to[out=-60, in=120] (\b, 0.25);
  \draw[->, thick, dashed, red, shorten >=2pt, shorten <=2pt] (\th+\deltan, 2.2) to[out=-60, in=120] (\endZoneTwo, 0.25);

  \draw[->, thick, dashed, red, shorten >=2pt, shorten <=2pt] (\thp-\deltan, 2.2) to[out=-60, in=120] (\startZoneThree, 0.25);
  \draw[->, thick, dashed, red, shorten >=2pt, shorten <=2pt] (\thp+\deltan, 2.2) to[out=-60, in=120] (\c, 0.25);

\end{tikzpicture}
}
\caption{Complete workflow of the GTST procedure. \textbf{Top (Oracle):} The true segments bounded by the unknown change-points. \textbf{Second (Proj. Oracle):} The theoretical ideal segments are projected onto the grid $\eta$. \textbf{Third (Data):} The change-points are estimated, defining uncertainty neighborhoods of radius $\delta_n$ (red intervals). \textbf{Bottom (GTST Candidates):} The practical test segments are formed by evaluating discrete boundaries ($a, b, b+\eta, c$) constrained to the grid $\eta$, which are derived from the red uncertainty neighborhoods.}
\label{fig:gtst_full_workflow}
\end{figure}

\subsubsection{Theoretical analysis}
\begin{lemma}\label{lem:theorM1_bis}
We assume that Assumptions~\ref{assum:detection_event} and~\ref{assum:minimal_spacing} hold. 
Define
\[
\ell_\eta:= \left(\left\lfloor\frac{n\underline{\Lambda}_{\tau^\star}}{\eta}\right\rfloor - 1\right)\eta +1,
\qquad
\ell_\eta^-:= \left\lfloor\frac{n\underline{\Lambda}_{\tau^\star}}{\eta}\right\rfloor \eta.
\]
Then, on $\Omega_{\alpha_0}$, the following hold:
\begin{enumerate}
\item The projected segments remain fully contained within the true oracle segments:
\[
S^{\star-}_{i_0,\eta} \subseteq S^{\star-}_{i_0},
\qquad 
S^{\star+}_{i_0,\eta} \subseteq S^{\star+}_{i_0}.
\]

\item The projected segments have a guaranteed minimal length: \\
if $i_0>1$, then 
\[
|S^{\star-}_{i_0,\eta}|,\ |S^{\star+}_{i_0,\eta}| \ \ge \ \ell_\eta,
\]
whereas for $i_0=1$,
\[
|S^{\star-}_{i_0,\eta}| \ge \ell_\eta^-,
\qquad
|S^{\star+}_{i_0,\eta}| \ge \ell_\eta.
\]

\item The pair $(S^{\star-}_{i_0,\eta},S^{\star+}_{i_0,\eta})$ belongs to the
restricted grid-set
\[
\begin{aligned}
\widetilde{\underline{\mathcal{E}}}_{i_0,\eta}
:=\Bigl\{
  & (\widetilde{S}^-_{i_0,\eta},\widetilde{S}^+_{i_0,\eta})\in \widetilde{\mathcal{E}}_{i_0,\eta}: \\
  & |\widetilde{S}^-_{i_0,\eta}|\ge \max\left(\ell_{\eta}^-\cdot\mathbf{1}_{\{i_0 - 1 = 0\}} \, , \, \ell_{\eta} \cdot \mathbf{1}_{\{i_0 - 1 \neq 0\}} \right) \ \mathrm{ and } \  |\widetilde{S}^+_{i_0,\eta}|\ge \ell_\eta
\Bigr\}.
\end{aligned}
\]
\end{enumerate}
\end{lemma}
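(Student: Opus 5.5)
The plan is to prove the three items in order. Items~1 and~2 are deterministic statements about floors of the true change-points. Only item~3 uses the event $\Omega_{\alpha_0}$, through Lemma~\ref{lem:IC} (part~2: $|\tau^\star_i-\hat\tau_i|\le\delta_n$). Throughout I will use the elementary facts
\[
x-\eta<\lfloor x/\eta\rfloor\eta\le x,\qquad \lfloor x\rfloor-\lfloor y\rfloor\ge\lfloor x-y\rfloor,\qquad \lfloor x+y\rfloor\le\lfloor x\rfloor+\lceil y\rceil .
\]

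\emph{Item 1 (containment).} For $i_0>1$, the left endpoint $(\lfloor\tau^\star_{i_0-1}/\eta\rfloor+1)\eta$ is strictly larger than $\tau^\star_{i_0-1}$, hence at least $\tau^\star_{i_0-1}+1$. For $i_0=1$ the left endpoint is $1=\tau^\star_0+1$. The right endpoint satisfies $\lfloor\tau^\star_{i_0}/\eta\rfloor\eta\le\tau^\star_{i_0}$, so $S^{\star-}_{i_0,\eta}\subseteq\llbracket\tau^\star_{i_0-1}+1,\tau^\star_{i_0}\rrbracket$. The same two facts give $(\lfloor\tau^\star_{i_0}/\eta\rfloor+1)\eta\ge\tau^\star_{i_0}+1$ on the left of $S^{\star+}_{i_0,\eta}$. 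On the right, the endpoint is either $\lfloor\tau^\star_{i_0+1}/\eta\rfloor\eta\le\tau^\star_{i_0+1}$, or $n=\tau^\star_{\kappa^\star}$ when $i_0+1=\kappa^\star$; in the latter case the $\max$ equals $n$ since $\lfloor n/\eta\rfloor\eta\le n$.

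\emph{Item 2 (lengths).} For $i_0>1$, I compute
\[
|S^{\star-}_{i_0,\eta}|=\bigl(\lfloor\tau^\star_{i_0}/\eta\rfloor-\lfloor\tau^\star_{i_0-1}/\eta\rfloor-1\bigr)\eta+1 .
\]
Then $\lfloor x\rfloor-\lfloor y\rfloor\ge\lfloor x-y\rfloor$ together with $\tau^\star_{i_0}-\tau^\star_{i_0-1}\ge n\underline{\Lambda}_{\tau^\star}$ gives the bound $\ell_\eta$. For $i_0=1$, $|S^{\star-}_{1,\eta}|=\lfloor\tau^\star_1/\eta\rfloor\eta\ge\ell_\eta^-$ by monotonicity of the floor. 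For $S^{\star+}_{i_0,\eta}$ with $i_0+1<\kappa^\star$ the computation is identical to the $i_0>1$ case. In the last-segment case the length is
\[
n-(\lfloor\tau^\star_{i_0}/\eta\rfloor+1)\eta+1\ \ge\ n-\tau^\star_{i_0}-\eta+1\ \ge\ n\underline{\Lambda}_{\tau^\star}-\eta+1,
\]
and this is at least $\ell_\eta$ because $\lfloor n\underline{\Lambda}_{\tau^\star}/\eta\rfloor\eta\le n\underline{\Lambda}_{\tau^\star}$.

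\emph{Item 3 (membership in the grid set).} Given item~2, it suffices to exhibit $(\beta,\gamma,\zeta)\in\llbracket0,m\rrbracket^3$ with
\[
a_\beta(\eta)=\min S^{\star-}_{i_0,\eta},\qquad b_\gamma(\eta)=\max S^{\star-}_{i_0,\eta},\qquad c_\zeta(\eta)=\max S^{\star+}_{i_0,\eta}.
\]
The left endpoint of $S^{\star+}_{i_0,\eta}$ is then automatically $b_\gamma(\eta)+\eta$. For $b$, I set $\gamma:=\lfloor\tau^\star_{i_0}/\eta\rfloor-\lfloor(\hat\tau_{i_0}-\delta_n)/\eta\rfloor$. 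On $\Omega_{\alpha_0}$, Lemma~\ref{lem:IC} gives $\hat\tau_{i_0}-\delta_n\le\tau^\star_{i_0}\le\hat\tau_{i_0}+\delta_n$. Monotonicity of the floor then yields $\gamma\ge0$. Using $\lfloor x+y\rfloor\le\lfloor x\rfloor+\lceil y\rceil$ with $x=(\hat\tau_{i_0}-\delta_n)/\eta$ and $y=2\delta_n/\eta$ yields $\gamma\le m$. The same argument applied to $\tau^\star_{i_0-1}$ (for $i_0>1$) produces $\beta$, where the extra ``$+1$'' in $a_\beta$ matches the ``$+1$'' in the oracle left endpoint. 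It also produces $\zeta$ for an interior $\tau^\star_{i_0+1}$; there the $\min(\cdot,n)$ is inactive since $\lfloor\tau^\star_{i_0+1}/\eta\rfloor\eta\le n$, and the $\max(\cdot,0)$ is trivial.

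\emph{Main obstacle.} The substantive difficulty is the bookkeeping of the boundary cases hidden in the indicators: $i_0=1$, where $a_\beta(\eta)=1$ for every $\beta$, and $i_0+1=\kappa^\star$. In the latter case $c_\zeta(\eta)=\max(\min(\lfloor\hat\tau_{\kappa^\star}/\eta\rfloor\eta,n),n)=n$ for every $\zeta$, matching the oracle endpoint $n$. I will treat these two cases separately before writing the generic argument.
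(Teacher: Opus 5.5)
Your proposal is correct and follows essentially the same route as the paper's proof: floor inequalities of the form $\lfloor x/\eta\rfloor\eta\le x$ and $\lfloor x\rfloor-\lfloor y\rfloor\ge\lfloor x-y\rfloor$ give containment and lengths, and item~3 uses explicit indices $\beta,\gamma,\zeta$ defined as differences of floors, shown to lie in $\llbracket 0,m\rrbracket$ via Lemma~\ref{lem:IC} and $\lfloor x+y\rfloor\le\lfloor x\rfloor+\lceil y\rceil$ (which is what the paper extracts from Lemma~\ref{lem:partentiere}). Your separate treatment of the last-segment case $i_0+1=\kappa^\star$ in item~2, where the right endpoint is $n$ rather than $\lfloor\tau^\star_{i_0+1}/\eta\rfloor\eta$, fills in a step that the paper's proof passes over silently.
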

The proof is provided in Appendix~\ref{app:proof_theorM1_bis}.

\begin{remark}[Lower bound for the parameter $\ell_\eta$]
The parameter $\ell_{\eta}$ may be replaced by any valid lower bound. For example, one could use:
\[
\ell_{\eta} \geq \widetilde{\ell}_{\eta} := \left(\left\lfloor\frac{\delta_n'}{\eta} \right\rfloor - 1\right)\eta + 1.
\]
\end{remark}

\vspace{0.3cm}
\noindent Since the projected segments $S^{\star-}_{i_0, \eta}$ and $S^{\star+}_{i_0, \eta}$ are deterministic and contained within the true oracle segments, the observations within each of these regions remain independent and identically distributed under the null hypothesis. This guarantees the validity of applying a local two-sample test. We define the global test statistic for GTST as follows:
\[
T_{\alpha_1}^{\mathcal{Z},\mathrm{GTST}}(\widetilde{\underline{\mathcal{E}}}_{i_0, \eta}):= \prod_{\substack{(\widetilde{S}_{i_0, \eta}^-, \widetilde{S}_{i_0, \eta}^+) \in \widetilde{\underline{\mathcal{E}}}_{i_0, \eta}}}^{} T_{\alpha_1}^{\mathcal{Z}} \left((Z_i)_{i \in \widetilde{S}^{-}_{i_0, \eta}}, (Z_i)_{i \in \widetilde{S}^{+}_{i_0, \eta}}\right),
\]
where $T_{\alpha_1}^{\mathcal{Z}}(\mathbf{Z}_A, \mathbf{Z}_B) \in \{0, 1\}$ denotes any generic level-$\alpha_1$ two-sample test taking two $\mathcal{Z}$-valued time series as input. Specifically, for any two mutually independent sequences $\mathbf{Z}_A$ and $\mathbf{Z}_B$ of i.i.d.\ random variables with marginal distributions $P_A$ and $P_B$ respectively, $T_{\alpha_1}^{\mathcal{Z}}$ tests the local null hypothesis
\[
H_0^{\mathrm{local}}: P_A = P_B \quad \text{against} \quad H_1^{\mathrm{local}}: P_A \neq P_B.
\]
The test returns $1$ to reject the null hypothesis, and the level $\alpha_1$ condition formally means that $\mathbb{P}_{H_0^{\mathrm{local}}}\left(T_{\alpha_1}^{\mathcal{Z}}(\mathbf{Z}_A, \mathbf{Z}_B) = 1\right) \leq \alpha_1.$ By convention, whenever one of the two input segments is empty, the test returns $0$, i.e.\ the null hypothesis is retained by default.

\begin{remark}[Influence of the grid step on the number of evaluations]
For any grid step $\eta \in [1, 2\delta_n]$, the GTST procedure requires performing at most $(m+1)^3$ local tests, where $m = \lceil 2\delta_n / \eta \rceil$. The choice of $\eta$ therefore controls a direct trade-off between computational cost and grid resolution. For instance, setting $\eta = 2\delta_n$ yields $m = 1$, requiring at most $2^3 = 8$ tests but relying on a very coarse discretization. Conversely, choosing the finest grid step $\eta = 1$ gives $m = 2\delta_n$, meaning no discretization is applied. In this case, the procedure exhaustively explores all valid configurations, resulting in up to $(2\delta_n + 1)^3$ tests to perform.
\end{remark}


To synthesize the preceding construction, Algorithm~\ref{algo:gtst} summarizes the complete practical workflow of the GTST procedure, from the initial data-driven change-point estimation to the final global decision. The theoretical validity of this methodology is then formally established in the following theorem.

\begin{algorithm}[tbp]
\caption{Grid-based two-sample test (GTST)}
\label{algo:gtst}
\textbf{Input:} Time series $(X_1,\ldots,X_n)$, number of segments $\kappa^\star$, index $i_0$, level $\alpha_1$, grid step $\eta$, generic two-sample test $T_{\alpha_1}^{\mathcal Z}$.\\
\textbf{Output:} Test decision.

\begin{enumerate}
\item Estimate the change-points in the time series $(X_1,\ldots,X_n)$ assuming $\kappa^\star$ segments, and denote the resulting segmentation by $\hat{\tau}$.

\item Compute the theoretical localization error bound $\delta_{n}$ and the estimated minimal spacing $\underline{\Lambda}_{\widehat{\tau}}$.

\item If the separability condition is not satisfied, i.e.\ $n\,\underline{\Lambda}_{\widehat{\tau}} \le \delta'_{n}$, return $T_{\alpha_1}^{\mathcal Z, \mathrm{GTST}} = 0$ (retain $H_0$) and stop.

\item Construct the collection $\widetilde{\underline{\mathcal{E}}}_{i_0,\eta}$ of candidate pairs of segments on the grid around $\hat{\tau}_{i_0}$.

\item For each $(S^-,S^+) \in \widetilde{\underline{\mathcal{E}}}_{i_0,\eta}$, compute the local two-sample test:
\[
T_{\alpha_1}^{\mathcal Z}\big((Z_i)_{i\in S^-},(Z_i)_{i\in S^+}\big).
\]

\item Reject the global null hypothesis if all local tests in the collection reject, that is, define:
\[
T_{\alpha_1}^{\mathcal{Z},\mathrm{GTST}}(\widetilde{\underline{\mathcal{E}}}_{i_0, \eta})
=
\prod_{(S^-,S^+) \in \widetilde{\underline{\mathcal{E}}}_{i_0,\eta}}
T_{\alpha_1}^{\mathcal Z}\big((Z_i)_{i\in S^-},(Z_i)_{i\in S^+}\big).
\]

\item Return $T_{\alpha_1}^{\mathcal Z, \mathrm{GTST}}$.
\end{enumerate}
\end{algorithm}

\begin{theorem}
\label{thm:GTST}
Suppose that Assumptions~\ref{assum:detection_event} and~\ref{assum:minimal_spacing} hold. Then $T_{\alpha_1}^{\mathcal{Z},\mathrm{GTST}}$ is a test of
\[
H_0^{\phi}: P_{(i_0)}^{\phi} = P_{(i_0+1)}^{\phi} \quad \text{against} \quad H_1^{\phi}: P_{(i_0)}^{\phi} \neq P_{(i_0+1)}^{\phi},
\]
with level $ \alpha_0 + \alpha_1 $.
\end{theorem}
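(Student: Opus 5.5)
Under $H_0^\phi$, the aim is to bound $\mathbb{P}(T_{\alpha_1}^{\mathcal{Z},\mathrm{GTST}}=1)$ by splitting on the detection event:
\[
\mathbb{P}\bigl(T^{\mathcal{Z},\mathrm{GTST}}_{\alpha_1}=1\bigr)\le \mathbb{P}(\Omega_{\alpha_0}^c)+\mathbb{P}\bigl(\{T^{\mathcal{Z},\mathrm{GTST}}_{\alpha_1}=1\}\cap\Omega_{\alpha_0}\bigr).
\]
Assumption~\ref{assum:detection_event} gives $\mathbb{P}(\Omega_{\alpha_0}^c)\le\alpha_0$, so everything reduces to showing that the second term is at most $\alpha_1$. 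Note that on $\Omega_{\alpha_0}$ we have $\widehat\kappa=\kappa^\star$, so the algorithm is run with the correct number of segments. The early-exit step 3 of Algorithm~\ref{algo:gtst} only produces a retention, which can only decrease the rejection probability, so it can be ignored for an upper bound.

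The key device is the deterministic oracle pair $(S^{\star-}_{i_0,\eta},S^{\star+}_{i_0,\eta})$. By item~3 of Lemma~\ref{lem:theorM1_bis}, on $\Omega_{\alpha_0}$ this pair belongs to the data-dependent collection $\widetilde{\underline{\mathcal{E}}}_{i_0,\eta}$. The global statistic is a product of $\{0,1\}$-valued local tests, so it equals $1$ only if every factor equals $1$, in particular the oracle factor. This yields the inclusion
\[
\{T^{\mathcal{Z},\mathrm{GTST}}_{\alpha_1}=1\}\cap\Omega_{\alpha_0}\subseteq\Bigl\{T^{\mathcal{Z}}_{\alpha_1}\bigl((Z_i)_{i\in S^{\star-}_{i_0,\eta}},(Z_i)_{i\in S^{\star+}_{i_0,\eta}}\bigr)=1\Bigr\}.
\]
This step removes the selection bias. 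The right-hand event no longer involves $\hat\tau$, since the index sets depend only on $\tau^\star$, $\eta$ and $n$.

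It remains to bound the probability of the right-hand event by $\alpha_1$. By item~1 of Lemma~\ref{lem:theorM1_bis}, $S^{\star-}_{i_0,\eta}\subseteq\llbracket\tau^\star_{i_0-1}+1,\tau^\star_{i_0}\rrbracket$ and $S^{\star+}_{i_0,\eta}\subseteq\llbracket\tau^\star_{i_0}+1,\tau^\star_{i_0+1}\rrbracket$. These sets are deterministic and disjoint. Mutual independence of the $X_t$, preserved by the measurable map $\phi$, then shows that $(Z_i)_{i\in S^{\star-}_{i_0,\eta}}$ and $(Z_i)_{i\in S^{\star+}_{i_0,\eta}}$ are independent i.i.d.\ samples with marginals $P^\phi_{(i_0)}$ and $P^\phi_{(i_0+1)}$. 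Under $H_0^\phi$ these marginals coincide, so $H_0^{\mathrm{local}}$ holds and the level-$\alpha_1$ property gives probability at most $\alpha_1$. If a segment is empty, the convention gives probability $0$. Summing the two bounds gives level $\alpha_0+\alpha_1$.

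The main obstacle is justifying that, at the moment the inclusion is used, the oracle test is evaluated on fixed, non-random index sets. The containment of the oracle pair in the collection must hold pointwise on $\Omega_{\alpha_0}$, so that intersecting with $\Omega_{\alpha_0}$ is legitimate without conditioning. This is where care is needed to avoid conditioning on $\Omega_{\alpha_0}$, which would destroy the i.i.d.\ structure. The membership itself is delegated to Lemma~\ref{lem:theorM1_bis}, which I will assume.
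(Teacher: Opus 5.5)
Your proposal is correct and follows essentially the same argument as the paper. You split on $\Omega_{\alpha_0}$, use Lemma~\ref{lem:theorM1_bis} to place the deterministic projected oracle pair $(S^{\star-}_{i_0,\eta},S^{\star+}_{i_0,\eta})$ in $\widetilde{\underline{\mathcal{E}}}_{i_0,\eta}$ on $\Omega_{\alpha_0}$ so that a GTST rejection forces the oracle local test to reject, and then apply the level-$\alpha_1$ property on these fixed segments, which carry i.i.d.\ data under $H_0^{\phi}$. Your extra remarks (the early-exit step only retains, the empty-segment convention, and intersecting with rather than conditioning on $\Omega_{\alpha_0}$) are sound and make explicit points the paper leaves implicit.
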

\begin{proof}[\textbf{Proof.}]
\begin{align*}
\mathbb{P}_{H_0^{\phi}}\left(T_{\alpha_1}^{\mathcal{Z},\mathrm{GTST}}\left(\widetilde{\underline{\mathcal{E}}}_{i_0, \eta}\right) = 1\right) &\leq \mathbb{P}\left(\Omega_{\alpha_0}^c\right) + \mathbb{P}_{H_0^{\phi}}\left(\Omega_{\alpha_0} \cap\left\{T_{\alpha_1}^{\mathcal{Z},\mathrm{GTST}}\left(\widetilde{\underline{\mathcal{E}}}_{i_0, \eta}\right) = 1\right\}\right)\\
&\leq \alpha_0 + \mathbb{P}_{H_0^{\phi}}\left(T_{\alpha_1}^{\mathcal{Z}}\left((Z_i)_{i \in S^{\star-}_{i_0, \eta}}, (Z_i)_{i \in S^{\star+}_{i_0, \eta}}\right) = 1\right) \quad  \text{(Lemma \ref{lem:theorM1_bis})}\\
&\leq \alpha_0 + \alpha_1.
\end{align*}
For the last inequality, we use the fact that under $H_0^{\phi}$, the $Z_i$ with $i \in S^{\star-}_{i_0, \eta} \cup S^{\star+ }_{i_0, \eta}$ are i.i.d.
\end{proof}

\begin{remark}[On the Type I error and minimal spacing]
\label{rem:minimal_spacing}
The Type~I error of the \emph{post hoc} GTST procedure decomposes into two components: $\alpha_0$ for the initial mislocalization risk, and $\alpha_1$ for the conditional two-sample false alarm rate, given a correct localization.

If Condition~\eqref{eq:delta_n'} is not satisfied, the GTST procedure conservatively retains the null hypothesis $H_0$ by default. This behavior highlights the critical importance of this assumption: without sufficient spacing, retaining $H_0$ might simply result from the violated condition rather than a true absence of distributional change, making empirical performance impossible to evaluate fairly.

In practice, the separability condition~\eqref{eq:delta_n'} cannot be 
checked directly, since $\underline{\Lambda}_{\tau^\star}$ is unknown; 
Step~3 of Algorithm~\ref{algo:gtst} therefore uses the plug-in estimate 
$\Lambda_{\hat{\tau}}$.
\end{remark}

Having demonstrated that valid inference can be achieved using the entire sequence of observations, we now present an alternative and equally valid approach based on data splitting. As we will show, this strategy provides analogous theoretical guarantees.

\subsection{Hold-out Method}
\label{subsec:Holdout}
\subsubsection{Method description}
While data splitting is a standard statistical tool \cite{Cox_1975}, its properties within the specific context of change-point attribution require careful analysis. The hold-out approach we consider consists of partitioning the indices of the time series $Z_1, \dots, Z_n$ into two disjoint subsets. The change-point localization phase is performed exclusively on the sub-series with indices in the set $I_{\mathrm{det}}$, which we refer to as the ``detection set'', with its elements called ``detection indices''. The complement, $I_{\mathrm{inf}} = \llbracket 1, n \rrbracket \setminus I_{\mathrm{det}}$, which we refer to as the ``inference set'', is reserved for the statistical testing of attribution. By decoupling the detection and testing phases, this strategy provides valid inference guarantees, which we establish in this section. By construction, $I_{\mathrm{det}} \cap I_{\mathrm{inf}} = \varnothing$ and $I_{\mathrm{det}} \cup I_{\mathrm{inf}} = \llbracket 1, n \rrbracket$. We denote their respective sizes by $n_{\mathrm{det}} = |I_{\mathrm{det}}|$ and $n_{\mathrm{inf}} = |I_{\mathrm{inf}}|$, such that $n_{\mathrm{det}} + n_{\mathrm{inf}} = n$.
For example, a simple even/odd partition corresponds to $n_{\mathrm{det}} \approx n/2$ and $n_{\mathrm{inf}} \approx n/2$, with $I_{\mathrm{det}} = \{1,3,5,\dots\} $ and $I_{\mathrm{inf}} =  \{2,4,6,\dots\} $. More generally, the procedure imposes no specific regularity assumption on $I_{\mathrm{det}}$ at this stage: the partition may be randomly generated or constructed according to a fixed scheme, provided it satisfies the structural conditions detailed below.

Let $j_1 < j_2 < \cdots < j_{n_{\mathrm{det}}}$ denote the elements of $I_{\mathrm{det}}$, sorted in increasing order. For the hold-out procedure to be valid, it is necessary that $I_{\mathrm{det}}$ contains at least one point in each segment $\llbracket \tau^\star_r + 1,\; \tau^\star_{r+1} \rrbracket$, for $0 \le r \le \kappa^\star - 1$. We denote
\[
I_{\mathrm{det}}^{(r)} \;:=\; I_{\mathrm{det}} \,\cap\, \llbracket \tau^\star_r + 1,\; \tau^\star_{r+1}\rrbracket
\]
the portion of segment $r$ included in the detection set.

\begin{assumption}[Minimal coverage condition.]
\label{assum:Holdoutseg}
The success of the hold-out approach relies heavily on how the observations are partitioned. Specifically, to prevent missing true change-points, every true segment must be represented in the data used for localization. The minimal requirement on the detection set $I_{\mathrm{det}}$ is thus expressed as:
\begin{equation}\label{eq:Holdoutseg}
\min_{0\leq r \leq \kappa^\star-1} \, \mathrm{Card}\left(I_{\mathrm{det}}^{(r)} \right) =: n_{\mathrm{det}} \, \underline{\Lambda}_{\tau^{\star}}^{\mathrm{det}} > 0.
\end{equation}
\end{assumption}
This condition ensures that there is at least one observation in the detection set between any two consecutive true change-points. As illustrated in Figure~\ref{fig:HoldOut}, a valid partition successfully meets this requirement, whereas a bad partition leaves an entire segment empty, making the detection of the surrounding change-points impossible.



\begin{figure}[tbp]
\centering
\begin{tikzpicture}[x=0.9cm,y=0.9cm,>=stealth]

\node at (4.5, 3.2) {\textcolor{blue}{$\bullet$} Detection set ($I_{\mathrm{det}}$)};
\node at (8.5, 3.2) {\textcolor{green!70!black}{$\times$} Inference set ($I_{\mathrm{inf}}$)};

\draw[draw=gray!50, thick, rounded corners, fill=gray!5] (-3, 0.4) rectangle (12.5, 2.5);

\draw[->] (0,1) -- (12,1) node[right] {$i$};

\node[above right, font=\bfseries] at (-2.8, 1.6) {Good partition $(I_{\mathrm{det}}, I_{\mathrm{inf}})$};

\def\tone{4}
\def\ttwo{8}
\draw[red,dashed, thick] (\tone,0.6) -- (\tone,2.3) node[above] {$\tau^\star_r$};
\draw[red,dashed, thick] (\ttwo,0.6) -- (\ttwo,2.3) node[above] {$\tau^\star_{r+1}$};

\foreach \x in {0.7,2.8,3.5,4.5,5.2,8.6,9.3,10.0} {
  \fill[blue] (\x,1) circle (2.5pt);
}
\foreach \x in {1.4,2.1,6.0,6.7,7.4,10.7,11.4} {
  \draw[green!70!black,thick] (\x,1) node {$\times$};
}

\draw[draw=gray!50, thick, rounded corners, fill=gray!5] (-3, -2.2) rectangle (12.5, -0.1);

\draw[->] (0,-1.6) -- (12,-1.6) node[right] {$i$};

\node[above right, font=\bfseries] at (-2.8, -1.0) {Bad partition $(I_{\mathrm{det}}, I_{\mathrm{inf}})$};

\draw[red,dashed, thick] (\tone,-2.0) -- (\tone,-0.3) node[above] {$\tau^\star_r$};
\draw[red,dashed, thick] (\ttwo,-2.0) -- (\ttwo,-0.3) node[above] {$\tau^\star_{r+1}$};

\foreach \x in {0.7,1.4,2.1,2.8,8.6,9.3,10.0} {
  \fill[blue] (\x,-1.6) circle (2.5pt);
}
\foreach \x in {3.5,4.5,5.2,6.0,6.7,7.4,10.7,11.4} {
  \draw[green!70!black,thick] (\x,-1.6) node {$\times$};
}

\end{tikzpicture}
\caption{Illustration of the minimal coverage condition: in a valid partition (top), the condition is met: there is at least one detection point (blue dots) between the two consecutive true change-points $\tau^\star_r$ and $\tau^\star_{r+1}$. In a bad partition (bottom), this condition is violated: the entire segment between $\tau^\star_r$ and $\tau^\star_{r+1}$ only contains inference points (green crosses). The absence of detection points in this region makes it impossible to localize the change-point $\tau^\star_r$.}
\label{fig:HoldOut}
\end{figure}


\noindent For theoretical results, we define $\pi^+(i)$ as the nearest detection index to the right of~$i$ used for localization, and
$\pi^-(i)$ as the nearest one to the left:  
\[
\pi^-(i):=\max\{j\in I_{\mathrm{det}}\cup \{0\}\,:\,j\le i\},
\qquad
\pi^+(i):=\min\{j\in I_{\mathrm{det}} \cup \{n\}\,:\,j\ge i\},
\]
We have 
\[
I_{\mathrm{det}}^{(r)} = I_{\mathrm{det}} \cap \llbracket \tau_r^{\star}+1, \tau_{r+1}^{\star} \rrbracket \neq \emptyset,
\]
by the minimal coverage condition \eqref{eq:Holdoutseg}. Its first element is $\pi^+(\tau_r^\star+1)$, that is,
$\pi^+(\tau_r^\star+1)=\min(I_{\mathrm{det}}^{(r)})$ and its last element is $\pi^-(\tau_{r+1}^\star)$, that is,
$\pi^-(\tau_{r+1}^\star)=\max(I_{\mathrm{det}}^{(r)})$. Moreover, since $I_{\mathrm{det}}^{(r)} \subseteq \llbracket \tau_r^{\star}+1, \tau_{r+1}^{\star} \rrbracket$ and the oracle segments $S^\star_r = \llbracket \tau_r^{\star}+1, \tau_{r+1}^{\star} \rrbracket$ are disjoint and ordered, it follows that the projections $\pi^+(\tau_r^\star+1)$ are monotone. This monotonicity is proved in Lemma~\ref{lem:pi+} below. As illustrated in Figure~\ref{fig:It}, these projections partition the detection set into disjoint subsets $I_{\mathrm{det}}^{(r)}$, contained within their corresponding true segments.

\begin{figure}[tbp]
  \centering
  \begin{tikzpicture}[x=0.9cm,y=0.9cm,>=stealth]

    \draw[draw=gray!50, thick, rounded corners, fill=gray!5] (-0.5, -2.5) rectangle (14.5, 4.0);

    \node[above right, font=\bfseries] at (0, 3.2) {\textcolor{blue}{$\bullet$} Detection indices ($I_{\mathrm{det}}$)};

    \draw[->] (0,0) -- (14,0) node[right] {$i$};

    \def\trP{1}      
    \def\trp{6}      
    
    \def\trpP{7}     
    \def\trpp{13}    

    \draw[thick] (\trP, 0) node[below=2pt] {$\tau_r^\star+1$} -- ++(0,0.2);
    \draw[thick] (\trp, 0) node[below=2pt] {$\tau_{r+1}^\star$} -- ++(0,0.2);
    \draw[thick] (\trpP,0) node[below=2pt] {$\tau_{r+1}^\star+1$} -- ++(0,0.2);
    \draw[thick] (\trpp,0) node[below=2pt] {$\tau_{r+2}^\star$} -- ++(0,0.2);

    \draw[decorate,decoration={brace,mirror,raise=0.7cm}, thick]
      (\trP,0) -- (\trp,0) node[midway,below=1.0cm] {True segment $S^\star_r$};
    \draw[decorate,decoration={brace,mirror,raise=0.7cm}, thick]
      (\trpP,0) -- (\trpp,0) node[midway,below=1.0cm] {True segment $S^\star_{r+1}$};

    \def\ar{2}
    \def\itrA{4}
    \def\itrB{5}
    
    \def\arp{8}
    \def\itrpA{10}
    \def\itrpB{12}

    \foreach \x in {\ar, \itrA, \itrB, \arp, \itrpA, \itrpB}{
      \fill[blue] (\x, 0) circle (2.5pt);
    }

    \draw[dashed, blue, thick] (\ar, 0) -- (\ar, 1.2);
    \draw[dashed, blue, thick] (\arp, 0) -- (\arp, 1.2);

    \node[above, text=blue] at (\ar, 1.2) {$\pi^+(\tau_r^\star+1)$};
    \node[above, text=blue] at (\arp, 1.2) {$\pi^+(\tau_{r+1}^\star+1)$};

    \draw[dashed, blue, thick] (\itrB, 0) -- (\itrB, 1.2);
    \draw[dashed, blue, thick] (\itrpB, 0) -- (\itrpB, 1.2);

    \node[above, text=blue] at (\itrB, 1.2) {$\pi^-(\tau_{r+1}^\star)$};
    \node[above, text=blue] at (\itrpB, 1.2) {$\pi^-(\tau_{r+2}^\star)$};

    \draw[decorate,decoration={brace,raise=0.4cm}, thick]
      (\ar, 1.6) -- (\itrB, 1.6) node[midway,above=0.6cm] {$I_{\mathrm{det}}^{(r)}$};

    \draw[decorate,decoration={brace,raise=0.4cm}, thick]
      (\arp, 1.6) -- (\itrpB, 1.6) node[midway,above=0.6cm] {$I_{\mathrm{det}}^{(r+1)}$};

  \end{tikzpicture}

  \caption{Partition of the detection set $I_{\mathrm{det}}$: due to the minimal coverage condition, each true segment (e.g., $S^\star_r$) contains at least one detection index (blue dots). The projections $\pi^+(\tau_r^\star+1)$ and $\pi^-(\tau_{r+1}^\star)$ respectively identify the first and last available detection indices inside the true segment $S^\star_r$. These projections effectively partition $I_{\mathrm{det}}$ into disjoint sub-collections $I_{\mathrm{det}}^{(r)}$ exactly aligned with the interior of the true underlying change-points.}
  \label{fig:It}
\end{figure}

\subsubsection{Theoretical analysis for localization phase}
Using the projections $\pi^+(\tau_r^\star + 1)$ (i.e., the closest point in $I_{\mathrm{det}}$ located to the right of the breakpoint $\tau_r^\star$), the detection set $I_{\mathrm{det}}$ can be partitioned into segments that are aligned with the true change-points:
\[
I_{\mathrm{det}} = I_{\mathrm{det}}^{(0)} \cup I_{\mathrm{det}}^{(1)} \cup \cdots \cup I_{\mathrm{det}}^{(\kappa^\star-1)}.
\]
Each segment is defined as
\[
I_{\mathrm{det}}^{(r)} = I_{\mathrm{det}} \cap \llbracket \tau_r^\star + 1,\; \tau_{r+1}^\star \rrbracket,
\]
and therefore contains only observations generated from the same distribution. This construction ensures that the detection subset $I_{\mathrm{det}}$ is coherently partitioned into homogeneous sub-collections, which is critical for accurate change-point localization.
\begin{lemma}\label{lem:pi+}
Under the minimal coverage condition (Assumption \ref{assum:Holdoutseg}), the projections $\pi^+(\tau_r^\star+1)$ are well-defined and satisfy
\[
\pi^+(\tau_0^\star+1) < \pi^+(\tau_1^\star+1) < \cdots < \pi^+(\tau_{\kappa^\star-1}^\star+1).
\]
Moreover, $\pi^-(n)$, defined as the last element of $I_{\mathrm{det}}$, is also well-defined.
\end{lemma}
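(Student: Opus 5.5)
The proof is a direct consequence of the minimal coverage condition \eqref{eq:Holdoutseg} and the ordering of the true segments. The plan has three steps: well-definedness, strict monotonicity, and the statement about $\pi^-(n)$.

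\emph{Well-definedness.} Fix $r\in\llbracket 0,\kappa^\star-1\rrbracket$. The set over which $\pi^+(\tau_r^\star+1)$ takes a minimum is $\{j\in I_{\mathrm{det}}\cup\{n\} : j\ge \tau_r^\star+1\}$. It is nonempty because it contains $n\ge \tau_r^\star+1$, and it is a finite subset of integers, so the minimum exists. The definition alone would allow the fallback value $n$ even when $n\notin I_{\mathrm{det}}$, so I will also show that the minimum is a genuine detection index lying in the true segment. By Assumption~\ref{assum:Holdoutseg}, $I_{\mathrm{det}}^{(r)}=I_{\mathrm{det}}\cap\llbracket\tau_r^\star+1,\tau_{r+1}^\star\rrbracket$ is nonempty. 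Every element of the candidate set is at least $\tau_r^\star+1$, and the elements of $I_{\mathrm{det}}^{(r)}$ are candidates. Hence the minimum is at most $\min I_{\mathrm{det}}^{(r)}\le\tau_{r+1}^\star$. Any candidate in $\llbracket \tau_r^\star+1,\tau_{r+1}^\star\rrbracket$ belongs to $I_{\mathrm{det}}^{(r)}$, or equals $n=\tau_{r+1}^\star$ when $r=\kappa^\star-1$. It follows that
\[
\pi^+(\tau_r^\star+1)=\min I_{\mathrm{det}}^{(r)}\in\llbracket \tau_r^\star+1,\tau_{r+1}^\star\rrbracket.
\]
The only delicate point is the fallback value $n$ in the last segment $r=\kappa^\star-1$. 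If $n$ is the minimum there, every earlier index of that segment is absent from $I_{\mathrm{det}}$, so nonemptiness of $I_{\mathrm{det}}^{(\kappa^\star-1)}$ forces $n\in I_{\mathrm{det}}$. Either way the identity holds.

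\emph{Monotonicity.} For $0\le r<\kappa^\star-1$, the localization from the first step gives
\[
\pi^+(\tau_r^\star+1)\le\tau_{r+1}^\star<\tau_{r+1}^\star+1\le\pi^+(\tau_{r+1}^\star+1).
\]
This yields strict increase, and chaining over $r$ gives the full chain of inequalities in the statement.

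\emph{The last element $\pi^-(n)$.} The candidate set $\{j\in I_{\mathrm{det}}\cup\{0\}: j\le n\}$ is finite and contains $0$, so the maximum exists. Because $I_{\mathrm{det}}\ne\varnothing$ (it contains $I_{\mathrm{det}}^{(0)}$), and all its elements are positive and at most $n$, the maximum equals $\max I_{\mathrm{det}}$. In particular $\pi^-(n)$ is the last element of $I_{\mathrm{det}}$ and is not the artificial value $0$.

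I expect no step to be a real obstacle. The only point that needs care is excluding the artificial boundary values $n$ and $0$, which the minimal coverage condition handles as described above.
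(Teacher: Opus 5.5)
Your proof is correct and follows the same route as the paper. Minimal coverage makes each $I_{\mathrm{det}}^{(r)}$ nonempty, so $\pi^+(\tau_r^\star+1)=\min I_{\mathrm{det}}^{(r)}$ lies in the $r$-th true segment, and strict monotonicity follows from the ordering of these disjoint segments; likewise $\pi^-(n)=\max I_{\mathrm{det}}$ because $I_{\mathrm{det}}\neq\varnothing$. Your explicit check that the fallback values $n$ and $0$ in the definitions of $\pi^{\pm}$ never interfere is a small gain in rigor over the paper, which asserts $\pi^+(\tau_r^\star+1)\in I_{\mathrm{det}}^{(r)}$ without comment.
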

The proof is provided in Appendix~\ref{app:proof_pi+}.

The true change-points $(\tau_r^\star)$ partition the full series $(Z_i)_{i\in \llbracket 1,n \rrbracket}$. However, since the detection set $I_{\mathrm{det}}=\{j_1<\dots<j_{n_{\mathrm{det}}}\}$ is only a subset of  $\llbracket 1,n \rrbracket$, the true boundaries do not directly induce a segmentation of
the reindexed detection series $(\widetilde Z_i)$, where
$\widetilde Z_i:= Z_{j_i}$. To recover the analogue of the true segmentation on the detection set, we introduce the projected boundaries
$\tau^{\star,\mathrm{det}}=[\tau^{\star,\mathrm{det}}_0,\dots,\tau^{\star,\mathrm{det}}_{\kappa^\star}]$ defined by
\[
[j_{\tau^{\star,\mathrm{det}}_1+1},\; j_{\tau^{\star,\mathrm{det}}_2+1},\;\dots,\;
 j_{\tau^{\star,\mathrm{det}}_{\kappa^\star-1}+1}]
=
[\pi^+(\tau_1^\star+1),\; \pi^+(\tau_2^\star+1),\;\dots,\;
 \pi^+(\tau_{\kappa^\star-1}^\star+1)].
\]
We set $\tau^{\star,\mathrm{det}}_0:= 0$, so that $j_{\tau^{\star,\mathrm{det}}_0+1}=\pi^+(1)$, and
$\tau^{\star,\mathrm{det}}_{\kappa^\star}:=n_{\mathrm{det}}$, so that $j_{\tau^{\star,\mathrm{det}}_{\kappa^\star}}=\pi^-(n)$.

\begin{lemma}\label{lem:pi-}
Under the minimal coverage condition (Assumption \ref{assum:Holdoutseg}), for every integer $r$ such that $0 \le r \le \kappa^\star-1$, we have:
\[
j_{\tau^{\star,\mathrm{det}}_{r}} = \pi^-(\tau^\star_r).
\]
\end{lemma}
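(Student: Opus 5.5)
We need to show $j_{\tau^{\star,\mathrm{det}}_r} = \pi^-(\tau^\star_r)$ for $0\le r\le\kappa^\star-1$. The plan is to handle the two boundary conventions separately and treat the generic index by a "predecessor" argument on the sorted detection set.

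\textbf{Case $r=0$.} Here $\tau^{\star,\mathrm{det}}_0 = 0$ and $\tau^\star_0=0$. The statement should be read with the conventions $j_0 := 0$ and $\pi^-(0) = \max\{j\in I_{\mathrm{det}}\cup\{0\}: j\le 0\} = 0$. Since $I_{\mathrm{det}}\subseteq\llbracket 1,n\rrbracket$, both sides equal $0$. I will make this convention explicit, since $j_0$ is otherwise undefined.

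\textbf{Case $1\le r\le \kappa^\star-1$.} By definition, $j_{\tau^{\star,\mathrm{det}}_r+1} = \pi^+(\tau^\star_r+1)$. By the minimal coverage condition (Assumption~\ref{assum:Holdoutseg}), this equals $\min I_{\mathrm{det}}^{(r)}$, which is a genuine element of $I_{\mathrm{det}}$. Lemma~\ref{lem:pi+} gives $\pi^+(\tau^\star_r+1) > \pi^+(\tau^\star_0+1) = j_1$, so $\tau^{\star,\mathrm{det}}_r+1\ge 2$. Hence the index $k:=\tau^{\star,\mathrm{det}}_r\ge 1$ is valid, and $j_k$ is the immediate predecessor of $j_{k+1}$ in the sorted set $I_{\mathrm{det}}$. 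It then suffices to show $j_k = \max\{j\in I_{\mathrm{det}}: j\le\tau^\star_r\}$, and this has two parts.

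First, $j_k \le \tau^\star_r$. Since $j_k<j_{k+1}=\min\{j\in I_{\mathrm{det}}: j\ge \tau^\star_r+1\}$ and $j_k\in I_{\mathrm{det}}$, we cannot have $j_k\ge\tau^\star_r+1$.

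Second, maximality. Any $j\in I_{\mathrm{det}}$ with $j\le \tau^\star_r$ satisfies $j<j_{k+1}$. Because the $j_i$ are strictly increasing and consecutive in $I_{\mathrm{det}}$, this forces $j\le j_k$.

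Thus $j_k=\pi^-(\tau^\star_r)$, and the maximum is over a nonempty set by coverage of segment $r-1$. As a consistency check, this also identifies $\pi^-(\tau^\star_r)=\max I_{\mathrm{det}}^{(r-1)}$, matching the earlier discussion.

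\textbf{Main obstacle.} The only delicate point is index bookkeeping. One must ensure that $\tau^{\star,\mathrm{det}}_r\ge 1$, which uses the strict monotonicity of Lemma~\ref{lem:pi+}, and that $\tau^{\star,\mathrm{det}}_r$ is well-defined, i.e.\ that $\pi^+(\tau^\star_r+1)$ is an actual element of $I_{\mathrm{det}}$ rather than the fallback value $n$. Coverage guarantees the latter. The $r=0$ convention also needs to be stated cleanly. Beyond that, the argument is the elementary fact that in a sorted finite set, the predecessor of the first element exceeding a threshold is the last element not exceeding it.
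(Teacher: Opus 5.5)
Your proof is correct and follows essentially the same route as the paper. The paper argues by contradiction that no element of $I_{\mathrm{det}}$ lies strictly between $\pi^-(\tau^\star_r)$ and $\pi^+(\tau^\star_r+1)$, so the two are consecutive and the predecessor of $j_{\tau^{\star,\mathrm{det}}_r+1}$ must be $\pi^-(\tau^\star_r)$; this is your predecessor argument phrased indirectly, and your explicit handling of $r=0$ (via $j_0:=0$) and of $\tau^{\star,\mathrm{det}}_r\ge 1$ usefully tightens a point the paper's proof skips, since it tacitly treats $\pi^-(\tau^\star_r)$ as an element of $I_{\mathrm{det}}$, which fails for $r=0$.
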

The proof is provided in Appendix~\ref{app:proof_pi-}.

\begin{lemma}\label{lem:pi_det}
Under the minimal coverage condition (Assumption \ref{assum:Holdoutseg}), for every $0 \le r \le \kappa^\star-1$, the interval
\[
\llbracket \tau^{\star,\mathrm{det}}_r + 1,\; \tau^{\star,\mathrm{det}}_{r+1} \rrbracket
\]
is nonempty, and the observations
$(\widetilde{Z}_i)_{i \in \llbracket \tau^{\star,\mathrm{det}}_r + 1,\, \tau^{\star,\mathrm{det}}_{r+1} \rrbracket}$
are i.i.d.\ with distribution $P_{(r+1)}^\phi$.
\end{lemma}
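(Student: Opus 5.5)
The plan is to show that the reindexing map $i \mapsto j_i$ sends the index block $\llbracket \tau^{\star,\mathrm{det}}_r+1, \tau^{\star,\mathrm{det}}_{r+1}\rrbracket$ bijectively onto $I_{\mathrm{det}}^{(r)}$. Once this is established, both claims follow at once.

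The first step is to identify the two endpoints. By the definition of the projected boundaries, for $1\le r\le \kappa^\star-1$ we have $j_{\tau^{\star,\mathrm{det}}_r+1}=\pi^+(\tau^\star_r+1)=\min I_{\mathrm{det}}^{(r)}$. For $r=0$ we have $j_1=\pi^+(1)=\min I_{\mathrm{det}}^{(0)}$, since $I_{\mathrm{det}}^{(0)}\neq\varnothing$ by Assumption~\ref{assum:Holdoutseg}. For the right endpoint, Lemma~\ref{lem:pi-} applied at index $r+1$ gives $j_{\tau^{\star,\mathrm{det}}_{r+1}}=\pi^-(\tau^\star_{r+1})=\max I_{\mathrm{det}}^{(r)}$ when $r+1\le \kappa^\star-1$. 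For $r=\kappa^\star-1$, the convention $\tau^{\star,\mathrm{det}}_{\kappa^\star}=n_{\mathrm{det}}$ gives $j_{n_{\mathrm{det}}}=\pi^-(n)=\max I_{\mathrm{det}}^{(\kappa^\star-1)}$, which is well defined by Lemma~\ref{lem:pi+}. The identification $\min I_{\mathrm{det}}^{(r)}=\pi^+(\tau_r^\star+1)$ and $\max I_{\mathrm{det}}^{(r)}=\pi^-(\tau^\star_{r+1})$ was already noted in the text, and it uses the nonemptiness guaranteed by Assumption~\ref{assum:Holdoutseg}.

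The second step uses that $i\mapsto j_i$ is strictly increasing. The indices $i$ with $j_i\in\llbracket \min I_{\mathrm{det}}^{(r)},\max I_{\mathrm{det}}^{(r)}\rrbracket$ therefore form exactly the integer interval between the preimages of the two endpoints, namely $\llbracket \tau^{\star,\mathrm{det}}_r+1,\tau^{\star,\mathrm{det}}_{r+1}\rrbracket$. Moreover, every element of $I_{\mathrm{det}}$ lying between $\min I_{\mathrm{det}}^{(r)}$ and $\max I_{\mathrm{det}}^{(r)}$ lies in $\llbracket\tau^\star_r+1,\tau^\star_{r+1}\rrbracket$, and hence in $I_{\mathrm{det}}^{(r)}$. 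So the image of the index interval is exactly $I_{\mathrm{det}}^{(r)}$.

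Nonemptiness then follows because $\min I_{\mathrm{det}}^{(r)}\le \max I_{\mathrm{det}}^{(r)}$ forces $\tau^{\star,\mathrm{det}}_r+1\le\tau^{\star,\mathrm{det}}_{r+1}$. Equivalently, the interval has cardinality $|I_{\mathrm{det}}^{(r)}|\ge n_{\mathrm{det}}\underline{\Lambda}^{\mathrm{det}}_{\tau^\star}>0$.

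For the distributional claim, the variables $\widetilde Z_i=\phi(X_{j_i})$ with $i$ in this interval are $\phi$-images of $X_t$ for distinct $t\in\llbracket\tau^\star_r+1,\tau^\star_{r+1}\rrbracket$. These $X_t$ are mutually independent with common law $P_{(r+1)}$. Since $\phi$ is measurable, the $\widetilde Z_i$ are i.i.d.\ with law $P^\phi_{(r+1)}$.

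The main obstacle is careful bookkeeping at the boundary cases $r=0$ and $r=\kappa^\star-1$, where the conventions $\tau^{\star,\mathrm{det}}_0=0$ and $\tau^{\star,\mathrm{det}}_{\kappa^\star}=n_{\mathrm{det}}$ must be reconciled with $\pi^\pm$, together with correctly invoking Lemma~\ref{lem:pi-} at index $r+1$. A related subtlety is that $I_{\mathrm{det}}$ must be deterministic, or independent of the data, so that the reindexed variables inherit independence.
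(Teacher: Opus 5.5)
Your proof is correct and follows essentially the paper's argument: ordered projected boundaries give a nonempty index block whose $j$-images lie in $I_{\mathrm{det}}^{(r)}$, so the corresponding observations are i.i.d.\ with law $P_{(r+1)}^\phi$. The difference is that you get nonemptiness from $\min I_{\mathrm{det}}^{(r)} \le \max I_{\mathrm{det}}^{(r)}$ via Lemma~\ref{lem:pi-}, not from the strict monotonicity of Lemma~\ref{lem:pi+}; this also justifies, and sharpens to an equality of sets, the membership step the paper only asserts ``by construction''.
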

The proof is provided in Appendix~\ref{app:proof_pi_det}.

We therefore conclude that $\tau^{\star,\mathrm{det}}$ is the segmentation of the series $(\widetilde{Z}_i)_{i\in \llbracket1, n_{\mathrm{det}}\rrbracket}$.

We seek to estimate this segmentation. To do so, we assume the existence of a high-probability event related to the change-point detection algorithm on the hold-out sample, mirroring our initial global assumptions:

\begin{assumption}[Change-point detection and localization on the detection set]
\label{assum:detection_event_holdout}
There exists an event $\widetilde{\Omega}_{\alpha_0}$ with $\mathbb{P}(\widetilde{\Omega}_{\alpha_0}) \geq 1 - \alpha_0$ upon which the detection procedure provides an estimated segmentation $\hat{\tau}^{\mathrm{det}}$ (with $|\hat{\tau}^{\mathrm{det}}| = \widehat{\kappa}^{\mathrm{det}} + 1$) satisfying the following two properties:
\begin{enumerate}
    \item \textbf{Correct number of change-points:} 
    \begin{equation}\label{eq:kappa_starHoldout}
         \widehat{\kappa}^{\mathrm{det}} = \kappa^{\star}.
    \end{equation}
    \item \textbf{Localization accuracy:} There exists $\delta_{n_{\mathrm{det}}} = \delta_{n_{\mathrm{det}}}(\alpha_0) \in \mathbb{N}$ such that:
    \begin{equation}\label{eq:delta_nHoldout}
        d_{\infty}^{(1)}(\tau^{\star,\mathrm{det}}, \hat{\tau}^{\mathrm{det}}) \leq \delta_{n_{\mathrm{det}}}.
    \end{equation}
\end{enumerate}
\end{assumption}

\begin{assumption}[Minimal spacing on the detection set]
\label{assum:minimal_spacing_holdout}
We assume that the minimal segment length in the detection set is bounded below, satisfying:
\begin{equation}\label{eq:delta_n'Holdout}
    n_{\mathrm{det}} \underline{\Lambda}^{\mathrm{det}}_{\tau^{\star}} > \delta'_{n_{\mathrm{det}}},
\end{equation}
where $\delta'_{n_{\mathrm{det}}} := \delta_{n_{\mathrm{det}}}\left( 1 + \mathbf{1}_{\kappa^\star > 2} \right)$.
\end{assumption}

\subsubsection{Theoretical analysis for attribution}
Let 
\[
q_{\mathrm{det}}:= \max_{0 \le i \le n_{\mathrm{det}}-1}\left(j_{i+1}-j_{i} -1\right),
\]
with $j_0 := 0$. The parameter $q_{\mathrm{det}}$ represents the maximum number of elements lying between $j_i$ and $j_{i+1}$ that do not belong to $I_{\mathrm{det}}$ but instead belong to $I_{\mathrm{inf}}$. According to the theoretical results presented below, the larger this parameter is, the fewer observations remain available for the attribution step, making the inference problem more difficult.
Under Assumptions~\ref{assum:detection_event_holdout} and~\ref{assum:minimal_spacing_holdout}, the localization properties established in Lemma~\ref{lem:IC} extend to the detection subseries $(\widetilde{Z}_{i})_{1 \le i \le n_{\mathrm{det}}}$. We can thus define the corresponding localization error bound in the total series scale as $\delta_{n}^{\mathrm{tot}} := (q_{\mathrm{det}}+1)\delta_{n_{\mathrm{det}}}$. We have the following lemma.
\begin{lemma}\label{lem:tau_det}
Suppose that Assumptions~\ref{assum:Holdoutseg}, \ref{assum:detection_event_holdout} and \ref{assum:minimal_spacing_holdout} hold. Then, for every $1 \le r \le \kappa^\star-1$, 
\[ 
-\delta_{n}^{\text{tot}} \leq \tau^{\star}_{r} - j_{\hat{\tau}^{\mathrm{det}}_{r}} \leq \delta_{n}^{\text{tot}} + q_{\mathrm{det}}.
\]
\end{lemma}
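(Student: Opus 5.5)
The plan is to work on the event $\widetilde{\Omega}_{\alpha_0}$ and pass from the index scale of the detection subseries $(\widetilde Z_i)$ back to the original time scale. The key quantity is $j_{\tau^{\star,\mathrm{det}}_r}$. I will split the difference $\tau^\star_r - j_{\hat\tau^{\mathrm{det}}_r}$ as
\[
\tau^\star_r - j_{\hat\tau^{\mathrm{det}}_r} = \bigl(\tau^\star_r - j_{\tau^{\star,\mathrm{det}}_r}\bigr) + \bigl(j_{\tau^{\star,\mathrm{det}}_r} - j_{\hat\tau^{\mathrm{det}}_r}\bigr).
\]
The first term is a deterministic projection error and should lie in $[0, q_{\mathrm{det}}]$. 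The second term is an estimation error and should have absolute value at most $\delta_n^{\mathrm{tot}}$. Adding the two ranges gives exactly the claimed interval $[-\delta_n^{\mathrm{tot}}, \delta_n^{\mathrm{tot}} + q_{\mathrm{det}}]$.

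\textbf{First term.} By Lemma~\ref{lem:pi-}, $j_{\tau^{\star,\mathrm{det}}_r} = \pi^-(\tau^\star_r)$, which is at most $\tau^\star_r$ by the definition of $\pi^-$. This gives the lower bound $0$. By the definition of the projected boundaries, $j_{\tau^{\star,\mathrm{det}}_r+1} = \pi^+(\tau^\star_r+1) \geq \tau^\star_r + 1$ for $1 \le r \le \kappa^\star-1$; this index exists by Lemma~\ref{lem:pi+} and the minimal coverage condition. Hence
\[
\tau^\star_r - j_{\tau^{\star,\mathrm{det}}_r} \le j_{\tau^{\star,\mathrm{det}}_r+1} - 1 - j_{\tau^{\star,\mathrm{det}}_r} \le q_{\mathrm{det}},
\]
where the last inequality is the definition of $q_{\mathrm{det}}$ as the maximal gap minus one.

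\textbf{Second term.} I will invoke the analogue of Lemma~\ref{lem:IC}(2) for the detection subseries, which the text asserts holds under Assumptions~\ref{assum:detection_event_holdout} and~\ref{assum:minimal_spacing_holdout}. Its proof is the same as before, with $n$, $\delta_n$, $\tau^\star$ replaced by $n_{\mathrm{det}}$, $\delta_{n_{\mathrm{det}}}$, $\tau^{\star,\mathrm{det}}$, and it relies on Lemma~\ref{lem:pi_det} to guarantee that $\tau^{\star,\mathrm{det}}$ is a genuine segmentation. It yields $|\tau^{\star,\mathrm{det}}_r - \hat\tau^{\mathrm{det}}_r| \le \delta_{n_{\mathrm{det}}}$. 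Since $j$ is increasing with consecutive gaps $j_{i+1} - j_i \le q_{\mathrm{det}} + 1$, a telescoping sum over the at most $\delta_{n_{\mathrm{det}}}$ steps between the two indices gives
\[
|j_{\tau^{\star,\mathrm{det}}_r} - j_{\hat\tau^{\mathrm{det}}_r}| \le (q_{\mathrm{det}}+1)\,\delta_{n_{\mathrm{det}}} = \delta_n^{\mathrm{tot}}.
\]

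\textbf{Main obstacle.} The algebra above is routine; the delicate part is bookkeeping. I must check that $\hat\tau^{\mathrm{det}}_r \in \llbracket 1, n_{\mathrm{det}} \rrbracket$, so that $j_{\hat\tau^{\mathrm{det}}_r}$ is defined rather than falling on the convention $j_0 = 0$. I must also make sure the transfer of Lemma~\ref{lem:IC} to the reindexed series is legitimate, in particular the index matching $\hat\tau^{\mathrm{det}}_r \leftrightarrow \tau^{\star,\mathrm{det}}_r$ and the separation bounds. For the range of $\hat\tau^{\mathrm{det}}_r$, I would use Assumption~\ref{assum:minimal_spacing_holdout}: it forces $\tau^{\star,\mathrm{det}}_r \ge n_{\mathrm{det}}\underline\Lambda^{\mathrm{det}}_{\tau^\star} > \delta_{n_{\mathrm{det}}}$, so $\hat\tau^{\mathrm{det}}_r \ge 1$. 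Symmetrically, $\tau^{\star,\mathrm{det}}_r \le n_{\mathrm{det}} - n_{\mathrm{det}}\underline\Lambda^{\mathrm{det}}_{\tau^\star} < n_{\mathrm{det}} - \delta_{n_{\mathrm{det}}}$, so $\hat\tau^{\mathrm{det}}_r \le n_{\mathrm{det}}$.
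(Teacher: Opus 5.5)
Your proposal is correct and follows the paper's proof essentially verbatim: the same split through $j_{\tau^{\star,\mathrm{det}}_r}$, the framing $0 \le \tau^\star_r - j_{\tau^{\star,\mathrm{det}}_r} \le q_{\mathrm{det}}$ via $\pi^-$ and $\pi^+$, and the telescoping gap bound $|j_b - j_a| \le (q_{\mathrm{det}}+1)|b-a|$ combined with $|\tau^{\star,\mathrm{det}}_r - \hat\tau^{\mathrm{det}}_r| \le \delta_{n_{\mathrm{det}}}$. The paper cites the localization assumption directly for that last bound, while you correctly point out that the index matching needs the transferred version of Lemma~\ref{lem:IC}; your range bookkeeping is harmless but unnecessary, since estimated change-points lie in $\llbracket 1, n_{\mathrm{det}}-1\rrbracket$ by definition and the convention $j_0=0$ is already built into $q_{\mathrm{det}}$.
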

The proof is provided in Appendix~\ref{app:proof_tau_det}.

Using this result, the inference segments are precisely defined to avoid data contamination from adjacent true change-points:
\[
S_{i_0}^{-,\mathrm{inf}}:= \llbracket j_{\hat{\tau}^{\mathrm{det}}_{i_0-1}} + 1 + (\delta_{n}^{\text{tot}}+q_{\mathrm{det}})\cdot\mathbf{1}_{\{i_0 - 1 \neq 0\}}\, , \, j_{\hat{\tau}^{\mathrm{det}}_{i_0}+1}-1  \rrbracket \cap I_{\mathrm{inf}},
\]
\[
S_{i_0}^{+,\mathrm{inf}}:= \llbracket j_{\hat{\tau}^{\mathrm{det}}_{i_0}+1}\, , \, j_{\hat{\tau}^{\mathrm{det}}_{i_0+1}} - \delta_{n}^{\text{tot}}\cdot\mathbf{1}_{\{i_0 + 1 \neq \kappa^\star\}} \rrbracket \cap I_{\mathrm{inf}}.
\]
We have the following result.
\begin{lemma}\label{lem:S_S+}
Suppose that Assumptions~\ref{assum:Holdoutseg}, \ref{assum:detection_event_holdout} and \ref{assum:minimal_spacing_holdout} hold. Then:
\[
\forall j \in S_{i_0}^{-,\mathrm{inf}} \cup S_{i_0}^{+,\mathrm{inf}}, \quad j \in \llbracket \tau^{\star}_{i_0-1}+1\, , \, \tau^{\star}_{i_0+1} \rrbracket.
\]
\end{lemma}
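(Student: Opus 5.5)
The plan is to work on the event $\widetilde{\Omega}_{\alpha_0}$ (implicitly, as in Lemma~\ref{lem:tau_det}) and reduce everything to the two-sided bound of Lemma~\ref{lem:tau_det}. That bound is applied at the indices $i_0-1$, $i_0$ and $i_0+1$ whenever they lie in $\llbracket 1,\kappa^\star-1\rrbracket$. Every $j$ in $S_{i_0}^{-,\mathrm{inf}}\cup S_{i_0}^{+,\mathrm{inf}}$ lies between the left endpoint of $S_{i_0}^{-,\mathrm{inf}}$ and the right endpoint of $S_{i_0}^{+,\mathrm{inf}}$. Intersecting with $I_{\mathrm{inf}}$ only shrinks the sets, so it suffices to show two inequalities. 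The left endpoint must be at least $\tau^\star_{i_0-1}+1$, and the right endpoint must be at most $\tau^\star_{i_0+1}$. If a set is empty, there is nothing to prove.

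\textbf{Left boundary.} I would split into the cases $i_0-1=0$ and $i_0-1\neq 0$.

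If $i_0=1$, then $\hat\tau^{\mathrm{det}}_0=0$. With the convention $j_0=0$, the left endpoint is $1=\tau^\star_0+1$, which is trivially fine.

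If $i_0\ge 2$, apply the upper bound of Lemma~\ref{lem:tau_det} with $r=i_0-1$:
\[
\tau^\star_{i_0-1}\le j_{\hat\tau^{\mathrm{det}}_{i_0-1}}+\delta_n^{\mathrm{tot}}+q_{\mathrm{det}}.
\]
Hence the left endpoint $j_{\hat\tau^{\mathrm{det}}_{i_0-1}}+1+\delta_n^{\mathrm{tot}}+q_{\mathrm{det}}$ is at least $\tau^\star_{i_0-1}+1$. This is exactly why the shift $\delta_n^{\mathrm{tot}}+q_{\mathrm{det}}$ appears in the definition.

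\textbf{Right boundary.} Again there are two cases.

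If $i_0+1=\kappa^\star$, then $\hat\tau^{\mathrm{det}}_{\kappa^\star}=n_{\mathrm{det}}$. On $\widetilde{\Omega}_{\alpha_0}$ the estimated segmentation has the right endpoint convention, using \eqref{eq:kappa_starHoldout}. So $j_{n_{\mathrm{det}}}=\pi^-(n)\le n=\tau^\star_{\kappa^\star}$, using Lemma~\ref{lem:pi+} for well-definedness.

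If $i_0+1<\kappa^\star$, apply the lower bound of Lemma~\ref{lem:tau_det} with $r=i_0+1$:
\[
j_{\hat\tau^{\mathrm{det}}_{i_0+1}}\le \tau^\star_{i_0+1}+\delta_n^{\mathrm{tot}}.
\]
Hence the right endpoint $j_{\hat\tau^{\mathrm{det}}_{i_0+1}}-\delta_n^{\mathrm{tot}}$ is at most $\tau^\star_{i_0+1}$.

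Note that the middle boundary $j_{\hat\tau^{\mathrm{det}}_{i_0}+1}$ plays no role, because the claim only asserts membership in the union window $\llbracket\tau^\star_{i_0-1}+1,\tau^\star_{i_0+1}\rrbracket$.

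\textbf{Main obstacle.} The main difficulty is bookkeeping at the boundary conventions. I would need to check that indices like $j_{\hat\tau^{\mathrm{det}}_{i_0}+1}$ are well defined, i.e.\ $\hat\tau^{\mathrm{det}}_{i_0}+1\le n_{\mathrm{det}}$. This should follow from the separation property, item 4 of Lemma~\ref{lem:IC}, transferred to the detection series via Assumption~\ref{assum:minimal_spacing_holdout}, which gives strictly increasing $\hat\tau^{\mathrm{det}}$. I would also need to check that the endpoint conventions $\hat\tau^{\mathrm{det}}_0=0$ and $\hat\tau^{\mathrm{det}}_{\kappa^\star}=n_{\mathrm{det}}$ match those of $\tau^{\star,\mathrm{det}}$. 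I would settle these first and then run the two endpoint comparisons above.
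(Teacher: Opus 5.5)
Your bounds on the two outer endpoints are correct and match the paper's: Lemma~\ref{lem:tau_det} at $r=i_0-1$ and $r=i_0+1$, plus the conventions $j_0=0$ and $j_{n_{\mathrm{det}}}=\pi^-(n)\le n$. The reduction to these two endpoints, however, is not valid. Write $S_{i_0}^{-,\mathrm{inf}}\subseteq\llbracket L^-,R^-\rrbracket$ and $S_{i_0}^{+,\mathrm{inf}}\subseteq\llbracket L^+,R^+\rrbracket$, where $R^-=L^+-1=j_{\hat\tau^{\mathrm{det}}_{i_0}+1}-1$. An element of $S_{i_0}^{+,\mathrm{inf}}$ is only known to be at least $L^+$, and nothing in your argument gives $L^+\ge L^-$. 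Symmetrically, nothing gives $R^-\le R^+$. These comparisons hold only when the \emph{other} integer interval is nonempty, which need not be the case. So the middle boundary $j_{\hat\tau^{\mathrm{det}}_{i_0}+1}$ is exactly what must be controlled, contrary to your remark that it plays no role. A telltale sign is that, apart from well-definedness, you never use Assumption~\ref{assum:minimal_spacing_holdout}, and the statement fails without it. Concretely, take the odd/even split with $n=100$, $\tau^\star=(0,20,24,100)$ (so $\tau^{\star,\mathrm{det}}=(0,10,12,50)$ and $q_{\mathrm{det}}=1$), $\delta_{n_{\mathrm{det}}}=3$, $\hat\tau^{\mathrm{det}}=(0,7,9,50)$ and $i_0=2$. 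Assumptions~\ref{assum:Holdoutseg} and~\ref{assum:detection_event_holdout} hold, every inequality your argument uses holds, and both of your checks pass ($L^-=21$, $R^+=99$). Yet $L^+=j_{10}=19$, so $20\in S_{i_0}^{+,\mathrm{inf}}$ even though $20=\tau^\star_1$ lies in the first segment. Only the spacing condition fails here: $n_{\mathrm{det}}\underline{\Lambda}^{\mathrm{det}}_{\tau^\star}=2\le 6=\delta'_{n_{\mathrm{det}}}$. For $\kappa^\star=2$ the window is $\llbracket 1,n\rrbracket$ and your argument suffices; the gap concerns $\kappa^\star\ge 3$.

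What is missing is a two-sided bound on the middle boundary, and this is where the paper's proof uses Assumption~\ref{assum:minimal_spacing_holdout}. First, $\tau^{\star,\mathrm{det}}_{r+1}-\tau^{\star,\mathrm{det}}_r=\mathrm{Card}(I_{\mathrm{det}}^{(r)})\ge n_{\mathrm{det}}\underline{\Lambda}^{\mathrm{det}}_{\tau^\star}>\delta'_{n_{\mathrm{det}}}\ge\delta_{n_{\mathrm{det}}}$. Combined with the matched localization bound $|\hat\tau^{\mathrm{det}}_{i_0}-\tau^{\star,\mathrm{det}}_{i_0}|\le\delta_{n_{\mathrm{det}}}$ (Lemma~\ref{lem:IC} transferred to the detection series), this gives $\tau^{\star,\mathrm{det}}_{i_0-1}+1<\hat\tau^{\mathrm{det}}_{i_0}+1\le\tau^{\star,\mathrm{det}}_{i_0+1}$. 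Monotonicity of $(j_k)$ and Lemmas~\ref{lem:pi+} and~\ref{lem:pi-} then yield $L^+\ge j_{\tau^{\star,\mathrm{det}}_{i_0-1}+1}=\pi^+(\tau^\star_{i_0-1}+1)\ge\tau^\star_{i_0-1}+1$. They also yield $R^-<L^+\le j_{\tau^{\star,\mathrm{det}}_{i_0+1}}=\pi^-(\tau^\star_{i_0+1})\le\tau^\star_{i_0+1}$; if $i_0+1=\kappa^\star$, use $j_{n_{\mathrm{det}}}=\pi^-(n)\le n$ instead. The same inequality $\hat\tau^{\mathrm{det}}_{i_0}+1\le\tau^{\star,\mathrm{det}}_{i_0+1}\le n_{\mathrm{det}}$ also settles your well-definedness concern. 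With these two extra bounds, all four endpoints are controlled separately, which is how the paper's proof is organized, and your argument becomes complete.
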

The proof is provided in Appendix~\ref{app:proof_S_S+}.

To make the hold-out procedure more concrete, Figure~\ref{fig:odd_even_split} illustrates an example of the odd/even splitting strategy.
\begin{figure}[tbp]
        \centering
        \resizebox{\textwidth}{!}{
        \begin{tikzpicture}[x=0.65cm, y=1.2cm, >=stealth]

        \node[font=\small] at (6, 8.6) {\textcolor{blue}{$\bullet$} Detection set ($I_{\mathrm{det}}$, Odd)};
        \node[font=\small] at (18, 8.6) {\textcolor{green!70!black}{$\times$} Inference set ($I_{\mathrm{inf}}$, Even)};

        \def\tstarm{2}   
        \def\tstar{12}   
        \def\tstarp{20}  

        \def\thatm{3}    
        \def\that{11}    
        \def\thatp{21}   
        \def\delt{2}     

        \draw[draw=gray!50, thick, rounded corners, fill=gray!5] (-0.5, 6.2) rectangle (25.5, 8.0);
        \node[below right, font=\bfseries] at (-0.3, 8.0) {1. Odd/even Split};
        \draw[->, thin] (0.0, 6.5) -- (25, 6.5) node[right] {$i$};
        
        \draw[red, dashed, thick] (\tstarm, 6.1) -- (\tstarm, 7.0) node[above, font=\bfseries] {$\tau^\star_{i_0-1}$};
        \draw[red, dashed, thick] (\tstar, 6.1) -- (\tstar, 7.0) node[above, font=\bfseries] {$\tau^\star_{i_0}$};
        \draw[red, dashed, thick] (\tstarp, 6.1) -- (\tstarp, 7.0) node[above, font=\bfseries] {$\tau^\star_{i_0+1}$};
        
        \foreach \x in {1,3,...,23} { \fill[blue] (\x, 6.5) circle (2.5pt); }
        \foreach \x in {2,4,...,24} { \node[green!70!black, thick] at (\x, 6.5) {$\times$}; }

        \draw[draw=gray!50, thick, rounded corners, fill=gray!5] (-0.5, 3.5) rectangle (25.5, 6.0);
        \node[below right, font=\bfseries] at (-0.3, 6.0) {2. Detection (on odds only)};
        \draw[->, thin] (0.0, 4.0) -- (25, 4.0) node[right] {$i$};
        
        \draw[red, dashed, thick, opacity=0.3] (\tstarm, 3.6) -- (\tstarm, 5.3);
        \draw[red, dashed, thick, opacity=0.3] (\tstar, 3.6) -- (\tstar, 5.3);
        \draw[red, dashed, thick, opacity=0.3] (\tstarp, 3.6) -- (\tstarp, 5.3);
        
        \foreach \x in {1,3,...,23} { \fill[blue] (\x, 4.0) circle (2.5pt); }

        \draw[thick, blue] (\thatm, 3.8) -- (\thatm, 4.2) node[above=2pt, font=\bfseries] {$\widehat{\tau}^{\mathrm{det}}_{i_0-1}$};
        \draw[thick, blue] (\that, 3.8) -- (\that, 4.2) node[above=2pt, font=\bfseries] {$\widehat{\tau}^{\mathrm{det}}_{i_0}$};
        \draw[thick, blue] (\thatp, 3.8) -- (\thatp, 4.2) node[above=2pt, font=\bfseries] {$\widehat{\tau}^{\mathrm{det}}_{i_0+1}$};
        
        \draw[blue, thick, |-|] (\thatm-\delt, 5.0) -- (\thatm+\delt, 5.0) node[midway, above, scale=0.85] {$\pm \delta_{n_\mathrm{det}}$};
        \fill[blue, opacity=0.1] (\thatm-\delt, 3.6) rectangle (\thatm+\delt, 5.1);

        \draw[blue, thick, |-|] (\thatp-\delt, 5.0) -- (\thatp+\delt, 5.0) node[midway, above, scale=0.85] {$\pm \delta_{n_\mathrm{det}}$};
        \fill[blue, opacity=0.1] (\thatp-\delt, 3.6) rectangle (\thatp+\delt, 5.1);

        \draw[draw=gray!50, thick, rounded corners, fill=gray!5] (-0.5, 0.5) rectangle (25.5, 3.2);
        \node[below right, font=\bfseries] at (-0.3, 3.2) {3. Exclusion boundaries};
        \draw[->, thin] (0.0, 1.5) -- (25, 1.5) node[right] {$i$};

        \draw[blue, dashed, thin] (\thatm+\delt+1, 3.6) -- (\thatm+\delt+1, 0.5);
        \fill[blue, opacity=0.05] (\thatm-\delt, 0.5) rectangle (\thatm+\delt+1, 3.2);

        \draw[blue, dashed, thin] (\thatp-\delt, 3.6) -- (\thatp-\delt, 0.5);
        \fill[blue, opacity=0.05] (\thatp-\delt, 0.5) rectangle (\thatp+\delt+1, 3.2);

        \draw[gray, dotted, thick] (\that, 3.6) -- (\that, 0.5);

        \def\sminstart{7}   
        \def\sminend{12}     
        \def\splusstart{13}  
        \def\splusend{19}    

        \fill[violet] (\sminstart, 1.5) circle (3pt);
        \fill[violet] (\sminend, 1.5) circle (3pt);
        \fill[violet] (\splusstart, 1.5) circle (3pt);
        \fill[violet] (\splusend, 1.5) circle (3pt);

        \node[below, text=violet, scale=0.85, yshift=-3pt] at (\sminstart, 1.5) {$j_{\hat{\tau}^{\mathrm{det}}_{i_0-1}} + 1 + (2\delta_{n_{\mathrm{det}}}+1)$};
        
        \node[below left, text=violet, scale=0.85, yshift=-3pt, xshift=4pt] at (\sminend, 1.5) { $j_{\hat{\tau}^{\mathrm{det}}_{i_0}+1}-1$};
        
        \node[below right, text=violet, scale=0.85, yshift=-3pt] at (\splusstart, 1.5) {$j_{\hat{\tau}^{\mathrm{det}}_{i_0}+1}$};
        
        \node[below, text=violet, scale=0.85, yshift=-3pt] at (\splusend, 1.5) {$j_{\hat{\tau}^{\mathrm{det}}_{i_0+1}} - 2\delta_{n_{\mathrm{det}}}$};

        \draw[draw=gray!50, thick, rounded corners, fill=gray!5] (-0.5, -2.2) rectangle (25.5, 0.2);
        \node[below right, font=\bfseries] at (-0.3, 0.2) {4. Inference (Intersection with evens: $\cap \, I_{\mathrm{inf}}$)};
        \draw[->, thin] (0.0, -1.0) -- (25, -1.0) node[right] {$i$};
        
        \foreach \x in {2,4,...,24} { \node[green!70!black, thick] at (\x, -1.0) {$\times$}; }
        
        \draw[blue, dashed, thin] (\thatm+\delt+1, 0.5) -- (\thatm+\delt+1, -1.5);
        \fill[blue, opacity=0.05] (\thatm-\delt, -1.5) rectangle (\thatm+\delt+1, 0.2);
        
        \draw[blue, dashed, thin] (\thatp-\delt, 0.5) -- (\thatp-\delt, -1.5);
        \fill[blue, opacity=0.05] (\thatp-\delt, -1.5) rectangle (\thatp+\delt+1, 0.2);

        \draw[gray, dotted, thick] (\that, 0.5) -- (\that, -1.5);
        
        \def\smininfstart{8}
        \def\smininfend{12}
        \draw[green!70!black, very thick, |-|] (\smininfstart, -1.5) -- (\smininfend, -1.5) node[midway, below, font=\bfseries] {$S_{i_0}^{-,\mathrm{inf}}$};

        \def\splusinfstart{14}
        \def\splusinfend{18}
        \draw[green!70!black, very thick, |-|] (\splusinfstart, -1.5) -- (\splusinfend, -1.5) node[midway, below, font=\bfseries] {$S_{i_0}^{+,\mathrm{inf}}$};

        \end{tikzpicture}
        }
        \caption{Complete illustration of the odd/even hold-out splitting strategy defining the test segments. \textbf{Step 1 (Split):} Partitioning into an odd detection set and an even inference set. \textbf{Step 2 (Detection):} Change-points are estimated on the detection set. \textbf{Step 3 (Exclusion Boundaries):} By applying Lemma~\ref{lem:tau_det}, we establish exclusion neighborhoods (shaded blue zones) around the estimated change-points. Discarding these regions ensures that, under the null hypothesis $H_0^{\phi}$, the remaining observations are strictly i.i.d. \textbf{Step 4 (Inference):} The final discrete test segments $S_{i_0}^{-,\mathrm{inf}}$ and $S_{i_0}^{+,\mathrm{inf}}$ are obtained by intersecting these boundary rules with the inference set ($I_{\mathrm{inf}}$).}
        \label{fig:odd_even_split}
\end{figure}

Before introducing the associated theoretical guarantees, we formally define the global test for the hold-out procedure:
\[
T_{\alpha_1}^{\mathcal{Z},\mathrm{holdout}}(S_{i_0}^{-,\mathrm{inf}}, S_{i_0}^{+,\mathrm{inf}}):= T_{\alpha_1}^{\mathcal{Z}}\left( (Z_{j})_{j \in S_{i_0}^{-,\mathrm{inf}}}, (Z_{j})_{j \in S_{i_0}^{+,\mathrm{inf}}} \right),
\]
where $T_{\alpha_1}^{\mathcal{Z}}$ is the same generic level-$\alpha_1$ local two-sample test introduced in Section~\ref{subsec:GTST}. 

Similar to the GTST framework, Algorithm~\ref{algo:holdout} summarizes the complete practical workflow of this data-splitting approach. The theoretical validity of the method is then formally established in the following theorem.

\begin{algorithm}[tbp]
\caption{Hold-out \emph{post hoc} attribution test}
\label{algo:holdout}
\textbf{Input:} Time series $(X_1,\ldots,X_n)$, number of segments $\kappa^\star$, index $i_0$, localization sample $I_{\mathrm{det}}$, inference sample $I_{\mathrm{inf}}$, level $\alpha_1$, generic two-sample test $T_{\alpha_1}^{\mathcal Z}$.\\
\textbf{Output:} Test decision.

\begin{enumerate}
\item Estimate the change-points on the localization sample $(X_t)_{t\in I_{\mathrm{det}}})$, assuming $\kappa^\star$ segments, and denote the resulting segmentation by $\hat{\tau}^{\,\mathrm{det}}$.

\item Compute the theoretical localization error bound $\delta_{n_{\mathrm{det}}}$.

\item Determine the corresponding indices in the total series:
\[
j_{\hat{\tau}^{\,\mathrm{det}}_{i_0-1}+1},\qquad
j_{\hat{\tau}^{\,\mathrm{det}}_{i_0}},\qquad
j_{\hat{\tau}^{\,\mathrm{det}}_{i_0}+1},\qquad
j_{\hat{\tau}^{\,\mathrm{det}}_{i_0+1}}.
\]

\item Construct the two inference test segments:
\[
S_{i_0}^{-,\mathrm{inf}}:= \llbracket j_{\hat{\tau}^{\mathrm{det}}_{i_0-1}} + 1 + (\delta_{n}^{\text{tot}}+q_{\mathrm{det}})\cdot\mathbf{1}_{\{i_0 - 1 \neq 0\}}\, , \, j_{\hat{\tau}^{\mathrm{det}}_{i_0}+1}-1  \rrbracket \cap I_{\mathrm{inf}},
\]
\[
S_{i_0}^{+,\mathrm{inf}}:= \llbracket j_{\hat{\tau}^{\mathrm{det}}_{i_0}+1}\, , \, j_{\hat{\tau}^{\mathrm{det}}_{i_0+1}} - \delta_{n}^{\text{tot}}\cdot\mathbf{1}_{\{i_0 + 1 \neq \kappa^\star\}} \rrbracket \cap I_{\mathrm{inf}}.
\]
\item Compute the test statistic using the observations within the inference segments:
\[
T_{\alpha_1}^{\mathcal Z,\mathrm{holdout}}
=
T_{\alpha_1}^{\mathcal Z}
\bigl(
(Z_i)_{i\in S_{i_0}^{-,\mathrm{inf}}},
(Z_i)_{i\in S_{i_0}^{+,\mathrm{inf}}}
\bigr).
\]

\item Return $T_{\alpha_1}^{\mathcal Z, \mathrm{holdout}}$.
\end{enumerate}
\end{algorithm}

\begin{theorem} 
\label{thm:Hold-out}
Suppose that Assumptions~\ref{assum:Holdoutseg}, \ref{assum:detection_event_holdout} and \ref{assum:minimal_spacing_holdout} hold. Then $T_{\alpha_1}^{\mathcal{Z}, \mathrm{holdout}}$ is a test of
\[
H_0^{\phi}: P_{(i_0)}^{\phi} = P_{(i_0+1)}^{\phi} \quad \text{against} \quad H_1^{\phi}: P_{(i_0)}^{\phi} \neq P_{(i_0+1)}^{\phi},
\]
with level $ \alpha_0 + \alpha_1 $.
\end{theorem}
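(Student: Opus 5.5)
The argument follows the proof of Theorem~\ref{thm:GTST}, with the grid argument replaced by a conditioning argument that uses the independence between the detection and inference sub-series.

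\textbf{Step 1: split on the detection event.} We first write
\[
\mathbb{P}_{H_0^{\phi}}\bigl(T^{\mathcal Z,\mathrm{holdout}}_{\alpha_1}=1\bigr)\le \mathbb{P}(\widetilde\Omega_{\alpha_0}^c)+\mathbb{P}_{H_0^{\phi}}\bigl(\widetilde\Omega_{\alpha_0}\cap\{T^{\mathcal Z,\mathrm{holdout}}_{\alpha_1}=1\}\bigr).
\]
By Assumption~\ref{assum:detection_event_holdout}, the first term is at most $\alpha_0$. The event $\widetilde\Omega_{\alpha_0}$ and the estimate $\hat\tau^{\mathrm{det}}$ are measurable with respect to $\mathcal F_{\mathrm{det}}:=\sigma\bigl((X_t)_{t\in I_{\mathrm{det}}}\bigr)$. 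We take the partition $(I_{\mathrm{det}},I_{\mathrm{inf}})$ to be deterministic, or independent of the data. The detection algorithm is applied only to $(X_t)_{t\in I_{\mathrm{det}}}$, so we may, if necessary, take $\widetilde\Omega_{\alpha_0}$ to be this $\mathcal F_{\mathrm{det}}$-measurable event. Since the $X_t$ are mutually independent, $\mathcal F_{\mathrm{det}}$ is independent of $(Z_j)_{j\in I_{\mathrm{inf}}}$.

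\textbf{Step 2: the segments are $\mathcal F_{\mathrm{det}}$-measurable and land in the right place.} The boundaries of $S_{i_0}^{-,\mathrm{inf}}$ and $S_{i_0}^{+,\mathrm{inf}}$ are built only from $\hat\tau^{\mathrm{det}}$, the indices $j_k$ of $I_{\mathrm{det}}$, and the deterministic quantities $\delta_n^{\mathrm{tot}}$ and $q_{\mathrm{det}}$. Hence both segments are $\mathcal F_{\mathrm{det}}$-measurable random subsets of $I_{\mathrm{inf}}$. On $\widetilde\Omega_{\alpha_0}$, Lemma~\ref{lem:S_S+} places every index of $S_{i_0}^{-,\mathrm{inf}}\cup S_{i_0}^{+,\mathrm{inf}}$ in $\llbracket\tau^\star_{i_0-1}+1,\tau^\star_{i_0+1}\rrbracket$. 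On this range the $Z_j$ are independent, with law $P^\phi_{(i_0)}$ before $\tau^\star_{i_0}$ and law $P^\phi_{(i_0+1)}$ after it. Under $H_0^{\phi}$ these two laws coincide, so every $Z_j$ with $j$ in the union has the common law $P^\phi:=P^\phi_{(i_0)}=P^\phi_{(i_0+1)}$.

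\textbf{Step 3: condition and apply the level of the local test.} We condition on $\mathcal F_{\mathrm{det}}$. On $\widetilde\Omega_{\alpha_0}$ the segments are fixed disjoint sets $A,B\subseteq I_{\mathrm{inf}}\cap\llbracket\tau^\star_{i_0-1}+1,\tau^\star_{i_0+1}\rrbracket$. By the independence in Step~1, the conditional law of $\bigl((Z_j)_{j\in A},(Z_j)_{j\in B}\bigr)$ equals its unconditional law, namely two mutually independent i.i.d.\ samples from the same $P^\phi$. This is exactly $H_0^{\mathrm{local}}$, so
\[
\mathbb{P}\bigl(T^{\mathcal Z}_{\alpha_1}=1\mid\mathcal F_{\mathrm{det}}\bigr)\le\alpha_1 \quad\text{on } \widetilde\Omega_{\alpha_0}.
\]
If either segment is empty, the convention makes the test return $0$. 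Integrating over $\widetilde\Omega_{\alpha_0}$ bounds the second term by $\alpha_1$, and the total Type~I error is at most $\alpha_0+\alpha_1$.

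\textbf{Main obstacle.} The delicate step is the conditioning in Step~3, which must be justified rigorously. The cleanest route is to decompose over the finitely many values $(A,B)$ taken by the segment pair:
\[
\sum_{(A,B)}\mathbb{P}\bigl(\widetilde\Omega_{\alpha_0}\cap\{\text{segments}=(A,B)\}\bigr)\,\mathbb{P}\bigl(T^{\mathcal Z}_{\alpha_1}(\mathbf Z_A,\mathbf Z_B)=1\bigr).
\]
The product form uses independence. We keep only those pairs $(A,B)$ compatible with $\widetilde\Omega_{\alpha_0}$; by Lemma~\ref{lem:S_S+} these are exactly the pairs lying inside $\llbracket\tau^\star_{i_0-1}+1,\tau^\star_{i_0+1}\rrbracket$. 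Each factor $\mathbb{P}(T^{\mathcal Z}_{\alpha_1}(\mathbf Z_A,\mathbf Z_B)=1)$ is then at most $\alpha_1$, and the remaining weights sum to at most $1$.
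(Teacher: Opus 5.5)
Your proof is correct and follows the paper's argument. You split on $\widetilde\Omega_{\alpha_0}$ and use Lemma~\ref{lem:S_S+} to confine both inference segments to $\llbracket \tau^\star_{i_0-1}+1, \tau^\star_{i_0+1}\rrbracket$. You then condition on the detection data, using the independence of the $I_{\mathrm{det}}$ and $I_{\mathrm{inf}}$ observations to invoke the level-$\alpha_1$ guarantee of $T^{\mathcal Z}_{\alpha_1}$; the paper conditions on $\hat\tau^{\mathrm{det}}$, you on $\mathcal F_{\mathrm{det}}$.

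Your care about measurability tightens a step the paper leaves implicit. The cleanest phrasing is to replace $\widetilde\Omega_{\alpha_0}$ by the larger event on which the two properties of Assumption~\ref{assum:detection_event_holdout} hold. That event is $\sigma(\hat\tau^{\mathrm{det}})$-measurable, still has probability at least $1-\alpha_0$, and is all that Lemma~\ref{lem:S_S+} needs.
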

\begin{proof}[\textbf{Proof.}]
We have:
\begin{align*}
\mathbb{P}_{H_0^{\phi}}\left(T_{\alpha_1}^{\mathcal{Z}, \mathrm{holdout}}(S_{i_0}^{-,\mathrm{inf}}, S_{i_0}^{+,\mathrm{inf}}) = 1\right) &\leq \mathbb{P}(\widetilde{\Omega}_{\alpha_0}^c) + \mathbb{P}_{H_0^{\phi}}\left(\widetilde{\Omega}_{\alpha_0} \cap \left\{T_{\alpha_1}^{\mathcal{Z}, \mathrm{holdout}}(S_{i_0}^{-,\mathrm{inf}}, S_{i_0}^{+,\mathrm{inf}}) = 1\right\}\right)\\
&\leq \alpha_0 + \mathbb{P}_{H_0^{\phi}}\left(\widetilde{\Omega}_{\alpha_0} \cap \left\{T_{\alpha_1}^{\mathcal{Z}}\left( (Z_{j})_{j \in S_{i_0}^{-,\mathrm{inf}}}, (Z_{j})_{j \in S_{i_0}^{+,\mathrm{inf}}} \right) = 1\right\}\right)\\
&=  \alpha_0 + \mathbb{E}\left[ \mathbb{1}_{\widetilde{\Omega}_{\alpha_0}} \mathbb{P}_{H_0^{\phi}}\left(T_{\alpha_1}^{\mathcal{Z}}\left( (Z_{j})_{j \in S_{i_0}^{-,\mathrm{inf}}}, (Z_{j})_{j \in S_{i_0}^{+,\mathrm{inf}}} \right) = 1 \big| \hat{\tau}^{\mathrm{det}} \right) \right]\\
&\leq \alpha_0 + \alpha_1.
\end{align*}
For the last inequality, we use the fact that under the null hypothesis $H_0^{\phi}$, the observations $(Z_{j})_{j \in S_{i_0}^{-,\mathrm{inf}} \cup S_{i_0}^{+,\mathrm{inf}}}$ are i.i.d.\ conditionally on $\hat{\tau}^{\mathrm{det}}$, and that $T_{\alpha_1}^{\mathcal{Z}}$ is a level-$\alpha_1$ test. Indeed, independence follows from the fact that $\hat{\tau}^{\mathrm{det}}$ is a function of the data in $I_{\mathrm{det}}$ only, and is therefore independent of the data originating from $I_{\mathrm{inf}}$.
\end{proof}
\begin{remark}[Single change-point setting]
\label{rem:single_cp}
For the hold-out procedure when $\kappa^\star = 2$, i.e., in the single change-point setting, the assumptions \eqref{eq:delta_nHoldout} and \eqref{eq:delta_n'Holdout} are no longer required. In this case ($i_0 = 1$), it suffices that conditions \eqref{eq:Holdoutseg} and \eqref{eq:kappa_starHoldout} hold. Indeed, the data used for the \emph{post hoc} test are independent of the data used for change-point localization. As a consequence, regardless of the localization accuracy, the resulting two-sample test is performed on observations that are independent of the estimated change-point. Therefore, the test remains controlled at level $\alpha_1$, rather than at the level $\alpha_0 + \alpha_1$. We have:
\begin{align*}
\mathbb{P}_{H_0^{\phi}}\left(T_{\alpha_1}^{\mathcal{Z}, \mathrm{holdout}}(S_{i_0}^{-,\mathrm{inf}}, S_{i_0}^{+,\mathrm{inf}}) = 1\right) &\leq  \mathbb{P}_{H_0^{\phi}}\left(T_{\alpha_1}^{\mathcal{Z}}\left( (Z_{j})_{j \in S_{i_0}^{-,\mathrm{inf}}}, (Z_{j})_{j \in S_{i_0}^{+,\mathrm{inf}}} \right) = 1\right)\\
&= \mathbb{E}\left[ \mathbb{P}_{H_0^{\phi}}\left(T_{\alpha_1}^{\mathcal{Z}}\left( (Z_{j})_{j \in S_{i_0}^{-,\mathrm{inf}}}, (Z_{j})_{j \in S_{i_0}^{+,\mathrm{inf}}} \right) = 1 \big| \hat{\tau}^{\mathrm{det}} \right) \right]\\
&\leq \alpha_1,
\end{align*}
using the same argument as before for the last inequality.
\end{remark}

\begin{remark}[Practical behavior with multiple change-points]
\label{rem:holdout_multiple}
In the presence of multiple change-points ($\kappa^\star > 2$), it is important to clarify the method's behavior when the minimal spacing condition is not strictly met. Theoretically, this condition is necessary to guarantee the localization bounds established in Lemma~\ref{lem:IC} and, by extension, the formal control of the Type~I error. If the condition is violated, these formal guarantees are lost. In practice, however, the procedure is applied without explicitly verifying this condition. If a severe spacing violation occurs (i.e., $n_{\mathrm{det}} \cdot \underline{\Lambda}^{\mathrm{det}}_{\tau^{\star}} \leq \delta_{n_{\mathrm{det}}}$), one of the resulting test segments is mechanically left empty. As a natural consequence, the two-sample test cannot be executed, and the procedure defaults to retaining $H_0^{\phi}$. Thus, while intermediate violations cause the test to execute without theoretical backing, severe violations simply result in a default conservative decision.
\end{remark}

\subsection{Block-wise post hoc attribution}
Once a \emph{post hoc} attribution procedure has been selected, we apply it independently to the two predefined blocks of coordinates. Recall that the goal is to determine whether a distributional change occurs in each specific block around the detected change-point. 

For a given block, the procedure consists of performing a two-sample test comparing the observations before and after the predicted change-point. As established in Theorems~\ref{thm:GTST} and~\ref{thm:Hold-out}, each marginal test is controlled at level $\alpha_0+\alpha_1$, where $\alpha_0$ corresponds to the probability of localization error and $\alpha_1$ is the nominal level of the two-sample test. The subset of blocks driving the global change is then estimated as the collection of blocks for which the local null hypothesis is rejected.

The primary statistical guarantee provided by this block-wise approach is the control of false positives. Since the initial change-point localization is performed only once jointly on the full multivariate time series, its associated error probability $\alpha_0$ is shared. Subsequently, because two \emph{post hoc} tests are performed (one for each block) at a conditional level $\alpha_1$, the family-wise error rate (FWER)---defined as the probability of falsely attributing a change to at least one invariant block---is controlled via the union bound:
$$ \mathrm{FWER} = \mathbb{P}\big(\text{false positive in Block~1 or Block~2}\big) \le \alpha_0+2\alpha_1. $$

\begin{remark}[Joint distributional changes]
In our framework, the initial detection algorithm may identify a global change-point even when the \emph{post hoc} procedure detects no significant change in any individual block. Statistically, this scenario admits two explanations. On the one hand, the shift might be purely driven by a change in the dependence structure between the blocks (e.g., an inter-block covariance shift), leaving their respective marginal distributions invariant. On the other hand, it may simply reflect insufficient statistical power to detect a weak marginal change. Distinguishing a genuine joint dependence shift from a weak marginal signal would require a dedicated test of the dependence structure, which lies beyond the scope of the present marginal procedures.
\end{remark}

\begin{remark}[Extension to multiple blocks]
Although this exposition focuses on two predefined blocks of coordinates, the proposed framework extends to settings involving more than two blocks. In such cases, the attribution procedure can be applied independently to each block, resulting in multiple hypothesis tests. To control the probability of false positives when several blocks are tested simultaneously, standard multiple testing procedures may be used, such as Bonferroni or Holm corrections \cite{Holm1979, giraud2021introduction}. These approaches ensure that the family-wise error rate (FWER) remains controlled while allowing the identification of the subset of blocks responsible for the detected change. In particular, if $L$ blocks are tested and each individual \emph{post hoc} test is conditionally controlled at level $\alpha_1$, a simple union bound yields
$$\mathrm{FWER} \le \alpha_0+L\alpha_1,$$
where, as previously noted, the initial localization error $\alpha_0$ is shared across all tests.
\end{remark}
\section{Experimental evaluation}
\label{sec:experiments}
In this section, we evaluate the empirical performance of the proposed \emph{post hoc} procedures. The Python implementation of the proposed methods, along with a reproducible example of the simulation study, are publicly available on GitHub at \url{https://github.com/Dhia-Elhaq/Post-hoc-change-point-detection}.

The experiments aim to assess the behavior of the
methods when change-points are estimated with localization uncertainty. In the numerical experiments, the underlying two-sample tests are instantiated as kernel-based MMD tests, chosen for their theoretical guarantees in nonparametric settings (see Section~\ref{subsec:MMD}). We first report results on synthetic data under
various simulation settings, and then illustrate the behavior of the
methods on a real-world dataset.

\subsection{Benchmark procedures}
To assess the effect of localization uncertainty, we introduce two
benchmark procedures that serve as reference baselines in our numerical
experiments.

\paragraph{Oracle test.}
The first procedure corresponds to an ideal scenario in which the true location of the change-point is exactly known. Consequently, no uncertainty or error is introduced by the change-point estimation step. This oracle procedure thus serves as a theoretical upper bound on the achievable statistical power for any valid \emph{post hoc} test.
We recall the definition of the two oracle segments adjacent to $\tau^\star_{i_0}$:
\[
S^{\star-}_{i_0} = \llbracket \tau^\star_{i_0-1}+1 ,\, \tau^\star_{i_0} \rrbracket
\quad \mathrm{and} \quad
S^{\star+}_{i_0} = \llbracket \tau^\star_{i_0}+1 ,\, \tau^\star_{i_0+1} \rrbracket.
\]
The oracle test simply applies the two-sample test directly to these deterministic segments:
\[
T_{\alpha_1}^{\mathcal Z, \mathrm{oracle}}
:=
T_{\alpha_1}^{\mathcal Z}
\big( (Z_i)_{i\in S^{\star-}_{i_0}}, (Z_i)_{i\in S^{\star+}_{i_0}} \big).
\]
By construction, under $H_0^{\phi}$, the observations in the two segments are i.i.d., and therefore this procedure controls the false alarm rate at level $\alpha_1$.

\paragraph{Naive test.}
The second baseline ignores the uncertainty induced by change-point estimation. Let $\widehat{\tau}_{i_0}$ be the estimated change-point returned by the detection procedure, and define the adjacent estimated segments by
\[
\widehat{S}^{-}_{i_0} := \llbracket \widehat{\tau}_{i_0-1}+1 ,\, \widehat{\tau}_{i_0} \rrbracket
\quad \mathrm{and} \quad
\widehat{S}^{+}_{i_0} := \llbracket \widehat{\tau}_{i_0}+1 ,\, \widehat{\tau}_{i_0+1} \rrbracket.
\]
The naive approach consists of applying the same two-sample test to these data:
\[
T_{\alpha_1}^{\mathcal Z, \mathrm{naive}}
:=
T_{\alpha_1}^{\mathcal Z}
\big( (Z_i)_{i\in \widehat{S}^{-}_{i_0}}, (Z_i)_{i\in \widehat{S}^{+}_{i_0}} \big).
\]
This procedure can be viewed as a natural baseline that a practitioner might implement. However, since the segments depend on the data through $\widehat{\tau}_{i_0}$, the independence assumptions underlying the test are no longer guaranteed, and Type I error control may fail.

\subsection{Two-sample tests based on maximum mean discrepancy}\label{subsec:MMD}
Previous \emph{post hoc} methods are based on tests that compare distributions at different time points. In this study, we use kernel-based two-sample tests based on maximum mean discrepancy (MMD) \cite{gretton2012}, which offer a nonparametric approach to comparing distributions in potentially high-dimensional spaces.

\medskip

We first introduce some notions and properties.  
Let $ k : \mathcal{Z} \times \mathcal{Z} \to \mathbb{R} $ be a positive semi-definite kernel, and let $\varphi : \mathcal{Z} \to \mathcal{H}$ denote the feature map associated with $k$, where $\mathcal{H}$ is the reproducing kernel Hilbert space (RKHS) induced by $k$ \cite{Berlinet2004}.  
For all $z \in \mathcal{Z}$, we have $\varphi(z) := k(z, \cdot)$.  
We assume that $\mathcal{H}$ is a separable Hilbert space.  
Assuming that the kernel $k$ is bounded, the kernel mean embeddings of two probability distributions $P$ and $Q$ on $\mathcal{Z}$ are well defined as
\[
\mu_P = \mathbb{E}_{X \sim P}[\varphi(X)] 
\quad \mathrm{and} \quad 
\mu_Q = \mathbb{E}_{X \sim Q}[\varphi(X)].
\]
The \emph{Maximum Mean Discrepancy} (MMD) between $P$ and $Q$ is then defined by:
\[
\mathrm{MMD}(P, Q) := \left\| \mu_P - \mu_Q \right\|_{\mathcal{H}}.
\]
Furthermore, $\mathrm{MMD}$ defines a true metric on probability measures---meaning that $\mathrm{MMD}(P, Q) = 0$ if and only if $P = Q$---provided that the underlying kernel $k$ is \emph{characteristic} \cite{Sriperumbudur_2010}. 

In practice, for hypothesis testing, it is standard to work with the squared distance $\mathrm{MMD}^2(P, Q)$. Given two finite multisets $S_1, S_2 \subseteq \mathcal{Z}$ corresponding respectively to samples drawn from $P$ and $Q$, we estimate the theoretical squared distance $\mathrm{MMD}^2(P, Q)$ by replacing the true probability measures with their empirical counterparts. This yields the biased empirical estimator $\mathrm{MMD}_b^2$:
\begin{align*}
    \mathrm{MMD}_b^2(S_1, S_2) 
    &:= \| \mathcal{B}(S_1) - \mathcal{B}(S_2) \|_{\mathcal{H}}^2 \\
    &= \| \mathcal{B}(S_1) \|_{\mathcal{H}}^2 
     + \| \mathcal{B}(S_2) \|_{\mathcal{H}}^2 
     - 2 \langle \mathcal{B}(S_1), \mathcal{B}(S_2) \rangle_{\mathcal{H}},
\end{align*}
where
\[
    \mathcal{B}(S) := \frac{1}{|S|} \sum_{x \in S} \varphi(x)
\]
denotes the empirical kernel mean embedding of the sample $S$.
\subsubsection{Test based on a theoretical threshold}
\label{subsec:thresholdd}
We assume that the kernel $k$ is positive and bounded by a constant $M^2 > 0$. We consider two disjoint segments $S_1$ and $S_2$, consisting respectively of the indices of data points coming from two subparts of the projected time series $(Z_i)_{1 \le i \le n}$.
The test is defined by:
\[
T((Z_i)_{i \in S_1}, (Z_i)_{i \in S_2}) := \mathbf{1}_{ \left\{\mathrm{MMD}_b((Z_i)_{i \in S_1}, (Z_i)_{i \in S_2}) > \mathrm{threshold}_{\alpha_1}\right\} },
\]
where 
\[\mathrm{threshold}_{\alpha_1} := \sqrt{\frac{M^2}{\min(|S_1|,|S_2|)}}\left(4+2\sqrt{\log\tfrac{2}{\alpha_1}}\right),
\]
is determined from the concentration bound of Theorem~7 in \cite{gretton2012}. Under the null hypothesis $H_0^{\phi}$, where $(Z_i)_{i \in S_1 \cup S_2}$ are i.i.d., the test has level at most $\alpha_1$; see Appendix~\ref{app:threshold} for the proof.

\subsubsection{Permutation test (MMD)}
In this approach, the null distribution of the MMD statistic is approximated by recomputing the statistic under many random permutations of the observation indices. Let $B \in \mathbb{N}^*$ denote the number of performed permutations. The procedure is then as follows:
\begin{enumerate}
    \item Let $\mathrm{MMD}_{b}^{\mathrm{obs}}$ denote the statistic computed on the original (non-permuted) data;
    \item Generate $B$ random permutations of $S_1 \cup S_2$;
    \item For each permutation $\sigma_b, \, b \in \{1, \dots, B\}$, compute \\ $\mathrm{MMD}_b^{(b)}:= \mathrm{MMD}_b\left( (Z_{\sigma_b(i)})_{i \in S_1}, (Z_{\sigma_b(i)})_{i \in S_2} \right)$, i.e., the statistic obtained from the permuted data.
\end{enumerate}
The p-value is then estimated as:
\[
\widehat{p} := \frac{1}{B + 1} \left(1 + \sum_{b = 1}^B \mathbf{1}_{\mathrm{MMD}_b^{(b)} \ge \mathrm{MMD}_{b}^{\mathrm{obs}}}\right).
\]
The test is defined by:
\[
T_{\mathrm{perm}}((Z_i)_{i \in S_1}, (Z_i)_{i \in S_2}) := \mathbf{1}_{\{\widehat{p} < \alpha_1\}}.
\]
\begin{remark}[Level of the permutation test]
If under $H_0$ the observations $(Z_i)_{i \in S_1 \cup S_2}$ are i.i.d., then the permutation test $T_{\mathrm{perm}}$ provides an exact non-asymptotic level $\alpha_1$ (see \cite{Romano2005}).
\end{remark}

\subsection{Simulation study}
\subsubsection{General experimental setting}
\label{subsubsec:general_setting}
\paragraph{Datasets.}
We consider multivariate time series $X_1,\ldots,X_n \in \mathbb{R}^d$ generated from Gaussian distributions whose parameters are piecewise constant over time. Unless explicitly signaled in a specific subsection, the default parameters are: $d = 5, \, n = 500$. A single change-point is placed at the center of the sequence, i.e., $\tau^\star = n/2$, corresponding to $\kappa^\star=2$ segments. Before the change-point, $X_t \sim \mathcal{N}(0, I_d)$, while after the change-point the distribution depends on the experiment (mean, variance, or covariance change). 
As a first step, we evaluate the ability of the procedures to detect changes at the level of blocks of coordinates. The coordinates are divided into two disjoint blocks: $\text{Block 1} = \{1,2,3\}$ and $\text{Block 2} = \{4,5\}$. 

Distributional changes are introduced only in Block~1, while Block~2 remains unchanged and is used to assess Type I error control. This simplified block structure allows us to clearly evaluate the ability of the procedures to attribute a detected change to the correct subset of coordinates while controlling false discoveries.

\paragraph{Evaluation metrics.}
\emph{Post hoc} tests are applied separately to both blocks. Since Block~1 contains the true distributional change, the empirical power is measured on Block~1 as the proportion of simulations in which the null hypothesis is correctly rejected. Conversely, Block~2 remains unchanged, and the empirical Type~I error is estimated on this block as the proportion of simulations in which the null hypothesis is incorrectly rejected. Finally, we report the exact block recovery rate. This empirical metric translates the theoretical concept of exact block-wise attribution into practice: it measures the joint probability of successfully rejecting the null hypothesis on the affected block (power) while simultaneously retaining it on the unchanged block (Type~I error control).

\paragraph{Change-point detection and kernel choice.}
We first perform global change-point detection using the Kernel Change-Point (KCP) procedure, assuming that the true number of segments $\kappa^\star$ is known \cite{Arlot2019} and provided that Assumption \ref{assum:detection_event} is satisfied. 
To maintain a consistent experimental framework, we rely on the Gaussian kernel for all simulations. For any two observations $x$ and $y$, this kernel is defined as
\[ 
k(x,y) = \exp\!\left(-\frac{\|x-y\|^2}{2\sigma^2}\right), 
\]
where $\sigma > 0$ represents the bandwidth parameter. The Gaussian kernel is a standard choice in the literature as it is characteristic \cite{Sriperumbudur_2010}. The kernel bandwidth $\sigma$ is systematically set using the median heuristic \cite{garreau2017large}, computed globally from the pairwise distances between all available observations in the entire time series $(X_1, \dots, X_n)$. While this heuristic provides a
robust and well-established baseline that performs well across the simulation settings considered
in this work, it is worth noting that the optimal choice of the kernel and its parameters inherently
depends on the specific characteristics of the target application. Crucially, this single fixed bandwidth is used for both the initial KCP detection procedure and all subsequent \emph{post hoc} tests.

\paragraph{Choice of the \emph{post hoc} testing procedure.}
Given the estimated change-points, we focus on a specific change-point and conduct \emph{post hoc} two-sample testing. In this study, we rely on a permutation-based MMD test. The main reason is that our framework is fully nonparametric; no explicit assumptions are made regarding the underlying distributions or the type of distributional change. In such settings, tests based on theoretical concentration thresholds can be overly conservative, leading to a significant loss of power \cite{gretton2012}. 
Permutation tests provide an alternative, automatically adapting to the empirical distribution of the statistic under the null hypothesis to achieve higher power while preserving valid level control. The main drawback of this approach is its computational cost, as the MMD squared statistic must be recomputed for a large number of permutations. In our experiments, inference relies on $B=1\,000$ permutations, with each experimental configuration repeated 10\,000 times to obtain stable performance estimates. Recent work has proposed permutation-free kernel two-sample tests aiming to reduce this computational burden. For instance, \cite{shekhar2022permutation} introduce a kernel two-sample test based on a cross-MMD statistic that admits an explicit asymptotic threshold, while maintaining a power close to that of permutation-based methods. This could reduce the computational cost of permutation testing without significantly sacrificing statistical performance.

\paragraph{Threshold parameter and plug-in estimates.}
We set the nominal probability levels to $\alpha_0 = \alpha_1 = 0.05$. According to Theorem~3.1 of \cite{Garreau2018}, successful localization requires the signal-to-noise ratio to be sufficiently large. Specifically, the theoretical condition ensuring valid detection can be expressed as:
\[
\frac{\underline{\Delta}^2}{M^2} > 148 \kappa^\star (\kappa^\star +1)  \frac{\log\big(1/\alpha_0\big) + \log n + 1}{n\underline{\Lambda}_{\tau^\star}},
\]
where $M^2$ denotes the kernel bound (here $M^2 = 1$ since we use the Gaussian kernel). When this condition is met, the theoretical localization error bound $\delta_n^{\mathrm{theo}}$ is directly given by:
\[
\delta_n^{\mathrm{theo}} := \frac{148 \kappa^\star  M^2}{\underline{\Delta}^2} \cdot (\log\big(1/\alpha_0\big) + \log n + 1).
\]
The leading constant $148$ stems from distribution-free concentration inequalities. As is often the case with such theoretical bounds, this constant is known to be overly conservative in practice. 

To mitigate this effect and obtain an operational procedure, we replace the highly conservative constant $148$ with a much smaller empirical tuning parameter $C_0$ that depends on the family of distributions being tested. In most of our simulations, this value was set to $C_0 = 0.15$, unless specified otherwise. Furthermore, we substitute the unknown terms $\kappa^\star $ and $\underline{\Delta}$ with plug-in estimates based on the detected change-points $\hat{\tau}$. We estimate the squared jump size by the empirical squared MMD between adjacent detected segments:
\[
\widehat{\underline{\Delta}}^2 := \min_{1 \le i \le \kappa^\star-1} \mathrm{MMD}_b^2\!\left( \{X_{\hat{\tau}_{i-1}+1},\ldots,X_{\hat{\tau}_i}\}, \{X_{\hat{\tau}_i+1},\ldots,X_{\hat{\tau}_{i+1}}\} \right).
\]
Similarly, we estimate $\underline{\Lambda}_{\tau^\star}$ by $\underline{\Lambda}_{\hat{\tau}} = \frac{1}{n} \min_{1 \le i \le \kappa^\star} (\hat{\tau}_{i} - \hat{\tau}_{i-1})$. Finally, our \emph{post hoc} procedure evaluates the following practical condition:
\[
\frac{\widehat{\underline{\Delta}}^2}{M^2} > C_0 \kappa^\star (\kappa^\star +1)  \frac{\log\big(1/\alpha_0\big) + \log n + 1}{n\underline{\Lambda}_{\hat{\tau}}}.
\]
If this empirical condition is satisfied, the operational localization radius $\delta_n$ used to construct our test segments is computed as:
\[
\delta_n := C_0 \frac{\kappa^\star M^2}{\widehat{\underline{\Delta}}^2} \cdot (\log\big(1/\alpha_0\big) + \log n + 1).
\]
In practice, the tuning parameter $C_0$ is properly calibrated to ensure valid localization with high probability. Therefore, our simulations systematically compute $\delta_n$ and apply the \emph{post hoc} tests directly, without explicitly enforcing the preliminary theoretical condition. 

The calibration of the parameter $C_0$ is studied and discussed in Section \ref{sec:C_0}.

\begin{remark}[Interpretation of the performance metrics]
We emphasize that the evaluated \emph{post hoc} procedures are applied after the KCP algorithm has identified candidate change-points. Therefore, the performance metrics reported in this section should be interpreted as the combined outcome of two successive steps: (1) the initial detection step, which yields the estimated locations, and (2) the subsequent \emph{post hoc} inference step.
\end{remark}

To summarize the complete methodology and clarify the differences between the evaluated approaches, Figure~\ref{fig:pipeline} illustrates the comprehensive simulation pipeline. Specifically, we compare our method against baseline approaches: (i) an \emph{oracle} test, used for benchmarking, which applies a two-sample test directly to the segments defined by the true change-points $ \tau^\star $; (ii) a \emph{naive} test, which applies a two-sample test to the segments adjacent to the estimated change-point $\widehat{\tau}_{i_0}$; (iii) a \emph{hold-out} test, which splits the data into a detection subsample and an independent inference subsample to ensure valid level control; and (iv) the proposed \emph{GTST} procedure, for which we set the grid step to $\eta = \max(\lfloor \delta_n / 2 \rfloor, 1)$. Several values of $\eta$ were tested in preliminary simulations, and this choice provided a reasonable trade-off between statistical power and computational cost.

\begin{figure}[htbp]
        \centering
        \resizebox{\textwidth}{!}{
        \begin{tikzpicture}[
          basebox/.style={draw, rounded corners, align=center, inner sep=6pt, thick, font=\small},
          databox/.style={basebox, fill=gray!10, draw=gray!80},
          oraclebox/.style={basebox, fill=blue!20, draw=blue},
          naivebox/.style={basebox, fill=orange!20, draw=orange!70},
          mybox/.style={basebox, fill=green!10, draw=green!60!black},
          mybox_2/.style={basebox, fill=red!20, draw=red!60!black},
          arr/.style={-{Stealth}, thick, draw=gray!80}
        ]

        \node[databox] (sim) at (0,0)
        {\textbf{Simulation}\\
        Multivariate time series\\
        True change-points $\tau^\star_{i_0}$};

        \node[oraclebox] (oracle) at (-6.2, -2.5)
        {\textbf{Oracle}\\
        Two-sample test (MMD)\\
        on segments around $\tau^\star_{i_0}$};
        
        \draw[arr] (sim.south) -- (oracle.north);

        \node[databox] (kcp) at (0, -2.5)
        {\textbf{KCP on the full series} $\llbracket 1,n\rrbracket$\\
        Estimated change-points $\widehat{\tau}_{i_0}$};
        
        \draw[arr] (sim.south) -- (kcp.north);

        \node[naivebox] (naive) at (-3.2, -5.5)
        {\textbf{Naive}\\
        Two-sample test (MMD)\\
        on segments around $\widehat{\tau}_{i_0}$};
        
        \draw[arr] (kcp.south) -- (naive.north);

        \node[mybox_2] (gtst) at (3.2, -5.5)
        {\textbf{GTST}\\
        Multiple two-sample tests (MMD)\\
        on grid-based segments\\
        around $\widehat{\tau}_{i_0}$};
        
        \draw[arr] (kcp.south) -- (gtst.north);

        \node[mybox] (hold) at (6.2, -2.5)
        {\textbf{Hold-out}\\
        1. KCP on subsample ($I_{\mathrm{det}}$): $\widehat{\tau}^{\mathrm{det}}$\\
        2. Inference (MMD) on subsample \\
        ($I_{\mathrm{inf}}$) around $\widehat{\tau}^{\mathrm{det}}$};
        
        \draw[arr] (sim.south) -- (hold.north);

        \end{tikzpicture}
        }
        \caption{Simulation pipeline and \emph{post hoc} methods.}
\label{fig:pipeline}
    \end{figure}
    
\paragraph{General performance trends and effective signal-to-noise ratio.}
Before outlining specific scenarios, it is important to emphasize the consistent qualitative pattern observed in all our experiments. Varying the mean jump magnitude, the location of the change-point, the sample size $n$, or the dimensionality fundamentally modulates the effective signal-to-noise ratio (SNR) of the initial Kernel Change-Point (KCP) detection step. Because the accuracy of our \emph{post hoc} tests is inherently constrained by the initial localization error, a low SNR naturally leads to poor change-point estimates, which in turn degrades downstream statistical power and attribution accuracy.

To maintain clarity, the main text focuses on two representative scenarios illustrating this dynamic: varying the mean jump (to reflect signal strength) and varying the minimal segment length (to reflect estimation accuracy and noise). A comprehensive set of supplementary simulations is provided in Appendix~\ref{sec:complementary_simulations}. These additional experiments explore the underlying SNR effect from broader perspectives, categorizing challenges as either signal-related (e.g., changes in variance, covariance, dimensionality, or sparsity) or estimation-related (e.g., kernel selection or the presence of multiple change-points).

\subsubsection{Signal strength: The effect of mean jump magnitude} 

\paragraph{Setting.} 
In this experiment, we investigate the influence of the mean-shift amplitude, denoted by \texttt{jump}, on the performance of the \emph{post hoc} procedures. We consider the general simulation framework defined in Section~\ref{subsubsec:general_setting}, where the change-point occurs at $\tau^\star = n/2$. Following this change-point, a mean shift affects exclusively the coordinates in Block~1 = $\{1,2,3\}$. Specifically, the distribution shifts to:
\[ X_t \sim \mathcal{N}(\mu, I_d), \] 
with 
\[ \mu = (\texttt{jump}, \texttt{jump}, \texttt{jump}, 0, 0)^\top. \] 
The magnitude \texttt{jump} varies over a predefined grid, while all other parameters are kept fixed, allowing us to evaluate the impact of the signal strength on the detection and attribution accuracy.

\paragraph{Results.} 
The performance metrics across the different procedures are displayed in Figure~\ref{fig:mean_jump_results} (see Table~\ref{tab:results_mean_jump} for specific numerical values at selected jump magnitudes).

\begin{figure}[tbp]
\centering

\begin{subfigure}{0.48\textwidth}
    \centering
    \includegraphics[width=\textwidth]{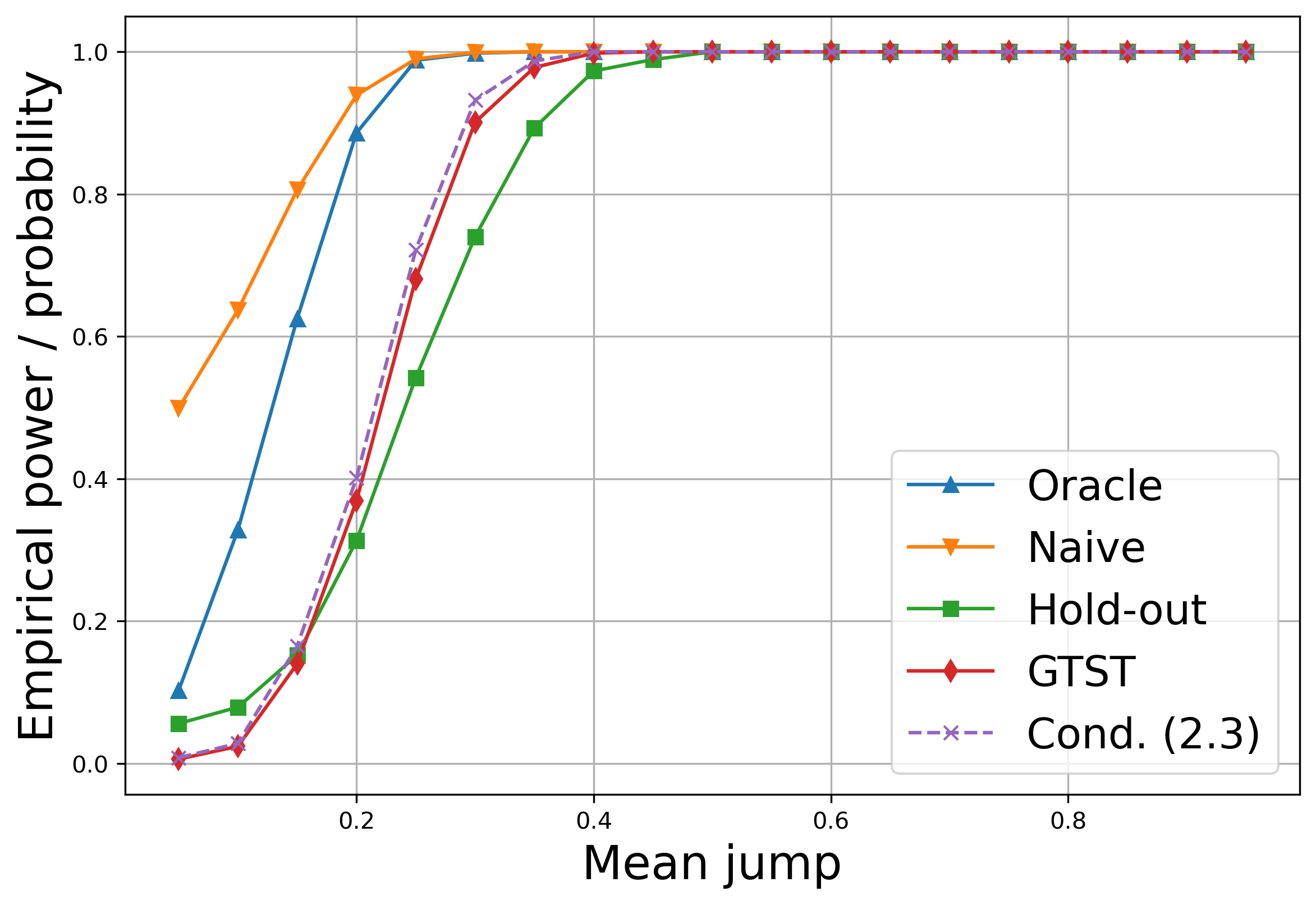}
    \caption{Power}
    \label{fig:mean_power}
\end{subfigure}
\hfill
\begin{subfigure}{0.48\textwidth}
    \centering
    \includegraphics[width=\textwidth]{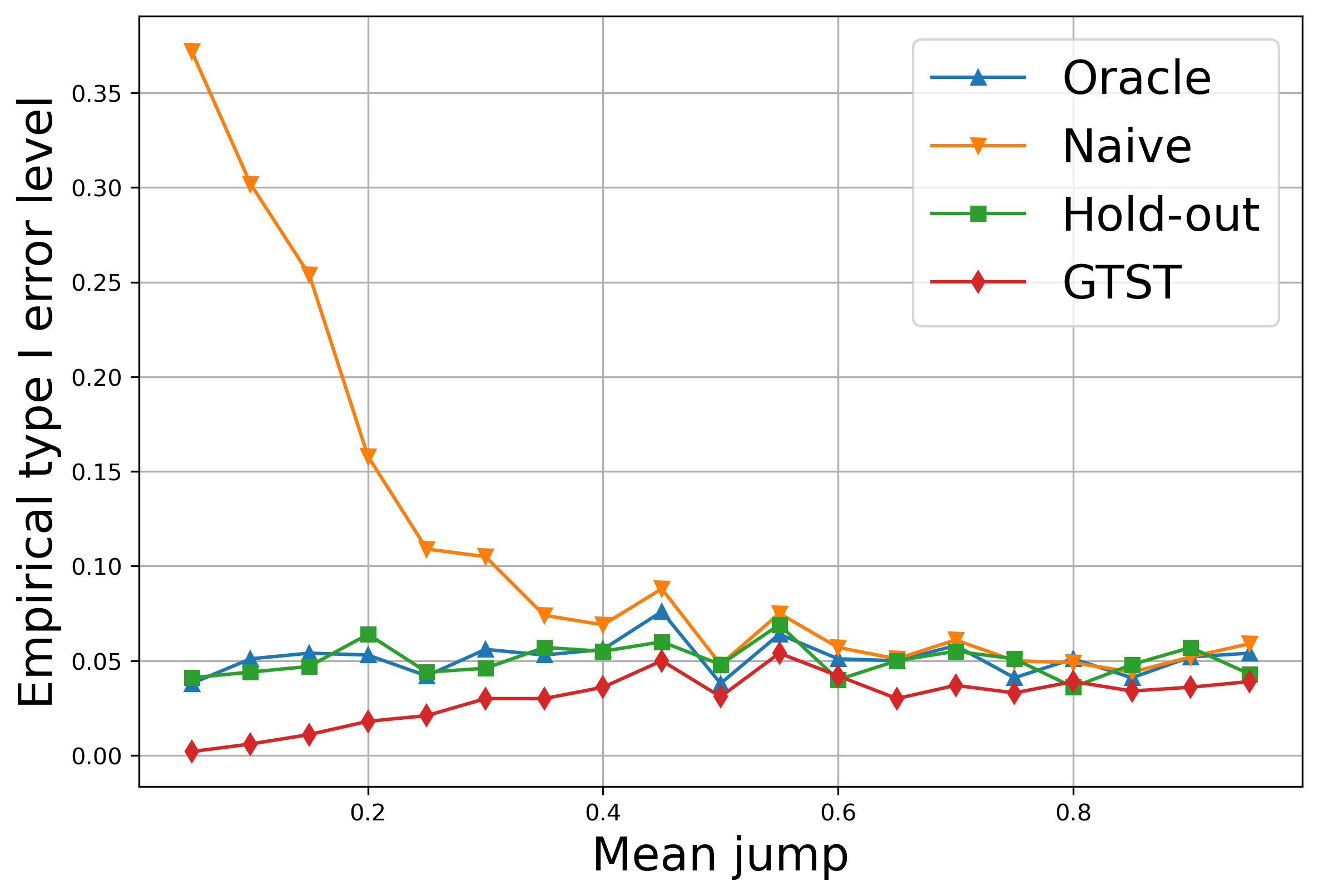}
    \caption{Type I error level}
    \label{fig:mean_level}
\end{subfigure}

\vspace{0.5cm}

\begin{subfigure}{0.5\textwidth}
    \centering
    \includegraphics[width=\textwidth]{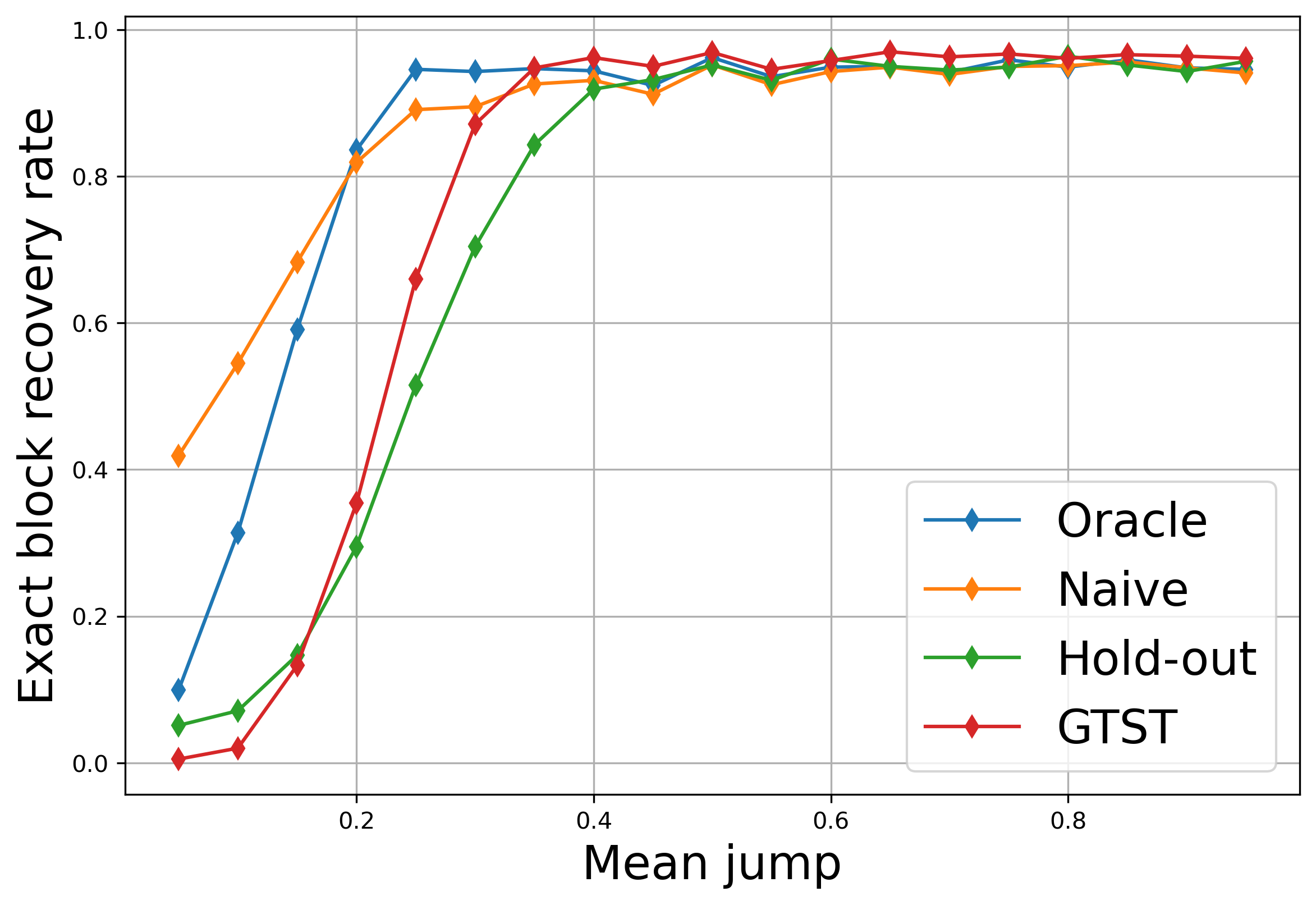}
    \caption{Exact block recovery rate}
    \label{fig:mean_support}
\end{subfigure}

\caption{
Performance of the \emph{post hoc} procedures as a function of the mean jump magnitude. (a) Empirical power with the probability that Condition~\eqref{eq:delta_n'} holds (dashed line); (b) empirical Type~I error rate; (c) exact block recovery rate. The dashed line in (a) indicates the frequency of Condition~\eqref{eq:delta_n'} and provides an upper bound on the power of the GTST procedure.
}
\label{fig:mean_jump_results}
\end{figure}
%
%
%
\begin{table}[tbp]
\centering
\caption{Numerical summary of the performance metrics for selected mean jump magnitudes. These values correspond to specific points from the curves presented in Figure~\ref{fig:mean_jump_results}.}
\label{tab:results_mean_jump}
\resizebox{\textwidth}{!}{
\begin{tabular}{c|ccc|ccc|ccc|ccc}
\toprule
 & \multicolumn{12}{c}{Method} \\
\cmidrule(lr){2-13}
Mean change 
& \multicolumn{3}{c}{GTST} 
& \multicolumn{3}{c}{Hold-out} 
& \multicolumn{3}{c}{Naive} 
& \multicolumn{3}{c}{Oracle} \\
\cmidrule(lr){2-4} \cmidrule(lr){5-7} \cmidrule(lr){8-10} \cmidrule(lr){11-13}
& Power & Type I & Recovery
& Power & Type I & Recovery
& Power & Type I & Recovery
& Power & Type I & Recovery \\
\midrule
$0.15$
& 0.14 & 0.01 & 0.13
& 0.15 & 0.05 & 0.15
& 0.80 & 0.25 & 0.68
& 0.63 & 0.05 & 0.59 \\
$0.3$
& 0.90 & 0.03 & 0.87
& 0.74 & 0.05 & 0.70
& 1.00 & 0.10 & 0.89
& 1.00 & 0.06 & 0.94 \\
$0.4$
& 1.00 & 0.04 & 0.96
& 0.97 & 0.05 & 0.92
& 1.00 & 0.07 & 0.93
& 1.00 & 0.06 & 0.94 \\
\bottomrule
\end{tabular}
}
\end{table}

In these evaluations, the oracle procedure serves as a benchmark, as inference is performed at the true change-point location. As expected, its power increases rapidly with signal strength, reaching values close to one while maintaining stable Type~I error control. Even in this idealized setting, however, small mean shifts remain difficult to assess. For example, when the \texttt{jump} is $0.15$, the oracle's power is limited to $0.63$, reflecting the inherent difficulty of the two-sample test under weak signal conditions. For larger shifts (\texttt{jump} $\ge 0.3$), the oracle attains a high power approaching $1$, with a Type~I error controlled near the nominal level.

The naive procedure, which applies the test directly to the segments estimated by the KCP algorithm, achieves high power on the affected block, even for small mean shifts (e.g., $0.80$ at \texttt{jump} $= 0.15$). This occurs because the change-point detection algorithm explicitly searches for the split that maximizes the difference between adjacent segments. Performing the statistical test directly at this purely data-driven location amplifies the perceived signal. However, this apparent gain is misleading, as it results in severe inflation of the Type~I error, particularly in low-signal regimes (reaching $0.25$ for \texttt{jump} $= 0.15$). This clearly illustrates the danger of selection bias, often referred to as ``double dipping'' in the literature. Using the exact same data to both locate the change-point and assess its significance inevitably leads to false positive detections.

The hold-out strategy successfully restores valid Type~I error control by decoupling detection from inference. However, this safety measure results in a significant loss of power for small mean shifts, dropping to $0.3$ for \texttt{jump} $= 0.2$. This loss is driven by two compounding factors: first, sample splitting halves the data available for the change-point detection step, which degrades the localization accuracy. Second, the subsequent two-sample test suffers from both this increased localization error and the reduced sample size, drastically reducing its statistical power. As the jump amplitude increases, the change-point is estimated more accurately, and the power of the hold-out procedure gradually approaches that of the oracle method.

To properly evaluate the proposed GTST procedure, we must account for the minimal spacing requirement discussed in Remark~\ref{rem:minimal_spacing}. Because GTST inherently retains the null hypothesis when Condition~\eqref{eq:delta_n'} is violated, its empirical power is limited by the frequency at which this condition is satisfied. Consequently, in the power plot (Figure~\ref{fig:mean_power}), the dashed line represents this empirical frequency, thereby acting as a structural upper bound for the achievable power of the GTST.

Remarkably, the empirical power of the GTST procedure closely tracks this upper bound, confirming that whenever the condition is satisfied, the grid-based test reliably detects the distributional change. The GTST method provides an effective compromise: it safely maintains the Type~I error below the nominal level, while its power increases rapidly as the signal strength grows, overtaking the hold-out procedure for \texttt{jump} $\ge 0.2$ (e.g., $0.90$ vs $0.74$ at \texttt{jump} $= 0.3$).

As a direct consequence of these individual power and Type~I error profiles, the exact block recovery rate (Figure~\ref{fig:mean_support}) improves as the mean jump increases. By successfully combining signal detection with strict false alarm control, both the hold-out and GTST procedures achieve increasingly high recovery rates as the signal strength grows, successfully identifying the exact affected block.

These results reveal the distinct power profiles of the hold-out and GTST procedures in the single change-point setting. In practical applications, the magnitude of the mean jump is inherently unknown. Therefore, the choice of a \emph{post hoc} procedure must be guided by the standard statistical testing paradigm: first, ensure strict control of the Type~I error, and second, maximize statistical power. As established in Remark~\ref{rem:single_cp}, the hold-out strategy does not rely on Condition~\eqref{eq:delta_n'} in this specific single change-point setting. Consequently, it maintains a slight power advantage in configurations with an extremely small mean shift. Conversely, as soon as the jump reaches a moderate amplitude, allowing Condition~\eqref{eq:delta_n'} to be satisfied with high probability, GTST rapidly overtakes the hold-out approach. By exploiting the full sample rather than splitting it, GTST achieves significantly higher efficiency. Ultimately, while the hold-out method offers slight robustness for barely detectable jumps, GTST emerges as the globally preferable procedure due to its superior power capacity as soon as the signal becomes practically discernible.

\begin{remark}[Type~I error control]
An additional observation concerns the empirical Type~I error levels of the hold-out and GTST procedures. Although the theoretical analysis provides an upper bound of $\alpha_0+\alpha_1$ (with $\alpha_0=\alpha_1=0.05$), the empirical Type~I error observed in the simulations remains close to $0.05$ for the hold-out method and is bounded below $0.05$ for GTST. 
This behavior is expected for the hold-out procedure in the single change-point setting. Since the data used for the \emph{post hoc} test are independent of the data used for change-point localization, the resulting two-sample test remains exactly controlled at level $\alpha_1$.

For GTST, the bound $\alpha_0+\alpha_1$ is a worst-case guarantee. In practice, the procedure aggregates multiple local tests, which makes it somewhat conservative. Moreover, GTST only performs the test when the separability condition~\eqref{eq:delta_n'} holds; by construction, this regime corresponds to a sufficiently large minimal segment relative to the localization error, where the change-points are well separated and accurately localized. The localization error term $\alpha_0$ therefore rarely contributes, and the empirical Type~I error of GTST remains close to, or slightly below, $\alpha_1$.

\end{remark}
\subsubsection{Estimation accuracy: The effect of the smallest segment size}
\label{subsec:estimation_accuracy}

Having examined the impact of signal strength, we now shift our focus to the challenges associated with estimation accuracy and noise. In this experiment, we investigate how the true change-point location, governed by the parameter $\underline{\Lambda}_{\tau^\star}$, influences the results. This parameter represents the relative size of the smallest segment as a proportion of the total time series length $n$. To isolate this effect, we fix the mean jump amplitude at a moderate level (\texttt{jump} $=0.3$) and vary $\underline{\Lambda}_{\tau^\star}$ by shifting the change-point location towards the boundaries of the sequence.

This manipulation directly controls the number of observations available in the smallest segment, thereby significantly impacting the effective noise level of the problem. As $\underline{\Lambda}_{\tau^\star}$ decreases, fewer data points are available to estimate the pre- and post-change distributions. Consequently, the empirical $\text{MMD}_b^2$ estimator becomes highly unstable: its variance increases drastically as the sample size shrinks. This heightened variability inherently degrades the precision of the initial change-point localization, which in turn diminishes the statistical power of the subsequent \emph{post hoc} tests. All other simulation parameters remain identical to those of the general experimental framework.

\begin{figure}[tbp]
\centering

\begin{subfigure}{0.48\textwidth}
    \centering
    \includegraphics[width=\textwidth]{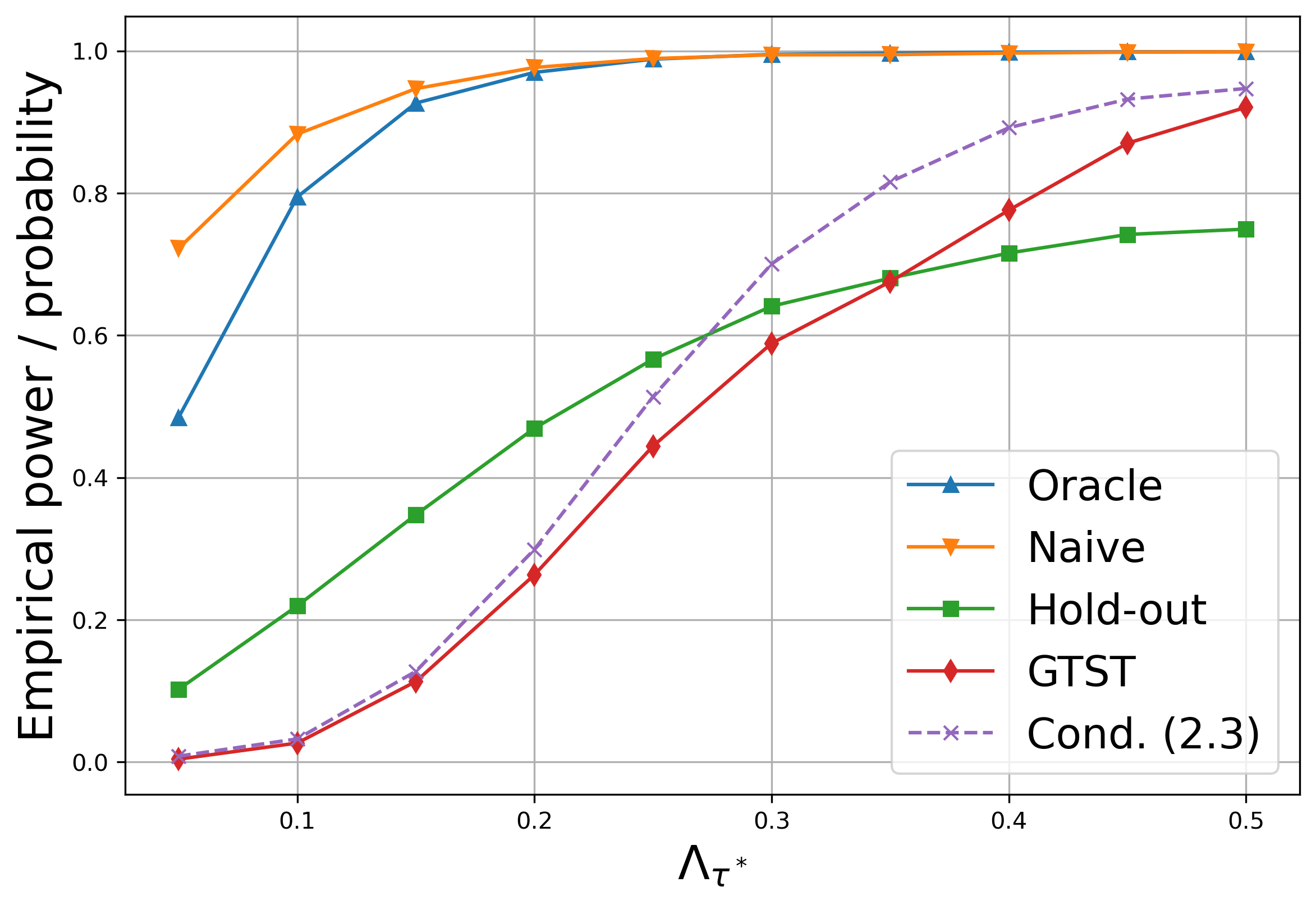}
    \caption{Power}
    \label{fig:pos_power}
\end{subfigure}
\hfill
\begin{subfigure}{0.48\textwidth}
    \centering
    \includegraphics[width=\textwidth]{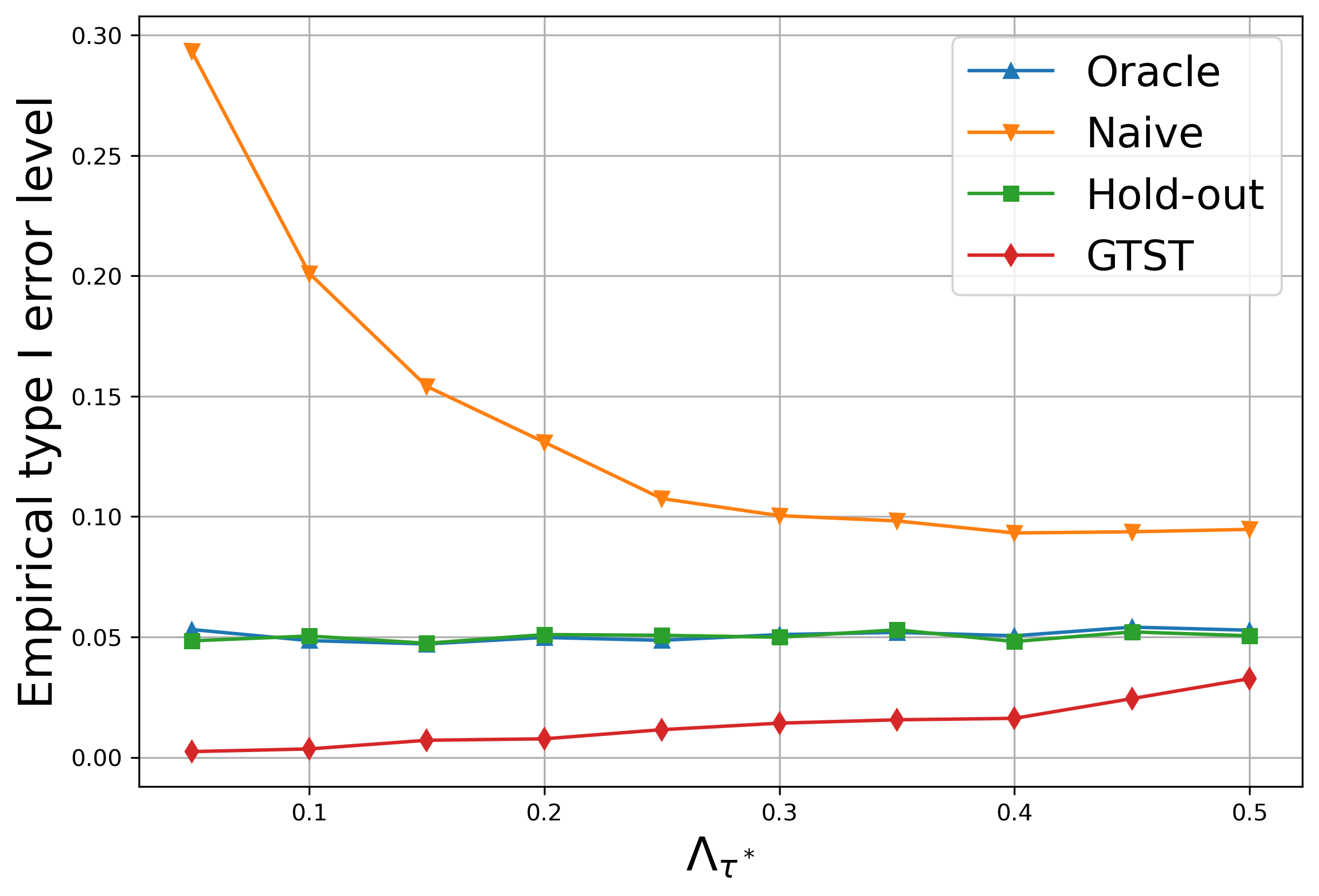}
    \caption{Type I error level}
    \label{fig:pos_level}
\end{subfigure}

\vspace{0.5cm}

\begin{subfigure}{0.5\textwidth}
    \centering
    \includegraphics[width=\textwidth]{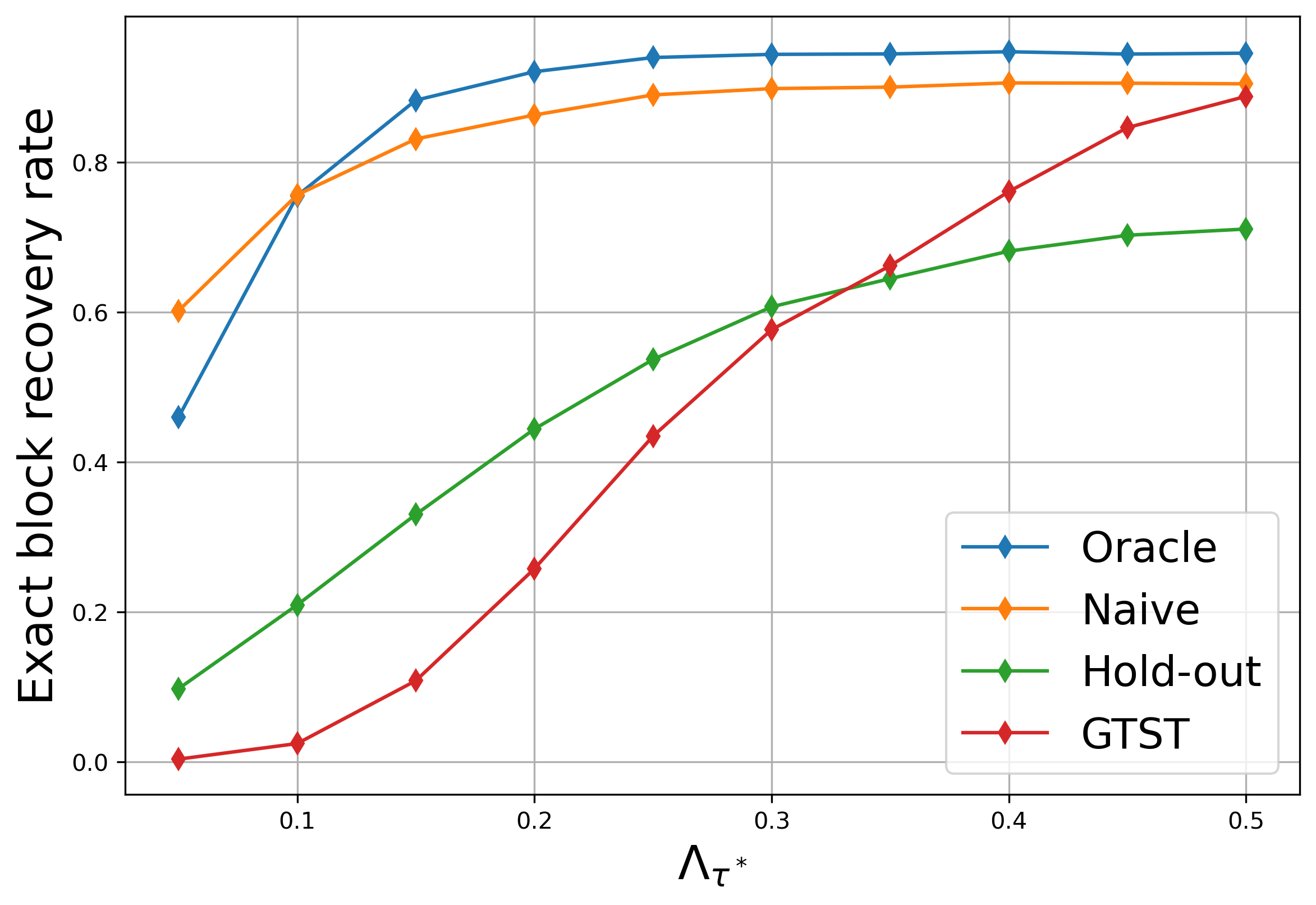}
    \caption{Exact block recovery rate}
    \label{fig:pos_support}
\end{subfigure}

\caption{
Performance of the \emph{post hoc} procedures as a function of the relative size of the smallest segment $\underline{\Lambda}_{\tau^\star}$, which represents the location of the change-point. The mean jump amplitude is fixed at $0.3$. (a) Empirical power, along with the empirical probability of satisfying Condition~\eqref{eq:delta_n'} (dashed line); (b) Empirical Type~I error level; (c) Exact block recovery rate. As in the previous experiment, the dashed line in panel (a) illustrates the empirical frequency with which Condition~\eqref{eq:delta_n'} is satisfied, acting as a structural upper bound for the power of the GTST approach.
}
\label{fig:pos_results}
\end{figure}

\begin{table}[tbp]
\centering
\caption{Numerical summary of the performance metrics for selected sizes of the smallest segment. These values correspond to specific points from the curves presented in Figure~\ref{fig:pos_results}.}
\label{tab:results_segment_size}
\resizebox{\textwidth}{!}{
\begin{tabular}{c|ccc|ccc|ccc|ccc}
\toprule
 & \multicolumn{12}{c}{Method} \\
\cmidrule(lr){2-13}
Segment size ($\underline{\Lambda}_{\tau^\star}$) 
& \multicolumn{3}{c}{GTST} 
& \multicolumn{3}{c}{Hold-out} 
& \multicolumn{3}{c}{Naive} 
& \multicolumn{3}{c}{Oracle} \\
\cmidrule(lr){2-4} \cmidrule(lr){5-7} \cmidrule(lr){8-10} \cmidrule(lr){11-13}
& Power & Type I & Recovery
& Power & Type I & Recovery
& Power & Type I & Recovery
& Power & Type I & Recovery \\
\midrule
$0.1$
& 0.03 & 0.01 & 0.02
& 0.22 & 0.05 & 0.21
& 0.88 & 0.20 & 0.76
& 0.79 & 0.05 & 0.76 \\
$0.3$
& 0.59 & 0.01 & 0.58
& 0.64 & 0.05 & 0.61
& 0.99 & 0.10 & 0.90
& 1.00 & 0.05 & 0.94 \\
$0.5$
& 0.92 & 0.03 & 0.89
& 0.75 & 0.05 & 0.71
& 1.00 & 0.09 & 0.91
& 1.00 & 0.05 & 0.95 \\
\bottomrule
\end{tabular}
}
\end{table}

\paragraph{Results.}
The performance metrics displayed in Figure~\ref{fig:pos_results} and summarized in Table~\ref{tab:results_segment_size} show that the location of the change-point strongly affects the performance of all \emph{post hoc} procedures. When the change-point lies near the center of the sequence ($\underline{\Lambda}_{\tau^\star} = 0.5$), both segments contain a sufficient number of observations. This leads to high power and accurate exact block recovery. In contrast, as the change-point approaches the boundaries, the size of the smallest segment decreases, causing the performance of all methods to deteriorate. This pattern is consistent across procedures and resembles the behavior observed in the mean-jump experiment. However, the limiting factor here is not the magnitude of the distributional change, but rather the amount of data available in the smallest segment. Small segments lead to increased variability in change-point localization and \emph{post hoc} testing, ultimately reducing detection power.

As seen earlier, GTST's power closely tracks the empirical probability that Condition~\eqref{eq:delta_n'} is satisfied. This condition implicitly requires the smallest segment to be sufficiently large relative to the localization error of the estimated change-point. When the change-point is close to a boundary, the number of observations on one side becomes too small, making accurate localization difficult and causing the condition to fail more frequently. When the condition fails, GTST by definition retains the null hypothesis $H_0$, which explains the observed loss of power. Conversely, once the smallest segment becomes sufficiently large, the condition is satisfied with high probability and GTST rapidly achieves high power while maintaining valid Type~I error control.

To conclude, these observations complement the mean-jump experiment by highlighting the two fundamental components of the effective signal-to-noise ratio in statistical detection. On one hand, the jump amplitude controls the true magnitude of the distributional shift, governing theoretical distances such as $\mathrm{MMD}(P,Q)=\|\mu_P-\mu_Q\|_{\mathcal H}$. On the other hand, the change-point location parameter $\underline{\Lambda}_{\tau^\star}$ governs the size of the smallest segment, dictating the effective sample size $n\,\underline{\Lambda}_{\tau^\star}$. Because the variance of the empirical estimator $\mathrm{MMD}_b^2$ strictly depends on these available sample sizes, successful detection requires both a sufficiently large theoretical signal and an adequate number of observations to estimate it reliably.

\subsection{Real data application}

To illustrate the practical application of the GTST and hold-out methods on real-world data, we use the \textit{Household Electric Power Consumption} dataset (UCI Machine Learning Repository) \cite{hebrail2012household}. This dataset contains electrical consumption measurements of a single household, originally recorded at a one-minute resolution.

Each observation at a given time \(t\) describes the energy state of the dwelling through several electrical variables. Among these, we focus on three sub-metering variables that measure the electricity consumption of specific circuits within the house. For notational convenience, we denote these variables as follows:

\begin{itemize}
    \item \(Sub_{kit}\) (originally \textit{Sub\_metering\_1}): consumption associated with the kitchen (oven, dishwasher, microwave),
    \item \(Sub_{ldy}\) (originally \textit{Sub\_metering\_2}): consumption associated with the laundry room and specific appliances (washing machine, tumble-dryer, refrigerator),
    \item \(Sub_{thm}\) (originally \textit{Sub\_metering\_3}): consumption related to the thermal equipment of the dwelling (electric water heater, heating or air conditioning).
\end{itemize}
The observed multivariate vector at time \(t\) is thus defined by:
\[
X_t = \left(Sub_{kit, t},\, Sub_{ldy, t},\, Sub_{thm, t}\right).
\]
These variables provide a \textit{functional decomposition} of the household's electricity consumption, making it easier to attribute detected structural breaks to specific usage patterns. To exploit this inherent structure within our framework, we partition the variables into two distinct blocks:
\begin{itemize}
    \item \textbf{Block 1}: \((Sub_{kit},\, Sub_{ldy})\), representing domestic uses.
    \item \textbf{Block 2}: \((Sub_{thm})\), representing persistent thermal uses.
\end{itemize}
This partition contrasts two fundamental types of energy behavior. The first block groups \textit{domestic uses}, characterized by occasional activations directly linked to human activity and appliance usage. The second block corresponds to \textit{thermal uses}, which are typically more regular and prone to exhibit persistent operating regimes (thermostatic cycles) independent of immediate human presence. With this structure, we can examine whether the change-points detected in the multivariate series are primarily driven by shifts in domestic behaviors or by changes in the thermal consumption of the dwelling.

\subsubsection{Experimental setup and visual expectations}
To precisely evaluate the detection capabilities of our approach, we isolate a specific continuous time window from Sunday at 20:00 to Monday at 07:00. At a 1-minute resolution, this period yields 661 observations. This specific timeframe provides an ideal testing ground for structural break detection. Human behavior during the day is inherently highly stochastic, generating constant distributional shifts. By focusing on the transition into the night period, we isolate automated, machine-driven behaviors from random human interactions.

Visually, as shown in Figure~\ref{fig:power_consumption_break}, this period reveals interesting deterministic patterns. In Block 1, while occasional human-driven activities like cooking ($Sub_{kit}$) remain completely flat at zero, the $Sub_{ldy}$ variable exhibits continuous, periodic spikes throughout the entire night. Given the physical appliances monitored by this sub-meter, this highly regular pulse is the classic signature of a refrigerator's compressor cycling on and off to maintain its internal temperature. Conversely, the thermal equipment ($Sub_{thm}$, Block 2), which operates at a high and persistent regime during the evening (e.g., heating the house or charging the water tank), abruptly drops to zero at exactly 02:11 AM once its automated cycle is complete. Consequently, we expect the algorithms to detect a structural break late in the night, and correctly attribute this break exclusively to the thermal regime shift (Block 2). Simultaneously, the algorithms should recognize that the intermittent background behavior of Block 1 (driven by the refrigerator) remains structurally unchanged across the estimated change-point.

\subsubsection{Detection results and comparison}
To evaluate the performance of our framework in this real-world scenario, we take two steps: first, we estimate change-points, and then we perform structural testing. We maintain the functional variables in their original physical scale (watt-hour) to preserve their relative energy magnitudes during distance computations.

For the estimation step, we apply the Kernel Change-Point (KCP) detection algorithm equipped with a Gaussian kernel. The kernel bandwidth is data-driven, selected via the standard median heuristic. This procedure successfully detects a single major structural break estimated at index $\widehat{\tau}$ (corresponding to 02:11 AM), as visually confirmed in Figure~\ref{fig:power_consumption_break}.

Following the estimation, we apply three different structural testing procedures to assign the break to the correct functional blocks: the naive method (testing directly on the estimated break point), the hold-out splitting method, and the proposed GTST procedure. For the localized tests (Hold-out and GTST), the uncertainty margin $\delta_n$ is calibrated using the constant $C_0 = 0.15$. Furthermore, the GTST procedure explores the search space using a dynamic grid step defined as $\eta = \max(\lfloor \delta_n / 2 \rfloor, 1)$. Finally, all structural tests are conducted at the nominal significance levels $\alpha_0 = \alpha_1 = 0.05$.

\begin{figure}[tbp]
    \centering
    \includegraphics[width=\textwidth]{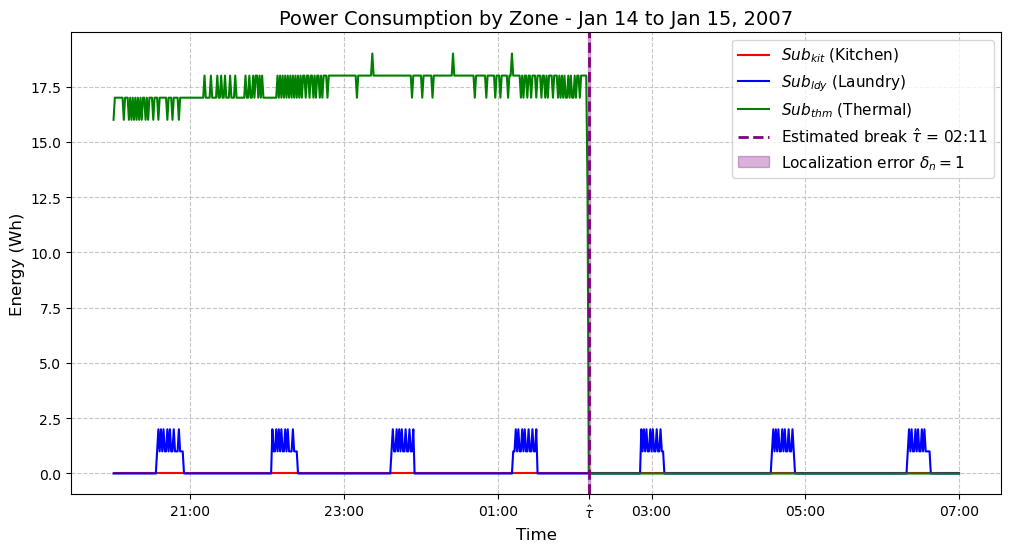}
    \caption{Multivariate time series of household electric power consumption from January 14 (20:00) to January 15 (07:00). The signal is decomposed into three functional variables: $Sub_{kit}$ (kitchen), $Sub_{ldy}$ (laundry), and $Sub_{thm}$ (thermal uses). The vertical dashed line indicates the estimated structural break $\hat{\tau}$ at 02:11, corresponding to the abrupt shutdown of the thermal equipment (Block 2). The shaded region represents the localization error margin ($\delta_n = 1$).}
    \label{fig:power_consumption_break}
\end{figure}
The core objective of this application is to evaluate whether the methods can correctly identify which functional blocks are responsible for the detected change, i.e., to assess the \textit{exact structural recovery} of the break. Based on visual inspection of the signal (Figure~\ref{fig:power_consumption_break}), the structural shift is entirely driven by the thermal usage ($Sub_{thm}$, Block 2) shutting down abruptly. Meanwhile, the behavior of the domestic block ($Sub_{kit}, Sub_{ldy}$, Block 1) remains stationary across the estimated change-point, maintaining its periodic cyclic pattern.

The results of the \emph{post hoc} attribution tests are presented in Table~\ref{tab:posthoc_power}.

\begin{table}[h!]
\centering
\caption{Post hoc tests around the detected structural break (Household Power Consumption)}
\label{tab:posthoc_power}
\begin{tabular}{lllll}
\toprule
\textbf{Block} & \textbf{Method} & \textbf{MMD Score} & \textbf{p-value} & \textbf{Conclusion} \\
\midrule
Block 1 ($Sub_{kit},Sub_{ldy}$) & Naive   & 0.017 & 0.516 & No change \\
Block 1 ($Sub_{kit},Sub_{ldy}$) & GTST    & 0.019 & 0.452 & No change \\
Block 1 ($Sub_{kit},Sub_{ldy}$) & Hold-out& 0.012 & 0.781 & No change \\
\midrule
Block 2 ($Sub_{thm}$) & Naive   & 1.371 & 0.001 & Change detected \\
Block 2 ($Sub_{thm}$) & GTST    & 1.365 & 0.001 & Change detected \\
Block 2 ($Sub_{thm}$) & Hold-out& 1.367 & 0.001 & Change detected \\
\bottomrule
\end{tabular}
\end{table}

As detailed in Table~\ref{tab:posthoc_power}, all three methods (Naive, Hold-out, and GTST) successfully achieve the exact recovery of the underlying structural change:
\begin{itemize}
    \item They correctly identify a significant distributional shift in \textbf{Block 2} (thermal usage), decisively rejecting the null hypothesis ($p = 0.001$).
    \item They correctly assign zero variation to \textbf{Block 1} (domestic uses), properly recognizing that the periodic pattern of the laundry cycles remains statistically unaffected by the structural break.
\end{itemize}

This application illustrates the practical use and interpretability of the framework on real data: in a clear scenario, all methods correctly attribute the break to the thermal block while leaving the domestic block unaffected. Since the break here is strong and well localized, the example does not discriminate between the methods — including the naive one — and the validity guarantees that distinguish them are established in the simulation section.

\section{Conclusion}
\label{sec:conclusion}

In this paper, we studied the problem of \emph{post hoc} explainability in multivariate change-point detection. While many modern detection procedures can reliably estimate the locations of structural changes in a time series, they typically fail to provide insight into which specific variables are responsible for these shifts. Addressing this attribution problem is crucial in numerous real-world applications, where understanding the physical or statistical origin of a detected change is just as important as pinpointing its occurrence.

To bridge this gap, we introduced a general framework for the \emph{post hoc} attribution of distributional changes. Our approach operates in a structured setting where the coordinates of the multivariate series are partitioned into predefined blocks, and the objective is to statistically determine whether the detected change manifests within each block. The fundamental challenge of this task lies in selection bias: performing inference at a purely data-driven location creates a strict dependency between the localization step and the subsequent statistical test, a phenomenon that systematically invalidates classical statistical guarantees.

To overcome this issue, we proposed two distinct procedures---the hold-out strategy and the Grid-based Two-Sample Test (GTST)---that enable valid \emph{post hoc} inference while accounting for the uncertainty induced by change-point estimation. Both procedures provide strong theoretical guarantees for Type~I error control under assumptions regarding localization accuracy. Through extensive simulation studies, we demonstrated that while a naive testing approach systematically fails to control false discoveries, both of our proposed methods achieve valid error control. The results further highlighted the trade-offs between statistical power and sample efficiency: while the hold-out procedure maintains slight robustness for barely detectable signals, GTST rapidly emerges as the most powerful strategy as soon as the separability condition~\eqref{eq:delta_n'} is satisfied.

A key strength of the proposed framework is its broad flexibility. Developed within a nonparametric setting, the methodology can successfully identify a wide range of structural shifts, including mean, variance, and covariance changes. Furthermore, the framework is largely agnostic to the specific algorithms employed: it can be seamlessly combined with any two-sample test that provides valid level control, alongside any change-point detection procedure satisfying basic localization guarantees. If a practitioner aims to detect a specific type of distributional change, our method becomes particularly powerful when combined with a tailor-made test statistic or a domain-specific kernel.

Overall, this work formalizes a statistically principled approach to \emph{post hoc} explainability in multivariate change-point detection. The proposed framework contributes to more interpretable change-point analysis and opens new directions at the intersection of change-point detection and interpretable statistical inference.

Several directions for future work may be considered. First, establish theoretical guarantees for the power of the GTST and hold-out procedures. Second, extend the framework to automatically identify the coordinates that contribute to the detected changes in higher-dimensional settings. Finally, study the case where the number of change-points $\kappa^\star$ is unknown and must be estimated from the data.

\section*{Acknowledgments}
The authors would like to thank Gilles Blanchard for insightful discussions.

\begin{appendix}
\section{Proofs of the Technical Lemmas}\label{app:all_proofs}
\subsection{Proof of Lemma \ref{lem:IC}}\label{app:proof_IC}
\begin{proof}[Proof of Lemma \ref{lem:IC}]
Points 1., 2., and 3. directly follow from Lemma 3.1 in \cite{Garreau2018}.
For 4., for all $i$ in $\llbracket 0,\kappa^\star-1 \rrbracket$ 
we have
\[
n\, \cdot \, \underline{\Lambda}_{\tau^\star} \leq |\tau^\star_i - \tau^\star_{i+1}|.
\]
By adding $\hat{\tau}_i$ and $\hat{\tau}_{i+1}$ and applying two triangle inequalities, with  Assumption \ref{assum:detection_event} we obtain
\[
n\, \cdot \, \underline{\Lambda}_{\tau^\star} \leq |\tau^\star_i - \tau^\star_{i+1}| \leq |\tau^\star_i - \hat{\tau}_i| + |\hat{\tau}_i - \hat{\tau}_{i+1}|+ |\hat{\tau}_{i+1} - \tau^\star_{i+1}|  \leq |\hat{\tau}_i - \hat{\tau}_{i+1} | + 2 \delta_n.
\]
If $\kappa^\star>2$, for all $i$ in $\llbracket 0,\kappa^\star-1 \rrbracket$ we have:
\[
|\hat{\tau}_{i+1} - \hat{\tau}_i| \geq n\, \cdot \, \underline{\Lambda}_{\tau^\star}  - 2\delta_n.
\]
If there is only one change-point ($\kappa^\star = 2$), under assumption \eqref{eq:delta_n} we directly have:
\[
|\hat\tau_1-\hat\tau_0|
=|\hat\tau_1-0|
\ge |\tau^\star_1-0| - |\hat\tau_1-\tau^\star_1|
\ge (\tau^\star_1-\tau^\star_0)-\delta_n
\ge n\,\underline{\Lambda}_{\tau^\star}-\delta_n
\]
and
\[
|\hat\tau_2-\hat\tau_1|
=|n-\hat\tau_1|
\ge |n-\tau^\star_1| - |\hat\tau_1-\tau^\star_1|
\ge (\tau^\star_2-\tau^\star_1)-\delta_n
\ge n\,\underline{\Lambda}_{\tau^\star}-\delta_n.
\]
Consequently, for all $i \in \llbracket 0 , \kappa^\star - 1\rrbracket$, we have $|\hat{\tau}_{i+1} - \hat{\tau}_i| \geq n\, \cdot \, \underline{\Lambda}_{\tau^\star}  - \delta_n'$.

\end{proof}

\subsection{Proof of Lemma \ref{lem:theorM1_bis}}\label{app:proof_theorM1_bis}
\begin{proof}[Proof of Lemma \ref{lem:theorM1_bis}]
\noindent\textbf{(1) Minimal lengths.}
By definition of $\underline{\Lambda}_{\tau^\star}$, for every $1\le i\le \kappa^\star$ we have:
\[
\tau_i^\star - \tau_{i-1}^\star \;\ge\; n \,\underline{\Lambda}_{\tau^\star}.
\]
Using the inequality $\lfloor a \rfloor - \lfloor b \rfloor \ge \lfloor a - b \rfloor$,
established in Lemma~\ref{lem:partentiere} (Appendix), and taking
$a = \tau_i^\star/\eta$ and $b = \tau_{i-1}^\star/\eta$, we obtain:
\[
\Bigl\lfloor \tfrac{\tau_i^\star}{\eta} \Bigr\rfloor 
-
\Bigl\lfloor \tfrac{\tau_{i-1}^\star}{\eta} \Bigr\rfloor
\;\ge\;
\Bigl\lfloor \tfrac{\tau_i^\star - \tau_{i-1}^\star}{\eta} \Bigr\rfloor
\;\ge\;
\Bigl\lfloor \tfrac{n\,\underline{\Lambda}_{\tau^\star}}{\eta} \Bigr\rfloor .
\tag{*}
\]
\noindent\emph{Case $i_0>1$.}
The left and right projected oracle segments are
\[
 S^{\star-}_{i_0,\eta}
 = \llbracket (\lfloor\tau_{i_0-1}^\star/\eta\rfloor+1)\eta,\;
            \lfloor\tau_{i_0}^\star/\eta\rfloor \eta \rrbracket
\quad \textrm{and} \quad
 S^{\star+}_{i_0,\eta}
 = \llbracket (\lfloor\tau_{i_0}^\star/\eta\rfloor+1)\eta,\;
            \lfloor\tau_{i_0+1}^\star/\eta\rfloor \eta \rrbracket,
\]
with lengths 
\[
|S^{\star-}_{i_0,\eta}|
 = \Bigl(
     \Bigl\lfloor\tfrac{\tau_{i_0}^\star}{\eta}\Bigr\rfloor
     -\bigl(\Bigl\lfloor\tfrac{\tau_{i_0-1}^\star}{\eta}\Bigr\rfloor+1\bigr)
   \Bigr)\eta + 1
\quad \textrm{and} \quad
|S^{\star+}_{i_0,\eta}|
 = \Bigl(
     \Bigl\lfloor\tfrac{\tau_{i_0+1}^\star}{\eta}\Bigr\rfloor
     -\bigl(\Bigl\lfloor\tfrac{\tau_{i_0}^\star}{\eta}\Bigr\rfloor+1\bigr)
   \Bigr)\eta + 1.
\]
Using $(*)$ with $i_0$ for $S^{\star-}_{i_0,\eta}$ and
$i_0+1$ for $S^{\star+}_{i_0,\eta}$, we obtain
\[
| S^{\star-}_{i_0,\eta}|,
\;| S^{\star+}_{i_0,\eta}|
\;\ge\;
\Bigl(\Bigl\lfloor\tfrac{n\,\underline{\Lambda}_{\tau^\star}}{\eta}\Bigr\rfloor -1\Bigr)\eta + 1
 \;=\; \ell_\eta.
\]
\noindent\emph{Case $i_0=1$.}
For the right segment we still have, by the argument above with $i_0 +1$, $|S^{\star+}_{1,\eta}| \;\ge\; \ell_\eta.$ For the left segment, the projected interval is $ S^{\star-}_{1,\eta}
 = \llbracket 1,\lfloor\tau_1^\star/\eta\rfloor \eta \rrbracket,$
so that $ | S^{\star-}_{1,\eta}|
 = \lfloor\tfrac{\tau_1^\star}{\eta}\rfloor \eta.$ Since $\tau_1^\star = \tau_1^\star-\tau_0^\star \ge n\,\underline{\Lambda}_{\tau^\star}$, we get $\lfloor\tfrac{\tau_1^\star}{\eta}\rfloor 
 \;\ge\; \Bigl\lfloor\tfrac{n\,\underline{\Lambda}_{\tau^\star}}{\eta}\Bigr\rfloor,$ hence $ | S^{\star-}_{1,\eta}|
 \;\ge\; \Bigl\lfloor\tfrac{n\,\underline{\Lambda}_{\tau^\star}}{\eta}\Bigr\rfloor \eta
 \;=\; \ell_\eta^-.$
\medskip

\noindent \textbf{(2) Inclusion inside the oracle segments.}
By definition of the projected endpoints,
\[
\Bigl\lfloor\tfrac{\tau^\star_{i_0}}{\eta}\Bigr\rfloor \eta \le \tau^\star_{i_0},
\qquad 
\left(\Bigl\lfloor\tfrac{\tau^\star_{i_0-1}}{\eta}\Bigr\rfloor 
      + \mathbf{1}_{\{i_0>1\}}\right)\eta
      + \mathbf{1}_{\{i_0=1\}}
\;\ge\;
\tau^\star_{i_0-1}+1,
\]
which shows that
$
S^{\star-}_{i_0,\eta} \subseteq S^{\star-}_{i_0}.
$
The proof for 
$S^{\star+}_{i_0,\eta} \subseteq S^{\star+}_{i_0}$
is identical.
\medskip

\noindent \textbf{(3) Inclusion in the restricted grid.} 
We have
\[
\left(S^{\star-}_{i_0, \eta}, S^{\star+}_{i_0, \eta}\right) 
=  
\left(\widetilde{S}^{-}_{i_0, \beta, \gamma, \zeta}, \widetilde{S}^{+}_{i_0,  \beta, \gamma, \zeta}\right),
\]
with
\begin{align*}
    \beta =& \left\lfloor \frac{\tau^\star_{i_0-1}}{\eta} \right\rfloor - \Biggl\lfloor \frac{\widehat\tau_{i_0-1}}{\eta}
        - \frac{\delta_n}{\eta}\,\mathbf 1_{\{i_0-1\neq 0\}}
      \Biggr\rfloor,\\
    \gamma =& \left\lfloor \frac{\tau^\star_{i_0}}{\eta} \right\rfloor  
    - \left\lfloor \frac{\hat{\tau}_{i_0}}{\eta} - \frac{\delta_n}{\eta} \right\rfloor ,\\
    \zeta =&  \left\lfloor \frac{\tau^\star_{i_0 + 1}}{\eta} \right\rfloor  
    - \Biggl\lfloor
            \frac{\widehat\tau_{i_0+1}}{\eta}
            - \frac{\delta_n}{\eta}\,\mathbf 1_{\{i_0+1\neq \kappa^\star\}}
          \Biggr\rfloor,
\end{align*}
and using $\hat{\tau}_0 = \tau^\star_0$, $\hat{\tau}_{\kappa^\star} = \tau^\star_{\kappa^\star}$.
Moreover, for all $i \in \{i_0 - 1, i_0, i_0 +1\}$, we have (using Lemma \ref{lem:IC}):
\[
\hat{\tau}_{i} - \delta_n\mathbf{1}_{\{i \notin \{0,\kappa^\star\}\}} 
\leq 
\tau_{i}^\star 
\leq 
\hat{\tau}_{i} + \delta_n\mathbf{1}_{\{i \notin \{0,\kappa^\star\}\}},
\]
which implies (using Lemma~\ref{lem:partentiere}) that
\[
\left\lfloor 
\frac{\hat{\tau}_{i}}{\eta} - \frac{\delta_n}{\eta}\mathbf{1}_{\{i \notin \{0,\kappa^\star\}\}} 
\right\rfloor 
\leq 
\left\lfloor \frac{\tau_{i}^\star}{\eta} \right\rfloor 
\leq 
\left\lfloor 
\frac{\hat{\tau}_{i}}{\eta} + \frac{\delta_n}{\eta}\mathbf{1}_{\{i \notin \{0,\kappa^\star\}\}} 
\right\rfloor ,
\]
and therefore
\[
\left\lfloor \frac{\tau_{i}^\star}{\eta} \right\rfloor 
\leq 
\left\lfloor 
\frac{\hat{\tau}_{i}}{\eta} - \frac{\delta_n}{\eta}\mathbf{1}_{\{i \notin \{0,\kappa^\star\}\}} 
\right\rfloor
+
\left\lceil \frac{2\delta_n}{\eta}\mathbf{1}_{\{i \notin \{0,\kappa^\star\}\}} \right\rceil .
\]
The localization assumption $|\widehat\tau_i - \tau^\star_i| \le \delta_n$ gives, for every $i\in\{i_0-1,i_0,i_0+1\}$,
\[
0 
\leq 
\left\lfloor \frac{\tau_{i}^\star}{\eta} \right\rfloor 
-
\left\lfloor 
\frac{\hat{\tau}_{i}}{\eta} - \frac{\delta_n}{\eta}\mathbf{1}_{\{i \notin \{0,\kappa^\star\}\}} 
\right\rfloor
\leq
\left\lceil \frac{2\delta_n}{\eta}\mathbf{1}_{\{i \notin \{0,\kappa^\star\}\}} \right\rceil
\leq
\left\lceil \frac{2\delta_n}{\eta} \right\rceil = m.
\]
Hence the indices $\beta,\gamma,\zeta$ used to define 
$(S^{\star,-}_{i_0,\eta},S^{\star+}_{i_0,\eta})$
all belong to $\{0,\dots,m\}$, so the pair $(S^{\star-}_{i_0,\eta}, S^{\star+}_{i_0,\eta})$ belongs to 
$\widetilde{\mathcal{E}}_{i_0, \eta}$.
Combined with point~(1), this implies
$
(S^{\star-}_{i_0,\eta}, S^{\star+}_{i_0,\eta})
\in \widetilde{\underline{\mathcal{E}}}_{i_0, \eta}.
$
\end{proof}

\subsection{Proof of Lemma \ref{lem:pi+}}\label{app:proof_pi+}
\begin{proof}[Proof of Lemma \ref{lem:pi+}]
For each $0 \le r \le \kappa^\star-1$, the minimal coverage condition ensures that $I_{\mathrm{det}}^{(r)} = I_{\mathrm{det}} \cap \llbracket \tau_r^\star + 1, \tau_{r+1}^\star \rrbracket$ is nonempty. Therefore, the projection $\pi^+(\tau_r^\star + 1)$ exists and lies in $I_{\mathrm{det}}^{(r)}$.
Since the oracle segments $S_r^\star = \llbracket \tau_r^\star + 1, \tau_{r+1}^\star \rrbracket$ are disjoint and ordered, and since each projection $\pi^+(\tau_r^\star + 1)$ lies in its respective segment, the sequence of projections is strictly increasing.
Finally, $I_{\mathrm{det}}$ is nonempty, so $\pi^-(n)$, defined as the last element of $I_{\mathrm{det}}$, is well-defined.
\end{proof}

\subsection{Proof of Lemma \ref{lem:pi-}}\label{app:proof_pi-}
\begin{proof}[Proof of Lemma \ref{lem:pi-}]
By definition, the projections $\pi^-(\tau^\star_r)$ and $\pi^+(\tau^\star_r+1)$ both belong to the detection set $I_{\mathrm{det}}$. Suppose, for the sake of contradiction, that there exists an index $i \in I_{\mathrm{det}}$ strictly between them, such that
\[
\pi^-(\tau^\star_r) < i < \pi^+(\tau^\star_r+1).
\]
By definition of the projection operators, $\pi^-(\tau^\star_r)$ is the largest element in $I_{\mathrm{det}}$ that is less than or equal to $\tau^\star_r$. Thus, since $i \in I_{\mathrm{det}}$, the inequality $i > \pi^-(\tau^\star_r)$ implies $i > \tau^\star_r$. Similarly, $\pi^+(\tau^\star_r+1)$ is the smallest element in $I_{\mathrm{det}}$ that is greater than or equal to $\tau^\star_r+1$, meaning that $i < \pi^+(\tau^\star_r+1)$ implies $i < \tau^\star_r+1$. Combining these deductions yields:
\[
\tau^\star_r < i < \tau^\star_r+1.
\]
Since $i$ and $\tau^\star_r$ are both integers, such a strict framing is impossible. This contradiction proves that no element of $I_{\mathrm{det}}$ can lie strictly between $\pi^-(\tau^\star_r)$ and $\pi^+(\tau^\star_r+1)$. Consequently, these two elements are consecutive in $I_{\mathrm{det}}$. 

By definition, we have $j_{\tau^{\star,\mathrm{det}}_{r}+1} = \pi^+(\tau^\star_r+1)$. Because $\pi^-(\tau^\star_r)$ is its immediate predecessor in the ordered set $I_{\mathrm{det}}$, it must correspond to the index $j_{\tau^{\star,\mathrm{det}}_{r}}$. We thus conclude that $j_{\tau^{\star,\mathrm{det}}_{r}} = \pi^-(\tau^\star_r)$.
\end{proof}

\subsection{Proof of Lemma \ref{lem:pi_det}}\label{app:proof_pi_det}
\begin{proof}[Proof of Lemma \ref{lem:pi_det}]
From the previous lemma, the projections $\pi^+(\tau_r^\star+1)$ are well-defined and strictly increasing for all $0 \le r \le \kappa^\star - 1$. By definition of these projected change-points, this implies
\[
\tau^{\star,\mathrm{det}}_0 < \tau^{\star,\mathrm{det}}_1 < \cdots < \tau^{\star,\mathrm{det}}_{\kappa^\star},
\]
so that each interval $\llbracket \tau^{\star,\mathrm{det}}_r + 1,\, \tau^{\star,\mathrm{det}}_{r+1} \rrbracket$ is nonempty.
Moreover, the reindexed series $\widetilde{Z}_i = Z_{j_i}$ contains only indices from $I_{\mathrm{det}}$. By construction, the indices $j_{\tau^{\star,\mathrm{det}}_r + 1}, \dots, j_{\tau^{\star,\mathrm{det}}_{r+1}}$ all belong to $I_{\mathrm{det}}^{(r)} \subset \llbracket \tau_r^\star + 1, \tau_{r+1}^\star \rrbracket$. Thus, the corresponding observations $(\widetilde{Z}_i)_{i \in \llbracket \tau^{\star,\mathrm{det}}_r + 1,\, \tau^{\star,\mathrm{det}}_{r+1} \rrbracket}$ are i.i.d.\ with distribution $P_{(r+1)}^\phi, \, \forall r$ such that $0 \le r \le \kappa^\star - 1$.
\end{proof}

\subsection{Proof of Lemma \ref{lem:tau_det}}\label{app:proof_tau_det}
\begin{proof}[Proof of Lemma \ref{lem:tau_det}]
First, let us establish a general bound on the index spacing. For any $1 \leq a\leq b \leq n_{\mathrm{det}}$, we have:
\[
j_b - j_a = \sum_{i = a}^{b-1} (j_{i+1} - j_i) \le \sum_{i = a}^{b-1} (q_{\mathrm{det}}+ 1) = (q_{\mathrm{det}}+ 1) (b - a).
\]
By symmetry, this implies that for all $1 \le a, b \le n_{\mathrm{det}}$, the following inequality holds:
\begin{equation}
\label{eq:spacing}
|j_b - j_a| \le (q_{\mathrm{det}}+ 1)\, |b - a|.
\end{equation}

Second, we bound the distance between the true change-point $\tau^\star_r$ and its exact detection projection $j_{\tau^{\star,\mathrm{det}}_{r}}$. By definition, we know that $j_{\tau^{\star,\mathrm{det}}_{r}} = \pi^-(\tau^\star_{r})$, which directly implies:
\[
j_{\tau^{\star,\mathrm{det}}_{r}} \leq \tau^\star_{r}.
\]
Furthermore, the next detection index $j_{\tau^{\star,\mathrm{det}}_{r}+1}$ is strictly greater than $\tau^\star_{r}$. Using the maximum gap property $j_{k+1} - j_k \le q_{\mathrm{det}} + 1$, we obtain:
\[
\tau^\star_{r} \leq j_{\tau^{\star,\mathrm{det}}_{r}+1} - 1 \leq j_{\tau^{\star,\mathrm{det}}_{r}} + (q_{\mathrm{det}} + 1) - 1 = j_{\tau^{\star,\mathrm{det}}_{r}} + q_{\mathrm{det}}.
\]
This provides the strict framing:
\begin{equation}
\label{eq:framing_tau}
0 \leq \tau^\star_{r} - j_{\tau^{\star,\mathrm{det}}_{r}} \leq q_{\mathrm{det}}.
\end{equation}

Finally, we introduce the estimated detection break $\hat{\tau}^{\mathrm{det}}_{r}$. We can decompose the total error as:
\[
\tau^{\star}_{r} - j_{\hat{\tau}^{\mathrm{det}}_{r}} = (\tau^{\star}_{r} - j_{\tau^{\star,\mathrm{det}}_{r}}) + (j_{\tau^{\star,\mathrm{det}}_{r}} - j_{\hat{\tau}^{\mathrm{det}}_{r}}).
\]
From inequality \eqref{eq:spacing} and assumption \eqref{eq:delta_nHoldout}, we can bound the second term:
\[
|j_{\tau^{\star,\mathrm{det}}_{r}} - j_{\hat{\tau}^{\mathrm{det}}_{r}}| \le (q_{\mathrm{det}}+1) \left| \tau^{\star,\mathrm{det}}_{r} - \hat{\tau}^{\mathrm{det}}_{r} \right| \le (q_{\mathrm{det}}+1)\delta_{n_{\mathrm{det}}} = \delta_{n}^{\mathrm{tot}}.
\]
Thus, $-\delta_{n}^{\mathrm{tot}} \le j_{\tau^{\star,\mathrm{det}}_{r}} - j_{\hat{\tau}^{\mathrm{det}}_{r}} \le \delta_{n}^{\mathrm{tot}}$. Combining this with the framing \eqref{eq:framing_tau} yields the desired lower and upper bounds:
\[
\tau^{\star}_{r} - j_{\hat{\tau}^{\mathrm{det}}_{r}} \ge 0 - \delta_{n}^{\mathrm{tot}} = -\delta_{n}^{\mathrm{tot}},
\]
\[
\tau^{\star}_{r} - j_{\hat{\tau}^{\mathrm{det}}_{r}} \le q_{\mathrm{det}} + \delta_{n}^{\mathrm{tot}}.
\]
This completes the proof.
\end{proof}

\subsection{Proof of Lemma \ref{lem:S_S+}}\label{app:proof_S_S+}
\begin{proof}[Proof of Lemma \ref{lem:S_S+}]
We proceed by establishing the left and right bounds for the segments $S_{i_0}^{-,\mathrm{inf}}$ and $S_{i_0}^{+,\mathrm{inf}}$ separately.

\medskip\noindent
\textbf{Left and right bounds for $S_{i_0}^{-,\mathrm{inf}}$:} \\
By definition, we can write $S_{i_0}^{-,\mathrm{inf}} = \llbracket L^-_{i_0}\, , \, R^-_{i_0} \rrbracket \cap I_{\mathrm{inf}}$, where the boundaries are given by:
\[
L^-_{i_0} := j_{\hat{\tau}^{\mathrm{det}}_{i_0-1}} + 1 + (\delta_{n}^{\mathrm{tot}}+q_{\mathrm{det}})\cdot\mathbf{1}_{\{i_0 - 1 \neq 0\}}, \quad \text{and} \quad R^-_{i_0} := j_{\hat{\tau}^{\mathrm{det}}_{i_0}+1}-1.
\]
First, we show that $L^-_{i_0} \ge \tau^\star_{i_0-1}+1$. We distinguish two cases:
\begin{itemize}
    \item \textbf{Case $i_0 = 1$:} The indicator function is zero, yielding $L^-_{1} = j_{\hat{\tau}^{\mathrm{det}}_0} + 1 = j_0 + 1 = 1$. Since $\tau^\star_0 = 0$, we have $L^-_{1} = \tau^\star_0 + 1$, which satisfies the condition.
    \item \textbf{Case $i_0 \ge 2$:} By Lemma~\ref{lem:tau_det}, we know that $\tau^\star_{i_0-1} - j_{\hat{\tau}^{\mathrm{det}}_{i_0-1}} \le \delta_{n}^{\mathrm{tot}} + q_{\mathrm{det}}$. Rearranging this inequality directly yields:
    \[
    \tau^\star_{i_0-1} + 1 \le j_{\hat{\tau}^{\mathrm{det}}_{i_0-1}} + 1 + \delta_{n}^{\mathrm{tot}} + q_{\mathrm{det}} = L^-_{i_0}.
    \]
\end{itemize}
Thus, in all cases, we conclude that $L^-_{i_0} \ge \tau^\star_{i_0-1}+1$.

Next, we prove that $R^-_{i_0} \le \tau^\star_{i_0+1}$. Using the strict monotonicity of the sequence $(j_k)$ along with assumption~\eqref{eq:delta_nHoldout} and \eqref{eq:delta_n'Holdout}, we distinguish two cases depending on the true number of segments $\kappa^\star$:
\begin{itemize}
    \item \textbf{Case $\kappa^\star = 2$:} We have $1 \leq i_0 \leq \kappa^\star-1$, so $i_0 = 1$. Since the sequence $(j_k)$ is bounded by the total sample size $n$, we obtain:
    \[
    R^-_{1} = j_{\hat{\tau}^{\mathrm{det}}_{1}+1} - 1 \le n = \tau^\star_2 = \tau^\star_{i_0+1}.
    \]
    
    \item \textbf{Case $\kappa^\star \ge 3$:} We use the estimation error bound $\hat{\tau}^{\mathrm{det}}_{i_0} \le \tau^{\star,\mathrm{det}}_{i_0} + \delta_{n_{\mathrm{det}}}$, which yields $R^-_{i_0} \le j_{\tau^{\star,\mathrm{det}}_{i_0} + \delta_{n_{\mathrm{det}}} + 1} - 1$. The analysis splits based on $\delta_{n_{\mathrm{det}}}$:
    \begin{itemize}
        \item If $\delta_{n_{\mathrm{det}}} \ge 1$, then by definition $\delta_{n_{\mathrm{det}}} + 1 \le \delta'_{n_{\mathrm{det}}}$. We have:
        \[
        R^-_{i_0} \le j_{\tau^{\star,\mathrm{det}}_{i_0} + \delta_{n_{\mathrm{det}}} + 1} - 1 \le j_{\tau^{\star,\mathrm{det}}_{i_0} + \delta'_{n_{\mathrm{det}}}} - 1 \le j_{\tau^{\star,\mathrm{det}}_{i_0+1}} - 1.
        \]
        
        \item If $\delta_{n_{\mathrm{det}}} = 0$, this yields:
        \[
        R^-_{i_0} \le j_{\tau^{\star,\mathrm{det}}_{i_0} + \delta_{n_{\mathrm{det}}} + 1} - 1 = j_{\tau^{\star,\mathrm{det}}_{i_0} + 1} - 1 \le j_{\tau^{\star,\mathrm{det}}_{i_0+1}} - 1.
        \]
    \end{itemize}
    This last inequality holds because $\tau^{\star,\mathrm{det}}_{i_0} + 1 \le \tau^{\star,\mathrm{det}}_{i_0+1}$ according to Lemma~\ref{lem:pi+}, and the sequence of indices $(j_k)$ is strictly increasing. In both sub-cases, since $j_{\tau^{\star,\mathrm{det}}_{i_0+1}} = \pi^+(\tau^\star_{i_0+1}+1)$, we deduce:
    \[
    R^-_{i_0} \le j_{\tau^{\star,\mathrm{det}}_{i_0+1}} - 1 < \pi^+(\tau^\star_{i_0+1}+1).
    \]
    Because $\pi^+(\tau^\star_{i_0+1}+1)$ is the smallest element in $I_{\mathrm{det}}$ greater than or equal to $\tau^\star_{i_0+1}+1$, any integer strictly less than it must be bounded above by $\tau^\star_{i_0+1}$. Therefore, $R^-_{i_0} \le \tau^\star_{i_0+1}$.
\end{itemize}
It follows that every $j \in S_{i_0}^{-,\mathrm{inf}}$ satisfies:
\[
\tau^\star_{i_0-1}+1 \le j \le \tau^\star_{i_0+1}.
\]

\medskip\noindent
\textbf{Left and right bounds for $S_{i_0}^{+,\mathrm{inf}}$:} \\
Similarly, we define $S_{i_0}^{+,\mathrm{inf}} = \llbracket L^+_{i_0}\, , \, R^+_{i_0} \rrbracket \cap I_{\mathrm{inf}}$, with:
\[
L^+_{i_0} := j_{\hat{\tau}^{\mathrm{det}}_{i_0}+1}, \quad \text{and} \quad R^+_{i_0} := j_{\hat{\tau}^{\mathrm{det}}_{i_0+1}} - \delta_{n}^{\mathrm{tot}}\cdot\mathbf{1}_{\{i_0 + 1 \neq \kappa^\star\}}.
\]
To show that $L^+_{i_0} \ge \tau^\star_{i_0-1} + 1$, we rely on assumption~\eqref{eq:delta_nHoldout}, \eqref{eq:delta_n'Holdout} and the definition of the projection $\pi^+$:
\[
L^+_{i_0} = j_{\hat{\tau}^{\mathrm{det}}_{i_0}+1} 
\ge j_{\tau^{\star,\mathrm{det}}_{i_0} - \delta_{n_{\mathrm{det}}} + 1} 
> j_{\tau^{\star,\mathrm{det}}_{i_0-1} + 1} 
= \pi^+(\tau^\star_{i_0-1}+1) 
\ge \tau^\star_{i_0-1} + 1.
\]
Finally, we prove that $R^+_{i_0} \le \tau^\star_{i_0+1}$. We again distinguish two cases:
\begin{itemize}
    \item \textbf{Case $i_0 = \kappa^\star - 1$:} The indicator function is zero, meaning $R^+_{\kappa^\star-1} = j_{\hat{\tau}^{\mathrm{det}}_{\kappa^\star}}$. By definition, this yields $R^+_{\kappa^\star-1} \le  \pi^-(n) = j_{\tau^{\star,\mathrm{det}}_{\kappa^\star}}  \le n = \tau^\star_{\kappa^\star}$, so the claim holds.
    \item \textbf{Case $i_0 \le \kappa^\star - 2$:} Using Lemma~\ref{lem:tau_det}, we have:
    \[
    \tau^\star_{i_0+1} - R^+_{i_0} \ge \tau^\star_{i_0+1} - \left( j_{\hat{\tau}^{\mathrm{det}}_{i_0+1}} - \delta_{n}^{\mathrm{tot}} \right) \ge 0.
    \]
\end{itemize}
We thus conclude that $R^+_{i_0} \le \tau^\star_{i_0+1}$ in all configurations. 

Therefore, combining these bounds with the earlier result for $S_{i_0}^{-,\mathrm{inf}}$, we obtain:
\[
\forall j \in S_{i_0}^{-,\mathrm{inf}} \cup S_{i_0}^{+,\mathrm{inf}}, \quad \tau^\star_{i_0-1}+1 \le j \le \tau^\star_{i_0+1},
\]
which completes the proof.
\end{proof}

\subsection{Proof of the level control (\ref{subsec:thresholdd})}
\label{app:threshold}
\begin{proof}[Proof of the level control]
Under $H_0^{\phi}$, the observations $(Z_i)_{i \in S_1 \cup S_2}$ are i.i.d., so
$p=q$ and $\mathrm{MMD}[p,q]=0$. With $0 \le k \le M^2$, Theorem~7 of
\cite{gretton2012} gives, for every $\varepsilon>0$,
\[
P_{H_0^{\phi}}\!\left\{
\mathrm{MMD}_b\big((Z_i)_{i \in S_1},(Z_i)_{i \in S_2}\big)
> 2\!\left( \Big(\tfrac{M^2}{|S_1|}\Big)^{1/2} + \Big(\tfrac{M^2}{|S_2|}\Big)^{1/2} \right) + \varepsilon
\right\}
\le 2\exp\!\left( \frac{-\varepsilon^2 |S_1||S_2|}{2M^2(|S_1|+|S_2|)} \right).
\]
Choosing $\varepsilon$ so that the right-hand side equals $\alpha_1$ gives
$\varepsilon = \big(2M^2(\tfrac{1}{|S_1|}+\tfrac{1}{|S_2|})\log\tfrac{2}{\alpha_1}\big)^{1/2}$.
Writing $s_\wedge := \min(|S_1|,|S_2|)$ and bounding both terms by the smaller
segment, via $(\tfrac{M^2}{|S_i|})^{1/2}\le(\tfrac{M^2}{s_\wedge})^{1/2}$ and
$\tfrac{1}{|S_1|}+\tfrac{1}{|S_2|}\le\tfrac{2}{s_\wedge}$, we define the threshold:
\[
\mathrm{threshold}_{\alpha_1} := \sqrt{\frac{M^2}{\min(|S_1|,|S_2|)}}\left(4+2\sqrt{\log\tfrac{2}{\alpha_1}}\right).
\]
Since this exceeds the exact threshold above,
$P_{H_0^{\phi}}\big\{\mathrm{MMD}_b\big((Z_i)_{i \in S_1},(Z_i)_{i \in S_2}\big)>\mathrm{threshold}_{\alpha_1}\big\}\le\alpha_1$.
\end{proof}
\section{Additional proofs}
\subsection{Complements to the GTST method}
\begin{lemma}\label{lem:partentiere}
Let $a, b \in \mathbb{R}$. Then we have:
\[
\lfloor a - b\rfloor \leq \lfloor a \rfloor - \lfloor b \rfloor \leq \lceil a - b \rceil.
\]
\end{lemma}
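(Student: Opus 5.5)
The plan is to reduce both inequalities to the defining property of the floor function, $x-1<\lfloor x\rfloor\le x$, together with the fact that $\lfloor x+k\rfloor=\lfloor x\rfloor+k$ for every integer $k$. Write $a=\lfloor a\rfloor+\{a\}$ and $b=\lfloor b\rfloor+\{b\}$, where the fractional parts satisfy $\{a\},\{b\}\in[0,1)$. Then
\[
a-b=\bigl(\lfloor a\rfloor-\lfloor b\rfloor\bigr)+\bigl(\{a\}-\{b\}\bigr),
\]
where the first bracket is an integer $k$ and the second bracket $\theta:=\{a\}-\{b\}$ lies in the open interval $(-1,1)$.

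For the lower bound, observe that $a-b=k+\theta<k+1$. Hence $\lfloor a-b\rfloor\le k$, since $\lfloor a-b\rfloor$ is an integer strictly smaller than $k+1$. Equivalently, $\lfloor a-b\rfloor=k+\lfloor\theta\rfloor$ with $\lfloor\theta\rfloor\in\{-1,0\}$.

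For the upper bound, observe that $a-b=k+\theta>k-1$. Hence $\lceil a-b\rceil\ge k$, since $\lceil a-b\rceil$ is an integer that is at least $a-b$ and therefore strictly larger than $k-1$. Equivalently, $\lceil a-b\rceil=k+\lceil\theta\rceil$ with $\lceil\theta\rceil\in\{0,1\}$. Combining the two displays gives $\lfloor a-b\rfloor\le\lfloor a\rfloor-\lfloor b\rfloor\le\lceil a-b\rceil$.

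There is no real obstacle. The only point requiring care is keeping the strict inequalities $\theta<1$ and $\theta>-1$, which come from $\{a\},\{b\}\in[0,1)$; without strictness one would lose one unit in each bound. I would also note in passing how the lemma is used earlier. In the proof of Lemma~\ref{lem:theorM1_bis}, the left inequality is applied with real $a,b$, and monotonicity of the floor handles the step $\lfloor(\tau_i^\star-\tau_{i-1}^\star)/\eta\rfloor\ge\lfloor n\underline{\Lambda}_{\tau^\star}/\eta\rfloor$. The right inequality supplies the $\lceil 2\delta_n/\eta\rceil=m$ bound on the grid offsets.
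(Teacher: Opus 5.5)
Your proof is correct, and it takes a slightly different route from the paper's. You decompose $a$ and $b$ separately into integer and fractional parts, so that $a-b=k+\theta$ with $k=\lfloor a\rfloor-\lfloor b\rfloor$ and $\theta\in(-1,1)$. Both bounds then follow from the sandwich $k-1<a-b<k+1$, using only the defining inequalities of $\lfloor\cdot\rfloor$ and $\lceil\cdot\rceil$, with no case distinction. The paper instead decomposes the difference $a-b$ itself and splits into two cases. If $a-b\in\mathbb{Z}$, integer-translation invariance gives $\lfloor a\rfloor-\lfloor b\rfloor=a-b$ exactly. If $a-b=n+\varepsilon$ with $\varepsilon\in(0,1)$, the paper writes $\lfloor a\rfloor-\lfloor b\rfloor=n+\lfloor b+\varepsilon\rfloor-\lfloor b\rfloor$ and uses monotonicity of the floor to get $\lfloor b+\varepsilon\rfloor-\lfloor b\rfloor\in\{0,1\}$. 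The paper's route shows explicitly that $\lfloor a\rfloor-\lfloor b\rfloor$ always equals one of the two endpoints $\lfloor a-b\rfloor$ or $\lceil a-b\rceil$. Yours is shorter and uniform, and your side identities $\lfloor a-b\rfloor=k+\lfloor\theta\rfloor$ and $\lceil a-b\rceil=k+\lceil\theta\rceil$ recover that sharper statement with one more line if needed. Your remarks on how the two inequalities are used in the proof of Lemma~\ref{lem:theorM1_bis} are also accurate.
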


\begin{proof}[\textbf{Proof.}]
We distinguish two cases depending on whether $a - b$ is an integer or not.

\noindent\textbf{Case 1:} If $a - b \in \mathbb{Z}$, then:
\[
\lfloor a \rfloor - \lfloor b \rfloor 
= \lfloor b + (a - b) \rfloor - \lfloor b \rfloor = \lfloor b  \rfloor + (a - b) - \lfloor b \rfloor  = a - b = \lceil a - b \rceil,
\]
and therefore the inequality is trivial.

\noindent \textbf{Case 2:} Otherwise, we can write $a - b = n + \varepsilon$, with $n \in \mathbb{Z}$ and $\varepsilon \in (0, 1)$. Then:
\[
\lfloor a \rfloor - \lfloor b \rfloor 
= \lfloor b + n + \varepsilon \rfloor - \lfloor b \rfloor 
= n + \lfloor b + \varepsilon \rfloor - \lfloor b \rfloor,
\]
and:
\[
0 = \lfloor b + 0 \rfloor - \lfloor b \rfloor \leq \lfloor b + \varepsilon \rfloor - \lfloor b \rfloor \leq \lfloor b + 1 \rfloor - \lfloor b \rfloor \leq 1, \quad \text{since } \varepsilon \in (0, 1),
\]
hence:
\[
\lfloor a - b \rfloor = n \leq \lfloor a \rfloor - \lfloor b \rfloor \leq n + 1 = \lceil a - b \rceil.
\]
\end{proof}

\section{Empirical estimation of the constant in Theorem 3.1 of \cite{Garreau2018}}
\label{sec:C_0}
In this section, we estimate empirically the constant appearing in Theorem~3.1 of \cite{Garreau2018}.
This constant determines the minimal spacing condition required for the theoretical guarantees of the kernel change-point detection procedure discussed in Section~\ref{sec:background}: $\delta_n(\alpha_0)$.

\subsection{Definition of the parameters}

We introduce the parameters used throughout the simulations.

\begin{itemize}

\item \textbf{Observation space.}
We restrict ourselves here to observations taking values in the space
\[
\mathcal{X} = \mathbb{R}^d,
\]
where \(d\) denotes the dimension of the data.

\item \textbf{Collection of distributions.}
Let \(\mathcal{P}\) denote the set of probability distributions used to generate the data.
In our experiments, the observations \(X_1,\ldots,X_n \in \mathcal{X}\) are independent and are sampled from distributions belonging to \(\mathcal{P}\), which includes the following families:

\begin{itemize}
\item Gaussian distributions with varying means, variances, and covariance matrices;
\item exponential distributions with different rate parameters;
\item Laplace distributions with different scale parameters;
\item truncated versions of the above distributions;
\item uniform distributions on different intervals.
\end{itemize}

This collection of distributions allows us to evaluate the behavior of the method under a variety of distributional scenarios.

\item \textbf{Collection of kernels.}
Let \(\mathcal{K}\) denote the family of reproducing kernels used in the simulations.
The following kernels are considered:

\begin{itemize}
\item Gaussian kernel with bandwidth \(\sigma = 1\) or with \(\sigma\) chosen by the median heuristic:
\[
k(x,y) = \exp\!\left(-\frac{\|x-y\|^2}{2\sigma^2}\right);
\]

\item Laplace kernel with bandwidth \(\sigma = 1\):
\[
k(x,y) = \exp\!\left(-\frac{\|x-y\|}{\sigma}\right), \qquad \sigma = 1;
\]

\item polynomial kernel of degree \(2\) with constant \(1\):
\[
k(x,y) = (\langle x,y \rangle + 1)^2 ;
\]

\item linear kernel:
\[
k(x,y) = \langle x,y \rangle .
\]
\end{itemize}

\item \textbf{Minimal distributional separation.}
Let
\[
\underline{\Delta} = \min_{1 \le j \le \kappa_{\tau^\star} -1}
\left\| \mu_{P_j} - \mu_{P_{j+1}} \right\|_{\mathcal{H}},
\]
where \(P_j\) denotes the distribution of the observations on the \(j\)-th segment associated with the true segmentation \(\tau^\star\), and \(\mu_{P_j}\) denotes the kernel mean embedding of \(P_j\) in the reproducing kernel Hilbert space \(\mathcal{H}\) associated with a kernel \(k\).

\item \textbf{Penalty term.}
For a segmentation \(\tau\), the penalty used in the kernel change-point detection procedure is defined as
\[\mathrm{pen}(\tau) := \frac{C\,M^2 \kappa_\tau}{n},\]
where \(\kappa_\tau\) denotes the number of segments in the segmentation \(\tau\), \(n\) is the sample size, \(M\) bounds the kernel, and \(C\) is a positive constant.

\end{itemize}

\noindent
\textbf{Notation.}
To avoid ambiguity, we distinguish between kernels defined on the original observation space
\(\mathcal X = \mathbb{R}^d\) and kernels defined on the projected space \(\mathcal Z\).
We therefore denote by \(k_{\mathcal X}\) the kernels acting on \(\mathcal X\), and by \(k_{\mathcal Z}\) those used after projection onto a block of coordinates.
Both types of kernels are assumed to satisfy the boundedness condition below whenever required by the theoretical analysis.

\paragraph{Assumption.}
We assume that the kernel used in the change-point detection procedure is bounded.
More precisely, there exists a strictly positive constant \(M\) such that
\begin{equation}
\forall i \in \{1,\ldots,n\}, \qquad
k_{\mathcal X}(X_i,X_i) \le M^2 < +\infty \quad \text{a.s.}
\label{eq:kernelbounded}
\end{equation}
When the kernel is not bounded (for instance the linear or polynomial kernels),
we restrict the simulations to truncated versions of the distributions in order to ensure that Assumption~\eqref{eq:kernelbounded} holds.
In practice, the value of \(M\) depends on both the kernel and the data distribution.

\medskip

\begin{theorem}[Garreau and Arlot, 2018, Theorem 3.1]
\label{thm:thm_3_1}
Assume that Assumption~\eqref{eq:kernelbounded} holds. For any $y>0$, there exists an event $\Omega$ with probability at least $1-e^{-y}$ on which the following result holds.

For any $C>0$, let $\hat{\tau}$ be the segmentation estimated by KCP with the penalty defined above. Define
\[
C_{\min} := \frac{74}{3}(\kappa_{\tau^\star}+1)(y+\log n+1),
\qquad
C_{\max} := \frac{\underline{\Delta}^2}{M^2}\frac{\underline{\Lambda}_{\tau^\star}}{6\kappa_{\tau^\star}}\,n .
\]
If
\[
C_{\min} < C < C_{\max},
\]
then, on $\Omega$,
\[
\kappa_{\hat{\tau}} = \kappa_{\tau^\star},
\]
and
\[
\frac{1}{n} d_{\infty}^{(1)}(\tau^\star,\hat{\tau})
\le v_1(y)
:=
\frac{148\,\kappa_{\tau^\star} M^2}{\underline{\Delta}^2}\,
\frac{y+\log n+1}{n}.
\]
\end{theorem}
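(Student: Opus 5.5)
This statement is Theorem 3.1 of \cite{Garreau2018}, quoted here for reference. I would therefore invoke that paper's argument rather than rederive it. The plan below is only an outline of how its proof is organised, and I have not checked the constants step by step.

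\textbf{Setting up the event and the risk decomposition.} KCP minimises the penalised empirical risk
\[
\widehat{\mathcal R}_n(\tau)+\mathrm{pen}(\tau),\qquad
\widehat{\mathcal R}_n(\tau)=\frac1n\sum_{i}\|\varphi(X_i)-\widehat\mu_\tau(i)\|_{\mathcal H}^2 .
\]
Here $\widehat\mu_\tau(i)$ is the empirical mean embedding over the segment of $\tau$ containing $i$. Write $Y_i=\varphi(X_i)=\mu^\star_i+\varepsilon_i$, where $\mu^\star_i$ is the piecewise-constant vector of mean embeddings attached to the true segmentation $\tau^\star$.

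For any $\tau$, I would decompose $\widehat{\mathcal R}_n(\tau)-\widehat{\mathcal R}_n(\tau^\star)$ into three pieces:
\begin{itemize}
\item a deterministic approximation term $\frac1n\|\mu^\star-\Pi_\tau\mu^\star\|^2$, where $\Pi_\tau$ is projection onto functions constant on the segments of $\tau$;
\item a linear noise term $\frac{2}{n}\langle \mu^\star-\Pi_\tau\mu^\star,\varepsilon\rangle$;
\item a quadratic noise term $\frac1n(\|\Pi_{\tau^\star}\varepsilon\|^2-\|\Pi_\tau\varepsilon\|^2)$.
\end{itemize}

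Since $\|\varepsilon_i\|\le 2M$, bounded-differences and Bernstein-type concentration in the separable Hilbert space $\mathcal H$ control both noise terms. The quadratic term is bounded uniformly over all $O(n^2)$ segments, and the linear term over all pairs of segments. A union bound then costs a factor $\log n$. This is where the event $\Omega$ of probability at least $1-e^{-y}$ is produced, and where the factor $y+\log n+1$ enters.

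\textbf{Correct number of segments.} On $\Omega$, I would show that any $\tau$ with $\kappa_\tau>\kappa_{\tau^\star}$ has a larger criterion. The quadratic gain from extra segments is at most of order $M^2(\kappa_\tau-\kappa_{\tau^\star})(y+\log n+1)/n$, which is beaten by the penalty because $C>C_{\min}$.

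Conversely, any $\tau$ with $\kappa_\tau<\kappa_{\tau^\star}$ must leave some segment of $\tau$ covering a true change-point with at least $n\underline{\Lambda}_{\tau^\star}/2$ points on each side. Its approximation term is then at least of order $\underline{\Delta}^2\underline{\Lambda}_{\tau^\star}$, which exceeds the penalty saving because $C<C_{\max}$.

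\textbf{Localization.} With $\kappa_{\hat\tau}=\kappa_{\tau^\star}$ fixed, suppose some true change-point is at distance $d$ from every estimated one. A local computation shows that the approximation term grows at least like $\underline{\Delta}^2 d/n$ (up to constants). Meanwhile the noise terms are at most of order $\sqrt{d\,M^2(y+\log n)}\,\underline{\Delta}/n$ plus $M^2(y+\log n)/n$, per change-point and summed over $\kappa_{\tau^\star}$. Optimality of $\hat\tau$ then forces $d\lesssim \kappa_{\tau^\star}M^2(y+\log n+1)/\underline{\Delta}^2$, which is the form of $v_1(y)$.

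\textbf{Main obstacle.} The hard step is the linear cross term. It must be controlled uniformly, with a bound proportional to $\sqrt d$ for every misplacement $d$ simultaneously, so that it can be absorbed into the approximation term. Tracking the explicit constants ($74/3$, $148$, $6$) through this absorption is delicate. In the paper itself I would simply cite \cite{Garreau2018} for this step.
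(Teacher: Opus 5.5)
Your route is the paper's: the statement is quoted from \cite{Garreau2018} (Theorem~3.1) with the paper's notation, and the paper gives no proof of its own, so citing that reference is exactly what it does; your outline is optional commentary. If you keep the outline, fix the over-segmentation step: uniform segment-wise concentration bounds the quadratic gain only by a multiple of $\kappa_\tau M^2(y+\log n+1)/n$, not $(\kappa_\tau-\kappa_{\tau^\star})M^2(y+\log n+1)/n$ (your own localization step already allows a gain of order $\kappa_{\tau^\star}M^2(y+\log n+1)/n$ with no extra segment), and making the penalty $CM^2(\kappa_\tau-\kappa_{\tau^\star})/n$ beat this bound in the worst case $\kappa_\tau=\kappa_{\tau^\star}+1$ is what produces the factor $\kappa_{\tau^\star}+1$ in $C_{\min}$.
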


\subsection{Replacing the constant 148 when $\kappa_{\tau^\star}$ is known}

\subsubsection{Theoretical setting}
Assume that $\hat{\tau}$ is estimated using KCP with the number of change-points fixed to $\kappa_{\tau^\star}-1$. Theorem~\ref{thm:thm_3_1} is applicable if there exists $C> 0$ such that $C_{\min} < C < C_{\max}$, i.e., if $C_{\min} < C_{\max}$; in that case, we obtain
\[
\frac{1}{n} d_{\infty}^{(1)}(\tau^\star,\hat{\tau})
\le
148 \frac{\kappa_{\tau^\star} M^2}{\underline{\Delta}^2}
\frac{y+\log(n)+1}{n}.
\]
Equivalently,
\[
148 \ge
\frac{\frac{1}{n} d_{\infty}^{(1)}(\tau^\star,\hat{\tau})}
{\frac{\kappa_{\tau^\star} M^2}{\underline{\Delta}^2}\frac{y+\log(n)+1}{n}} .
\]
Moreover, the condition $C_{\max}>C_{\min}$ yields
\[
C_{\max}>C_{\min}
\quad \Longleftrightarrow \quad
\frac{\underline{\Delta}^2}{M^2}\frac{\underline{\Lambda}_{\tau^\star}}{\kappa_{\tau^\star}}
\frac{n}{(\kappa_{\tau^\star}+1)(y+\log(n)+1)}
>148 .
\]
By replacing the constant $148$ appearing in the two inequalities above by the constant $C_{0,\sup}$ defined below, the theorem remains valid when $\kappa_{\tau^\star}$ is known. In the following definition, a \emph{configuration} refers to an admissible choice of the sample size $n$, the true segmentation $\tau^\star$ (hence $\kappa_{\tau^\star}$ and $\underline{\Lambda}_{\tau^\star}$), and the distributions $P_1,\ldots,P_{\kappa_{\tau^\star}} \in \mathcal{P}$ of the segments (hence $\underline{\Delta}$ and $M$).
 
\begin{align*}
C_{0,\sup} :=\
& \inf \left\{ C_0 > 0\ \middle|\
\begin{array}{l}
\forall y > 0,\ \forall k_{\mathcal{X}} \in \mathcal{K},\ \text{and all configurations}, \\
\displaystyle
\frac{\frac{\underline{\Delta}^2}{M^2}\frac{\underline{\Lambda}_{\tau^\star}}{\kappa_{\tau^\star}}}
{(\kappa_{\tau^\star} + 1)\frac{(y + \log(n) + 1)}{n}}
> C_0
\, \Rightarrow \\
Q_{1 - e^{-y}} \left(
\frac{\frac{1}{n} d_{\infty}^{(1)}(\tau^\star, \hat{\tau})}
{\displaystyle \frac{\kappa_{\tau^\star} M^2}{\underline{\Delta}^2} \cdot \frac{y + \log(n) + 1}{n}}
\right) \leq C_0
\end{array}
\right\} \\[1em]
=\
& \sup \left\{ \min \left(
Q_{1 - e^{-y}} \left(
\frac{\frac{1}{n} d_{\infty}^{(1)}(\tau^\star, \hat{\tau})}
{\displaystyle \frac{\kappa_{\tau^\star} M^2}{\underline{\Delta}^2} \cdot \frac{y + \log(n) + 1}{n}}
\right), \frac{\frac{\underline{\Delta}^2}{M^2}\frac{\underline{\Lambda}_{\tau^\star}}{\kappa_{\tau^\star}}}
{(\kappa_{\tau^\star} + 1)\frac{(y + \log(n) + 1)}{n}} \right)  \right. \\
&
\left.
\middle|\
\begin{array}{l}
y > 0,\ k_{\mathcal{X}} \in \mathcal{K},\ \text{and all configurations}
\end{array}
\right\},
\end{align*}
where \(Q_{\alpha}(X)\) denotes the quantile of order \(\alpha\) of the random variable \(X\), and where we used Lemma~\ref{lem:C_0_sup} in Section~\ref{sec:complements}.

\subsubsection{Simulations}
From Theorem~\ref{thm:thm_3_1}, we know that $C_{0,\sup} \leq 148$. The goal of this section is to estimate empirically the constant $C_{0,\sup}$ over the class of scenarios explored in our simulations. Exhaustively evaluating all possible configurations is computationally infeasible. Instead, we explore a large collection of representative scenarios in order to obtain an estimate close to the true value of $C_{0,\sup}$.

We proceed as follows. First, we fix a data-generating distribution $P \in \mathcal{P}$ on $\mathbb{R}^d$ and a kernel $k \in \mathcal{K}$ defined on $\mathcal{X}$. In practice, the distribution $P$ is chosen randomly among several families of distributions (e.g., Gaussian, Laplace, exponential, uniform, possibly truncated when required), and the kernel is selected among a predefined collection of kernels. We also fix the confidence parameter $y$, chosen such that the considered quantile is of order $0.975$.

The remaining parameters of the model include the dimension $d$, the number of segments $\kappa_{\tau^\star}$, the minimal spacing $\underline{\Lambda}_{\tau^\star}$, and the minimal jump size $\underline{\Delta}$. In the simulations, we vary these parameters across different configurations.

For each configuration, we estimate the corresponding value of $C_0$; the constant $C_{0,\sup}$ is then estimated by aggregating these values across configurations.

\subsubsection{Description of the experimental procedure}
The objective is to empirically estimate the constant $C_{0,\sup}$ defined in the theoretical guarantees of change-point detection. The experimental procedure is as follows.

\begin{enumerate}
\item \textbf{Data generation.}
We generate multiple random time series under different configurations. The configurations vary in terms of distribution families, marginal distributions, and distribution parameters (such as mean, variance, or covariance). Each time series contains a prescribed number of change-points.

\item \textbf{Change-point detection.}
For each simulated time series, we apply the Kernel Change-Point (KCP) detection algorithm using the chosen kernel $k$.

\item \textbf{Evaluation of the localization error.}
For each simulation, we compute the localization error using the distance
\[
d_\infty^{(1)}(\tau^\star,\hat{\tau}),
\]
which corresponds to the maximum alignment error between the true and estimated change-points.

\item \textbf{Construction of the curve.}
For each simulated configuration, we evaluate the empirical quantity
\[
\min \left(
Q_{1-e^{-y}}
\left(
\frac{\frac{1}{n}d_\infty^{(1)}(\tau^\star,\hat{\tau})}
{\frac{\kappa_{\tau^\star} M^2}{\underline{\Delta}^2}\cdot \frac{y+\log(n)+1}{n}}
\right),
\;
\frac{\frac{\underline{\Delta}^2}{M^2}\frac{\underline{\Lambda}_{\tau^\star}}{\kappa_{\tau^\star}}}
{(\kappa_{\tau^\star}+1)\frac{(y+\log(n)+1)}{n}}
\right).
\]
This defines the curve of $C_0$, evaluated as a function of the minimal discrepancy between distributions, measured through the minimum MMD between consecutive segments.

\item \textbf{Estimation of $C_{0,\sup}$.}
For each configuration, we retain the largest value attained by this curve. This provides an empirical estimate of $C_0$ for the considered configuration (where $n$, $\kappa_{\tau^\star}$, the change-point locations, and the kernel $k$ are fixed, and the distributional change varies through $\underline{\Delta}$), in accordance with the sup--min formulation of $C_{0,\sup}$ given above.

\item \textbf{Aggregation across configurations.}
Finally, the procedure is repeated across multiple configurations by varying parameters such as the sample size $n$, the number of change-points $\kappa_{\tau^\star}-1$, their locations, the kernel $k$, and the data-generating distributions. The largest empirical value of $C_0$ obtained across all configurations is retained. This approach is consistent with the theoretical definition of $C_{0,\sup}$, which must satisfy the inequality for all admissible scenarios.

\end{enumerate}

\subsubsection{Curve of $C_0$ as a function of the data dimension}
The quantity plotted as a function of the dimension is defined as follows:
\begin{align*}
C_0(\text{fixed parameters}, d)
:=
\sup
\left\{
\min \left(
Q_{1-e^{-y}}
\!\left(
\frac{\frac{1}{n}d_\infty^{(1)}(\tau^\star,\hat{\tau})}
{\frac{\kappa_{\tau^\star} M^2}{\underline{\Delta}^2}\cdot\frac{y+\log(n)+1}{n}}
\right),
\frac{\frac{\underline{\Delta}^2}{M^2}\frac{\underline{\Lambda}_{\tau^\star}}{\kappa_{\tau^\star}}}
{(\kappa_{\tau^\star}+1)\frac{(y+\log(n)+1)}{n}}
\right)
\right. \\
\left.
\middle|\
\begin{array}{l}
\text{for any kernel } k\in\mathcal K \text{ and any time series of dimension } d \\
\text{with a change-point in the middle whose distributions belong to }\mathcal P
\end{array}
\right\}.
\end{align*}

We restrict the computation of this quantity to time series of length
\[
n \in \{100, 500, 1000\},
\]
with a single change-point located in the middle and whose distributions belong to $\mathcal P$. The fixed parameters are $\kappa_{\tau^\star}$, $\underline{\Lambda}_{\tau^\star}$, and $y$, whereas the non-fixed parameters are $n$, $k$, $M$, $\underline{\Delta}$, and $d$.

\begin{figure}[tbp]
\centering
\includegraphics[width=0.45\textwidth]{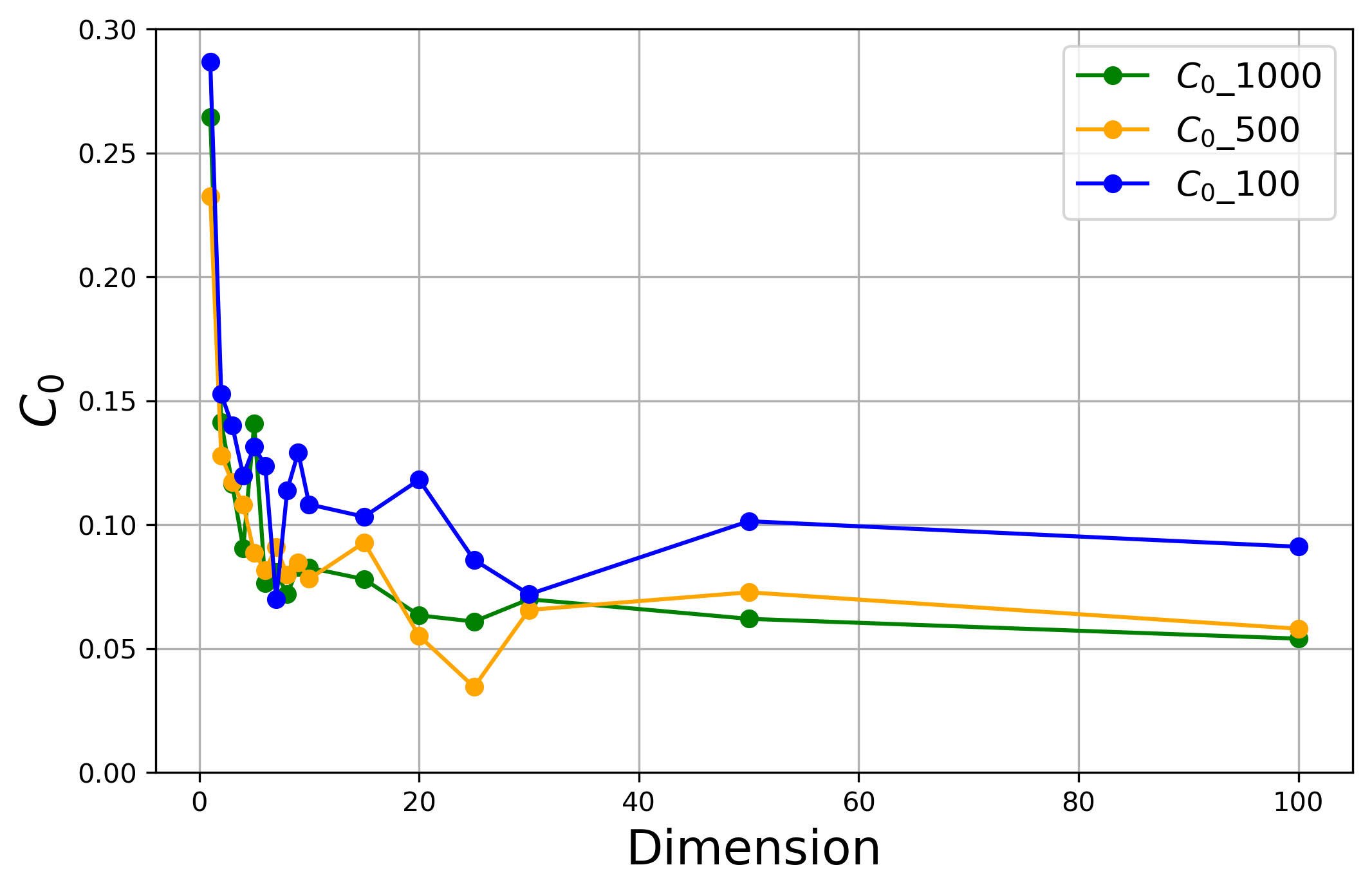}
\caption{$C_0$ as a function of the dimension of the time series.}
\label{fig:C0_dim}
\end{figure}

We observe in Figure~\ref{fig:C0_dim} that $C_0$ decreases as the dimension of the time series increases: the largest values, close to $0.3$, are attained in dimension $1$, and the values then drop quickly with the dimension. In particular, from dimension $2$ onward --- which corresponds to the multivariate setting considered in this work --- the maximal value of $C_0$ remains below $0.15$. For the remainder of the study, we therefore fix the dimension to $2$. We now investigate the influence of the parameter $\underline{\Lambda}_{\tau^\star}$ on $C_0$ in the case of a single change-point.

\subsubsection{Curve of $C_0$ as a function of the change-point position}
We fix the dimension of the time series to $2$. The quantity plotted as a function of the change-point position is defined as follows:
\begin{align*}
C_0(\text{fixed parameters}, \underline{\Lambda}_{\tau^\star})
:=
\sup
\left\{
\min \left(
Q_{1-e^{-y}}
\!\left(
\frac{\frac{1}{n}d_\infty^{(1)}(\tau^\star,\hat{\tau})}
{\frac{\kappa_{\tau^\star} M^2}{\underline{\Delta}^2}\cdot\frac{y+\log(n)+1}{n}}
\right),
\frac{\frac{\underline{\Delta}^2}{M^2}\frac{\underline{\Lambda}_{\tau^\star}}{\kappa_{\tau^\star}}}
{(\kappa_{\tau^\star}+1)\frac{(y+\log(n)+1)}{n}}
\right)
\right. \\
\left.
\middle|\
\begin{array}{l}
\text{for any kernel } k \in \mathcal{K} \text{ and any time series with a change-point} \\
\text{at different positions whose distributions belong to } \mathcal{P}
\end{array}
\right\}.
\end{align*}

We restrict the computation of these values to time series of lengths
\[
n \in \{100, 500, 1000\},
\]
with a single change-point and distributions belonging to $\mathcal{P}$. The fixed parameters are $d$, $\kappa_{\tau^\star}$, and $y$, while the non-fixed parameters are $n$, $k$, $M$, $\underline{\Delta}$, and $\underline{\Lambda}_{\tau^\star}$.

\begin{figure}[tbp]
\centering
\includegraphics[width=0.45\textwidth]{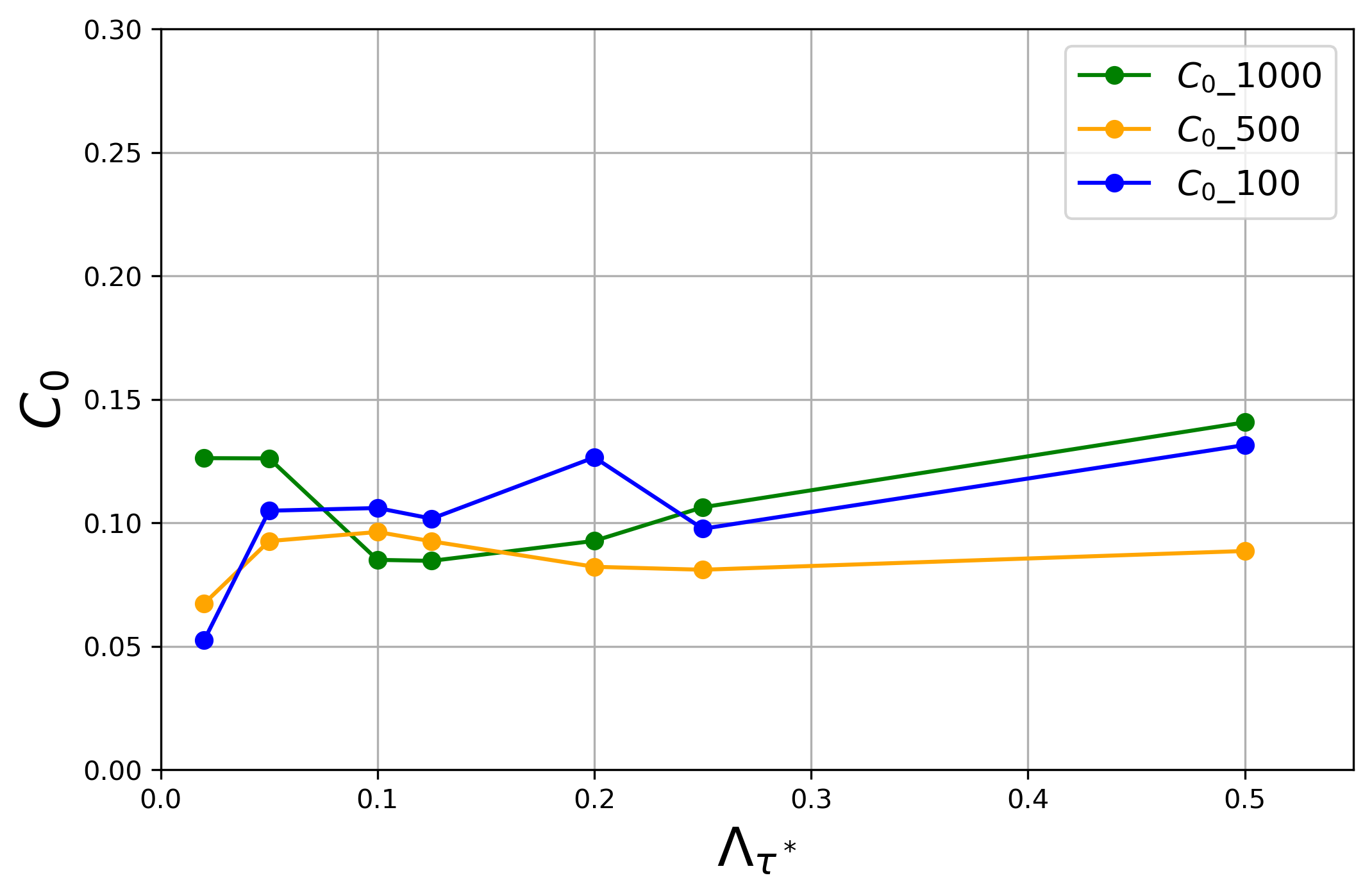}
\caption{$C_0$ as a function of $\underline{\Lambda}_{\tau^\star}$.}
\label{fig:C0_pos}
\end{figure}

We observe in Figure~\ref{fig:C0_pos} that the position of the change-point has little influence on $C_0$: whether the change-point lies near the boundary of the time series (small values of $\underline{\Lambda}_{\tau^\star}$) or near its center, the values of $C_0$ remain of the same order and stay below $0.15$, consistently with the previous experiment in dimension $2$.

\subsubsection{Curve of $C_0$ as a function of the number and locations of change-points}

We fix the dimension of the time series to $2$. For each number of segments $\kappa_{\tau^\star} \in \{3, 4\}$, we plot, as a function of $\underline{\Lambda}_{\tau^\star}$, the quantity defined as follows:

\begin{align*}
C_0(\text{fixed parameters}, \underline{\Lambda}_{\tau^\star})
:=
\sup
\left\{
\min \left(
Q_{1-e^{-y}}
\!\left(
\frac{\frac{1}{n}d_\infty^{(1)}(\tau^\star,\hat{\tau})}
{\frac{\kappa_{\tau^\star} M^2}{\underline{\Delta}^2}\cdot\frac{y+\log(n)+1}{n}}
\right),
\frac{\frac{\underline{\Delta}^2}{M^2}\frac{\underline{\Lambda}_{\tau^\star}}{\kappa_{\tau^\star}}}
{(\kappa_{\tau^\star}+1)\frac{(y+\log(n)+1)}{n}}
\right)
\right. \\
\left.
\middle|\
\begin{array}{l}
\text{for any kernel } k \in \mathcal{K} \text{ and any time series with } \kappa_{\tau^\star}-1 \text{ change-points} \\
\text{at different locations whose distributions belong to } \mathcal{P}
\end{array}
\right\}.
\end{align*}

We restrict the computation of this quantity to time series of lengths
\[
n \in \{100, 500\},
\]
with different numbers and locations of change-points, and whose distributions belong to $\mathcal{P}$. The fixed parameters are $d$, $\kappa_{\tau^\star}$, and $y$, while the non-fixed parameters are $n$, $k$, $M$, $\underline{\Delta}$, and $\underline{\Lambda}_{\tau^\star}$. The results are shown in Figures~\ref{fig:C0_D3} and~\ref{fig:C0_D4}.

\begin{figure}[tbp]
\centering
\includegraphics[width=0.45\textwidth]{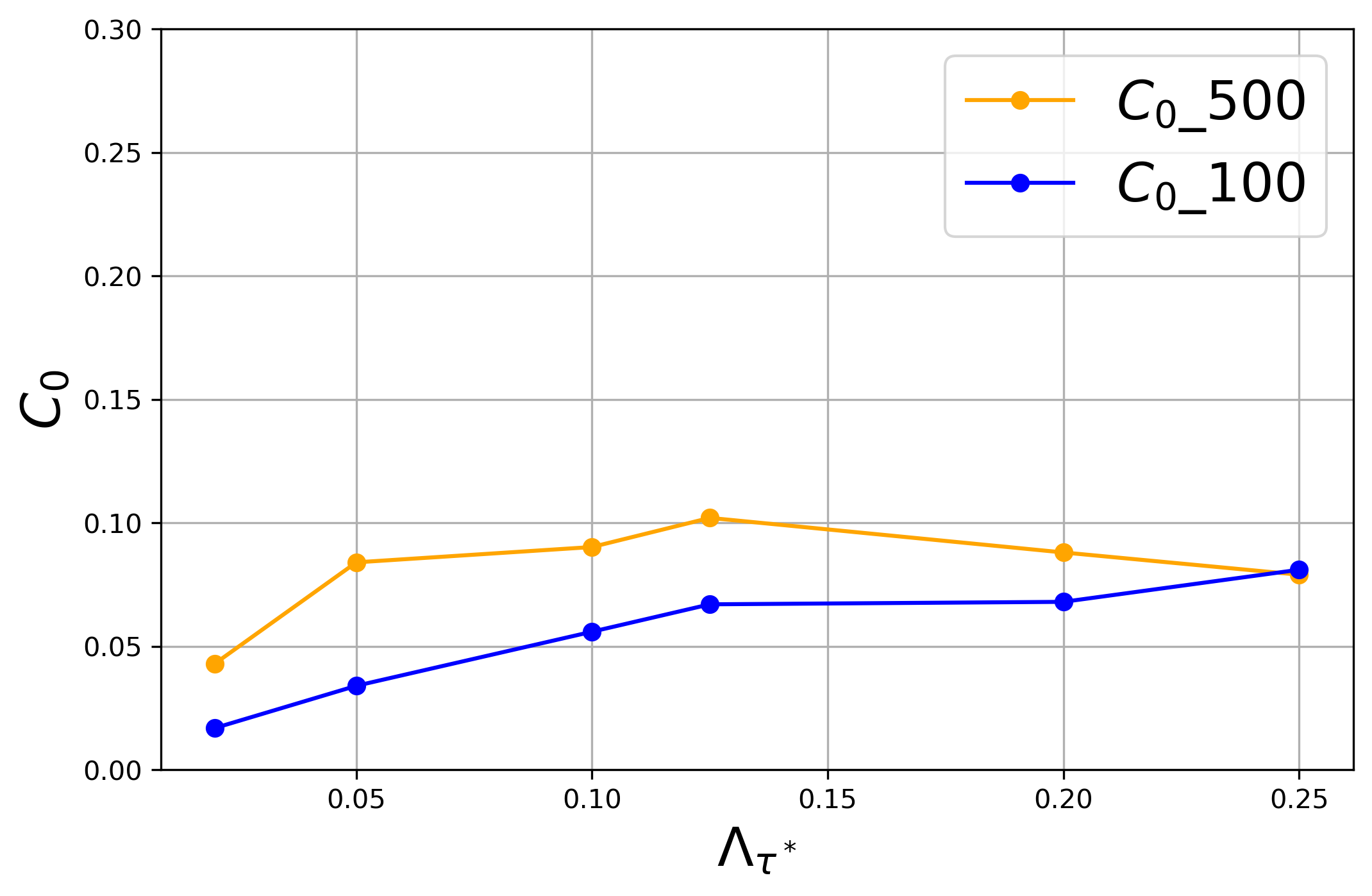}
\caption{$C_0$ as a function of $\underline{\Lambda}_{\tau^\star}$, with $\kappa_{\tau^\star} = 3$.}
\label{fig:C0_D3}
\end{figure}

\begin{figure}[tbp]
\centering
\includegraphics[width=0.45\textwidth]{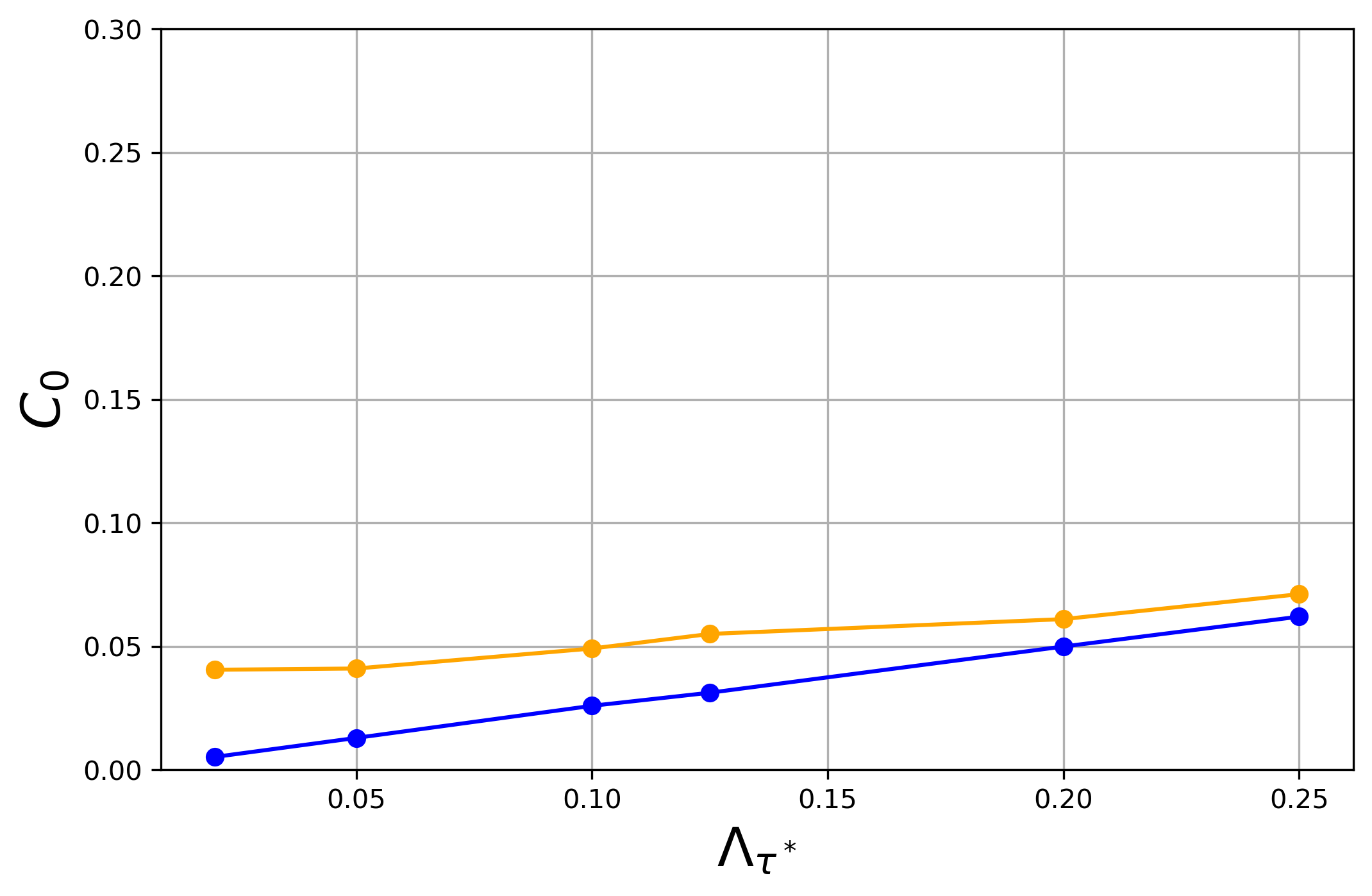}
\caption{$C_0$ as a function of $\underline{\Lambda}_{\tau^\star}$, with $\kappa_{\tau^\star} = 4$.}
\label{fig:C0_D4}
\end{figure}

We observe in Figures~\ref{fig:C0_D3} and~\ref{fig:C0_D4} that increasing the number of change-points does not lead to larger values of $C_0$; on the contrary, the values obtained for $\kappa_{\tau^\star} = 3$ and $\kappa_{\tau^\star} = 4$ are even smaller than those obtained in the single change-point case, remaining well below $0.15$ (below $0.10$ in these experiments).

\subsection{Comments and conclusions}
In this study, we explored several configurations in order to obtain an empirical value of $C_0$ that is close to its theoretical value.

The largest values of $C_0$ obtained in the simulations, around $0.3$, are only attained in dimension $1$. In the multivariate case ($d \geq 2$), which is the setting considered in this work, all the empirical values of $C_0$ remain below $0.15$. This suggests that the value of $C_{0,\sup}$ is much smaller than the constant $148$ appearing in the bound of Theorem~\ref{thm:thm_3_1}.

Based on these observations, we replace the highly conservative constant $148$ by the empirical tuning parameter $C_0 = 0.15$ in most of our simulations, unless specified otherwise. This value can be adapted to the scenario under consideration: for instance, when the time series contains more than one change-point, the empirical values of $C_0$ are even smaller, and a smaller value of $C_0$ can therefore be used.

Even when replacing $148$ by the empirical value of $C_0$ obtained from the simulations, the direct application of Theorem~\ref{thm:thm_3_1} remains difficult. Indeed, several parameters involved in the definition of $C_0$ are typically unknown in practice, namely $\kappa_{\tau^\star}$, $\underline{\Delta}$, and $\underline{\Lambda}_{\tau^\star}$.

\begin{remark}
If one wishes to use a different kernel or to consider other types of distributional changes that were not studied here, the entire experimental procedure can be repeated with the new kernel or the new types of distributional changes. This makes it possible to obtain a new empirical value of $C_0$ adapted to these choices by following the same methodology.
\end{remark}
\subsection{Complements for $C_0$}
\label{sec:complements}
\begin{lemma}\label{lem:C_0_sup}
Let $\mathcal{C} \subset \mathbb{R}_+^2$ be a set describing all admissible pairs $(R,Q)$ arising from the variation of the parameters of the model. Define
\[
C_{0,\mathrm{sup}} := \inf \left\{ C_0 > 0 \;\middle|\; 
\forall (R,Q) \in \mathcal{C}, \; R > C_0 \Rightarrow Q \le C_0
\right\}.
\]
We have
\[
C_{0,\mathrm{sup}} = \sup \left\{ \min(R,Q) \;\middle|\; (R,Q) \in \mathcal{C} \right\}.
\]
\end{lemma}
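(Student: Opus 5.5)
The identity is an elementary order-theoretic fact, so the plan is to prove the two inequalities separately. Write $A := \{ C_0 > 0 : \forall (R,Q)\in\mathcal C,\ R > C_0 \Rightarrow Q \le C_0\}$ and $s := \sup\{\min(R,Q) : (R,Q)\in\mathcal C\}$. The first step is to recast membership in $A$. The implication $R > C_0 \Rightarrow Q \le C_0$ is equivalent to ``$R \le C_0$ or $Q \le C_0$'', that is, to $\min(R,Q) \le C_0$. Hence
\[
A = \{C_0 > 0 : \min(R,Q) \le C_0 \ \text{for all } (R,Q)\in\mathcal C\},
\]
which is exactly the set of strictly positive upper bounds of the set $M := \{\min(R,Q) : (R,Q)\in\mathcal C\}\subset \mathbb R_+$.

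Next I identify the infimum of $A$. If $s < \infty$, then $A$ contains every $C_0 > \max(s,0)$ and also $s$ itself when $s>0$. So $A = [s,\infty)$ when $s>0$, and $A = (0,\infty)$ when $s = 0$. In both cases $\inf A = s$, because $M\subset\mathbb R_+$ forces $s \ge 0$. If $s = +\infty$, then $A = \varnothing$, and with the convention $\inf\varnothing = +\infty$ the equality still holds. For $\mathcal C = \varnothing$ one takes $s = 0$ (the supremum of an empty subset of $\mathbb R_+$), while $A = (0,\infty)$ has infimum $0$, so this edge case is also consistent.

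The two inequalities can also be checked directly. For $\inf A \ge s$: any $C_0\in A$ is an upper bound of $M$, hence $C_0 \ge s$. For $\inf A \le s$ with $s$ finite: every $C_0 > s$ (and $C_0>0$) lies in $A$, so letting $C_0 \downarrow s$ gives the claim.

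The only real obstacle is bookkeeping of edge cases: the strict positivity constraint $C_0>0$ when $s=0$, the possibly empty $A$ when $s=\infty$, and the empty $\mathcal C$. I will state the conventions $\inf\varnothing=+\infty$ and $\sup\varnothing = 0$ on $\mathbb R_+$ explicitly at the start.
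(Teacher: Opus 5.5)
Your proof is correct and takes the same route as the paper's: both rewrite the implication $R > C_0 \Rightarrow Q \le C_0$ as $\min(R,Q) \le C_0$, and both then identify $C_{0,\mathrm{sup}}$ as the infimum of the upper bounds of $\{\min(R,Q)\}$, which is its supremum. Your explicit handling of the constraint $C_0>0$ when $s=0$, of the case $s=+\infty$, and of $\mathcal C=\varnothing$ fills in details the paper skips, since the paper speaks of ``all upper bounds'' rather than all positive ones.
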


\begin{proof}
\[
C_{0,\mathrm{sup}} = \inf \left\{ C_0 > 0 \;\middle|\;
\forall (R,Q) \in \mathcal{C},\; R > C_0 \Rightarrow Q \le C_0
\right\}.
\]
We observe that for any $(R,Q)$ and any $C_0$,
\[
\left(R > C_0 \Rightarrow Q \le C_0\right)
\quad \Leftrightarrow \quad
\min(R,Q) \le C_0 .
\]
Indeed, if $R \le C_0$, then $\min(R,Q) \le C_0$. If $R > C_0$, the implication $R > C_0 \Rightarrow Q \le C_0$ gives $Q \le C_0$, and therefore $\min(R,Q) \le C_0$. 
Conversely, if $\min(R,Q) \le C_0$ and $R > C_0$, then necessarily $Q \le C_0$, so the implication $R > C_0 \Rightarrow Q \le C_0$ holds. Therefore,
\[
C_{0,\mathrm{sup}}
=
\inf \left\{ C_0 > 0 \;\middle|\;
\forall (R,Q) \in \mathcal{C},\; \min(R,Q) \le C_0
\right\}.
\]
Since this set consists of all upper bounds of the collection
\[
\{\min(R,Q) \mid (R,Q) \in \mathcal{C}\},
\]
its infimum is precisely the least upper bound of this collection. Hence
\[
C_{0,\mathrm{sup}} =
\sup \left\{ \min(R,Q) \;\middle|\; (R,Q) \in \mathcal{C} \right\}.
\]
\end{proof}

\section{Complementary simulations}
\label{sec:complementary_simulations}
\subsection{Signal}
\subsubsection{Influence of signal sparsity}
This experiment investigates the impact of signal sparsity on the performance of the \emph{post hoc} procedures. To properly assess how the dilution of the signal affects the test statistics, we fix the mean jump amplitude at $0.3$ and vary the number of affected coordinates. The total dimension is set to $d=50$, and a single change-point is placed at the center of the sequence.

The coordinates are partitioned into two distinct blocks:
$$ \text{Block 1} = \{1,\ldots,30\}, \qquad \text{Block 2} = \{31,\ldots,50\}. $$
The distributional change is restricted to Block~1, while Block~2 serves as a completely stationary control group. Specifically, the mean shift is applied to a subset of $s$ coordinates within Block~1, where $s$ varies across the experiments. The remaining $30-s$ coordinates in Block~1, along with all coordinates in Block~2, remain unchanged. For the localized tests (Hold-out and GTST), the uncertainty margin $\delta_n$ is calibrated using the constant $C_0 = 0.08$.

By evaluating the \emph{post hoc} procedures as a function of the number of affected coordinates $s$, we can precisely assess how transitioning from a highly sparse signal ($s$ close to $1$) to a denser signal ($s$ close to $30$) influences the empirical power, and the Type~I error control.

\begin{figure}[tbp]
\centering

\begin{subfigure}{0.48\textwidth}
    \centering
    \includegraphics[width=\textwidth]{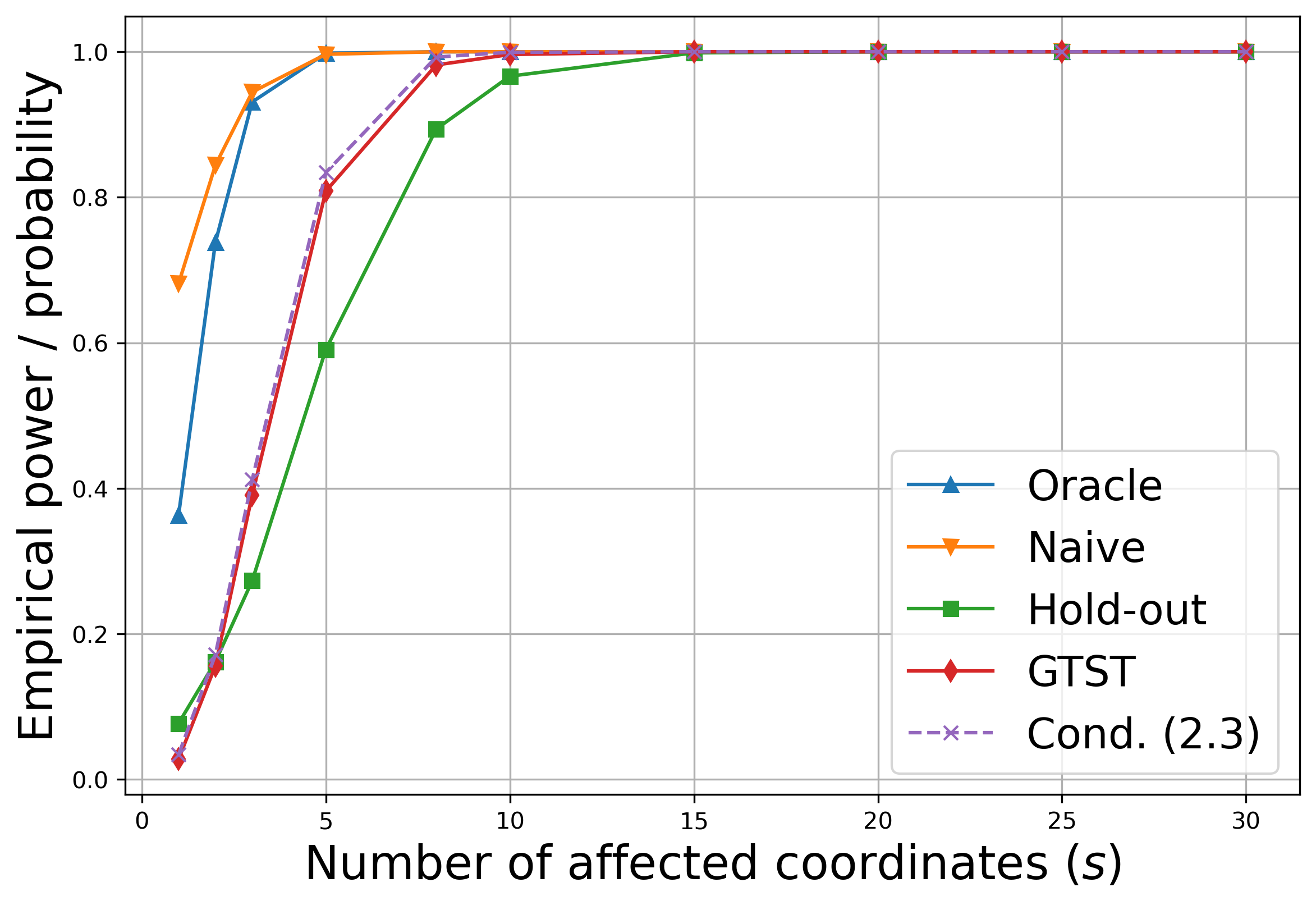}
    \caption{Power}
    \label{fig:sparcité_power}
\end{subfigure}
\hfill
\begin{subfigure}{0.48\textwidth}
    \centering
    \includegraphics[width=\textwidth]{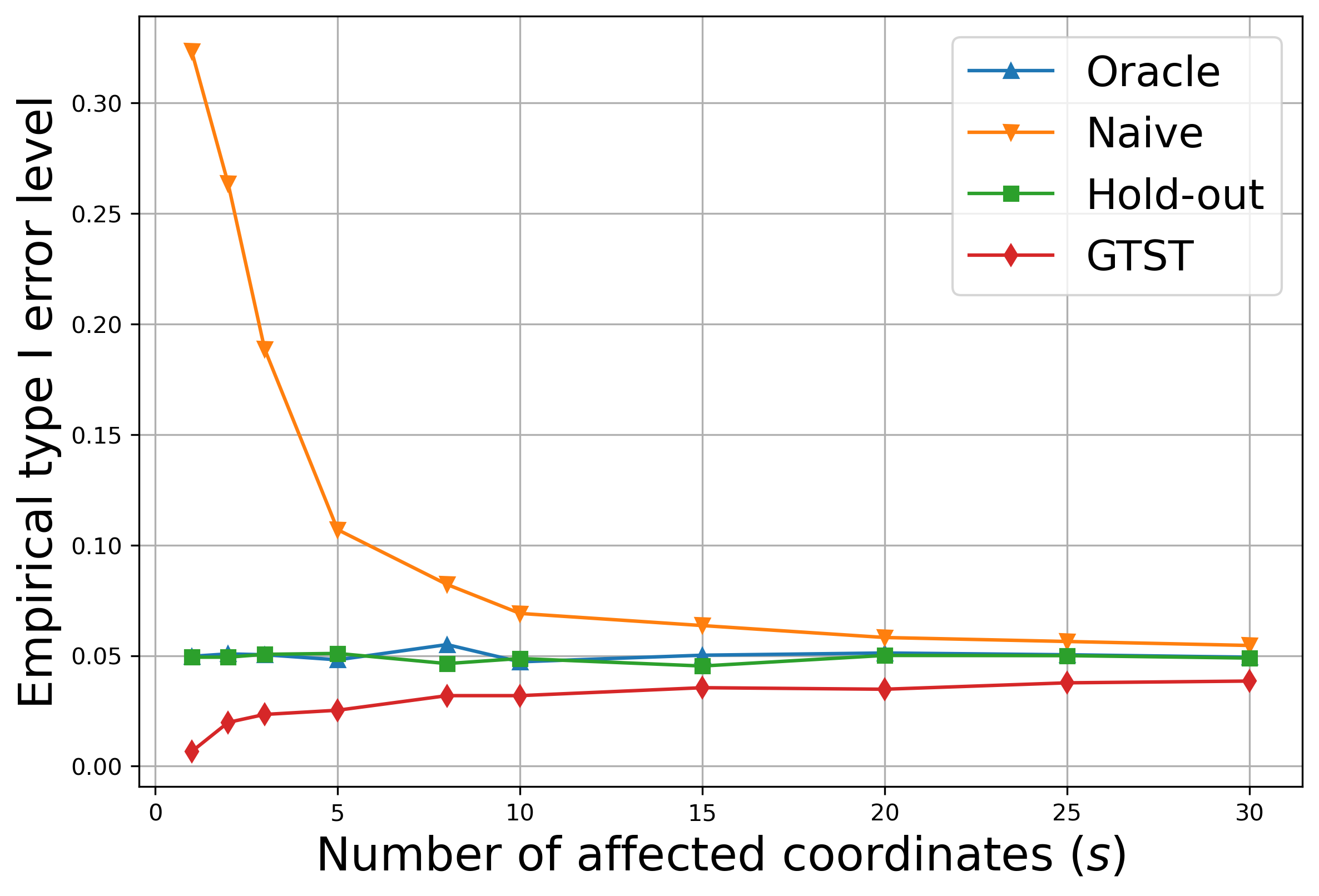}
    \caption{Type~I error level}
    \label{fig:sparcité_level}
\end{subfigure}

\caption{
Performance of the \emph{post hoc} procedures as a function of the signal sparsity. The total dimension is fixed at $d=50$ and the mean jump amplitude is set to $0.3$. The x-axis represents the number of affected coordinates $s$ within the active block. As $s$ increases, the signal transitions from a highly sparse regime to a denser regime. (a) Empirical power; (b) Empirical Type~I error level.
}
\label{fig:sparcité_results}
\end{figure}

\paragraph{Results.}
Figure~\ref{fig:sparcité_results} confirms that highly sparse signals (e.g., small $s$) are heavily diluted by the ambient noise of the stationary dimensions, leading to low empirical power across all methods. As the number of affected coordinates increases, the aggregated signal overcomes this noise, restoring optimal detection performance. 

Consistent with previous observations, the naive method fails to control the Type~I error. Notably, this vulnerability is considerably amplified in the extreme sparse regime, with the error rate exceeding $0.3$ for $s=1$. By contrast, the hold-out and GTST procedures safely maintain their nominal error control even when the change is virtually undetectable. This confirms that the proposed strategies inherently protect against false discoveries, regardless of the signal density.
\subsubsection{Influence of the dimension}
In this experiment, we investigate how the performance of the \emph{post hoc} procedures evolves with the dimension~$d$. A single change-point is considered, and the mean jump amplitude is fixed at $\text{jump}=0.3$. All simulation settings remain unchanged except for the dimension $d$, which varies over a predefined range while the proportion of affected coordinates is kept approximately constant. More precisely, for each dimension $d$, a mean shift is introduced on the first $\lfloor 0.6\,d \rfloor$ coordinates, while the remaining coordinates are left unchanged. Accordingly, Block~1 consists of the first $\lfloor 0.6\,d \rfloor$ coordinates and undergoes the distributional change, whereas Block~2 contains the remaining coordinates. This setting allows us to examine how increasing dimensionality influences both detection performance and exact block recovery while maintaining a fixed proportion of affected coordinates. For the localized tests (Hold-out and GTST), the uncertainty margin $\delta_n$ is calibrated using a dimension-dependent constant $C_0$, defined as follows:
$$
C_0 = 
\begin{cases} 
0.15 & \text{if } d \le 5, \\ 
0.10 & \text{if } 5 < d \le 15, \\ 
0.08 & \text{if } d \ge 15. 
\end{cases}
$$

\begin{figure}[tbp]
\centering

\begin{subfigure}{0.48\textwidth}
    \centering
    \includegraphics[width=\textwidth]{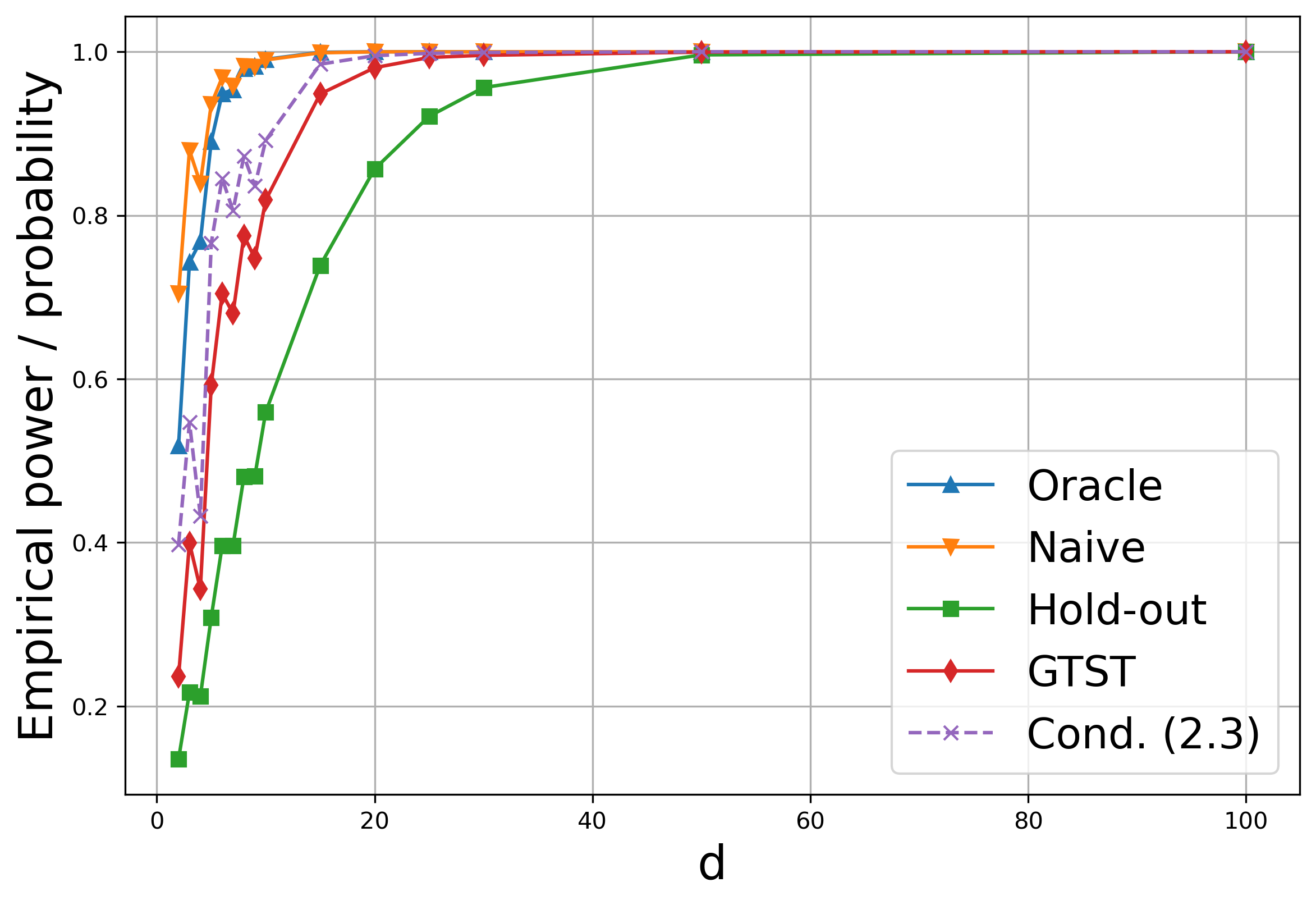}
    \caption{Power}
    \label{fig:d_power}
\end{subfigure}
\hfill
\begin{subfigure}{0.48\textwidth}
    \centering
    \includegraphics[width=\textwidth]{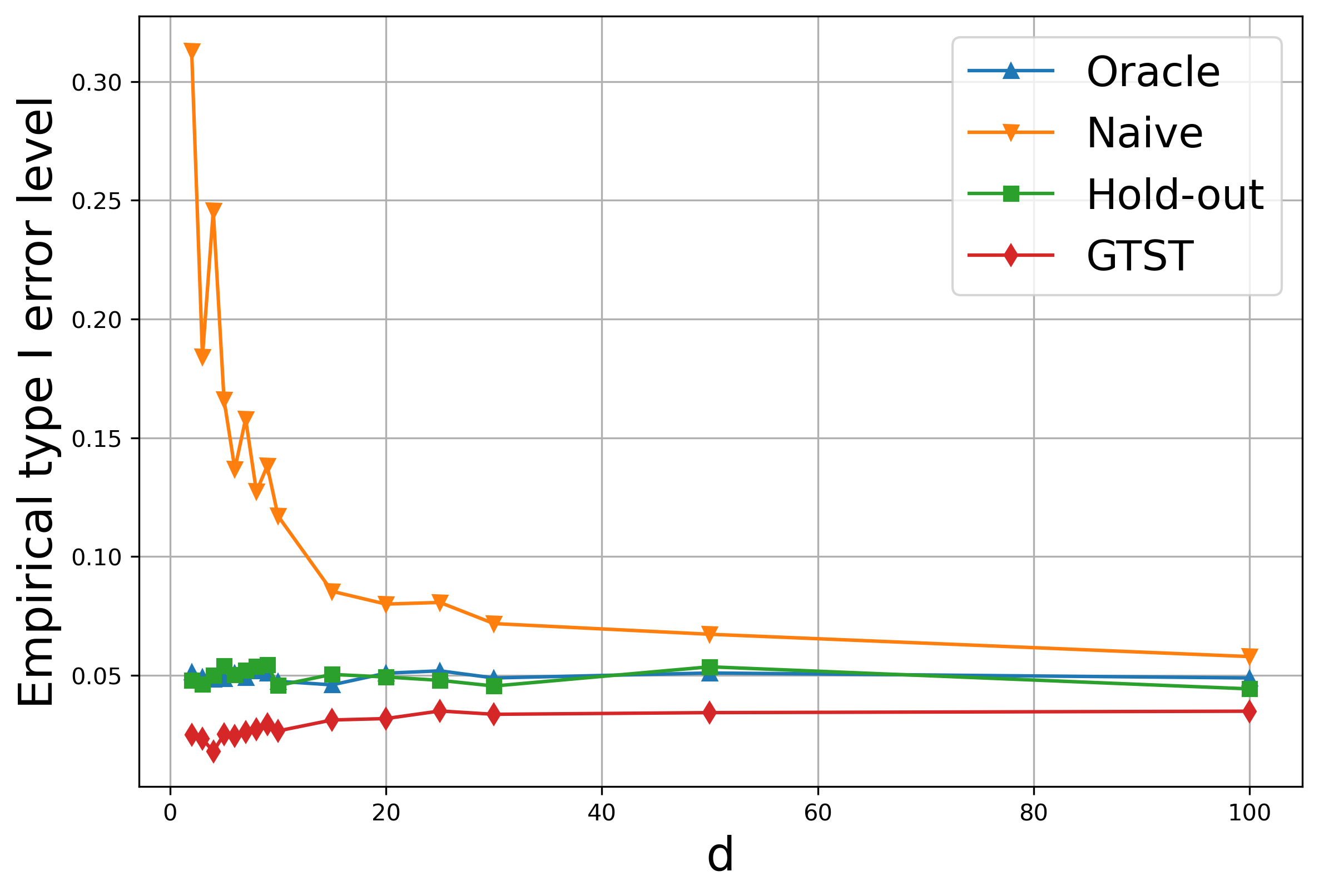}
    \caption{Type I error level}
    \label{fig:d_level}
\end{subfigure}

\vspace{0.5cm}

\caption{
Performance of the \emph{post hoc} procedures as a function of the dimension $d$. The proportion of affected coordinates is maintained at approximately $60\%$ ($\lfloor 0.6\,d \rfloor$) across all tested dimensions, corresponding to a dense signal regime. (a) Empirical power; (b) Empirical Type~I error level.
}
\label{fig:d_results}
\end{figure}

\paragraph{Results.}
Figure~\ref{fig:d_results} shows that, when using a Gaussian kernel with the bandwidth selected via the median heuristic, increasing the dimension does not deteriorate the performance of the \emph{post hoc} procedures. On the contrary, empirical power improves as the dimension grows.

We also observe small non-monotonic fluctuations in the performance curves as the dimension increases. These local variations are a direct consequence of the step-function behavior of $\lfloor 0.6\,d \rfloor$. When $d$ increases but the integer value $\lfloor 0.6\,d \rfloor$ remains temporarily unchanged, the newly added coordinate does not undergo any distributional change and acts as pure noise. This marginally degrades the signal-to-noise ratio, temporarily decreasing the power until the dimension is large enough to increment the number of affected coordinates, at which point the performance resumes its upward trend.

Overall, these results indicate that, when the proportion of changing coordinates remains fixed, the considered methods remain effective in moderate to high-dimensional settings under the median heuristic.

\subsubsection{Different types of distributional changes}
Beyond mean shifts, we consider alternative types of distributional changes in order to evaluate the robustness of the \emph{post hoc} procedures across different change mechanisms. In these experiments, the dimension is fixed to $d=5$, and only the first three coordinates are affected by the change, while the remaining coordinates remain unchanged.

For variance changes, the pre-change distribution has unit variance in all dimensions. After the change-point, the variance of the affected coordinates is modified while the remaining coordinates keep variance equal to one. The magnitude of the variance change is varied across experiments, and performance curves are reported as a function of the variance jump.

For covariance changes, the pre-change covariance structure assumes zero covariance between coordinates. After the change-point, the covariance between each pair of affected coordinates is increased from $0$ to a positive value, which varies across simulations. The performance curves are then reported as a function of this covariance jump.

These experiments allow us to assess how the different \emph{post hoc} procedures behave when the distributional change is not driven by a shift in the mean but by other types of distributional modifications.

\begin{figure}[tbp]
\centering

\begin{subfigure}{0.48\textwidth}
    \centering
    \includegraphics[width=\textwidth]{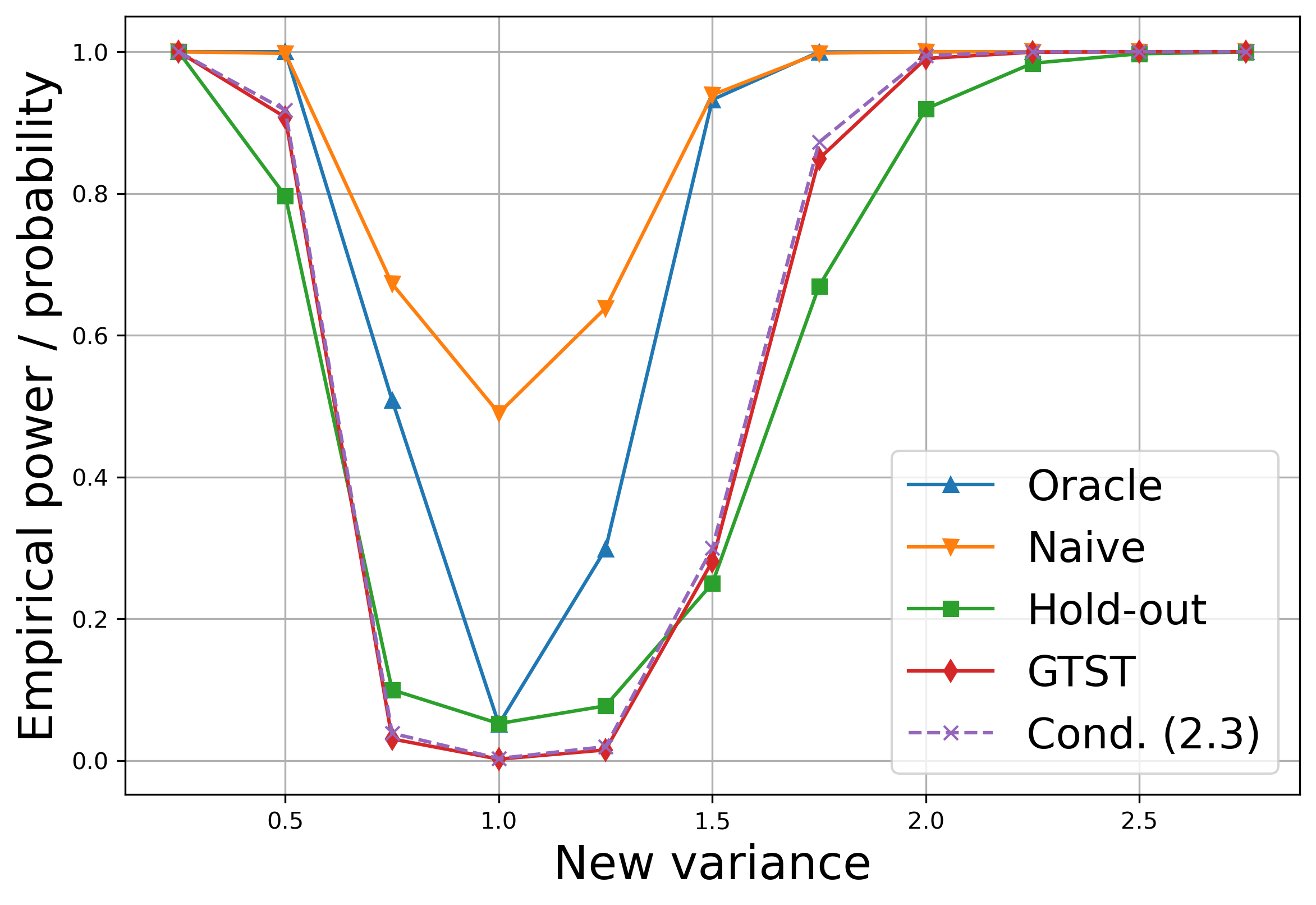}
    \caption{Power}
    \label{fig:var_power}
\end{subfigure}
\hfill
\begin{subfigure}{0.48\textwidth}
    \centering
    \includegraphics[width=\textwidth]{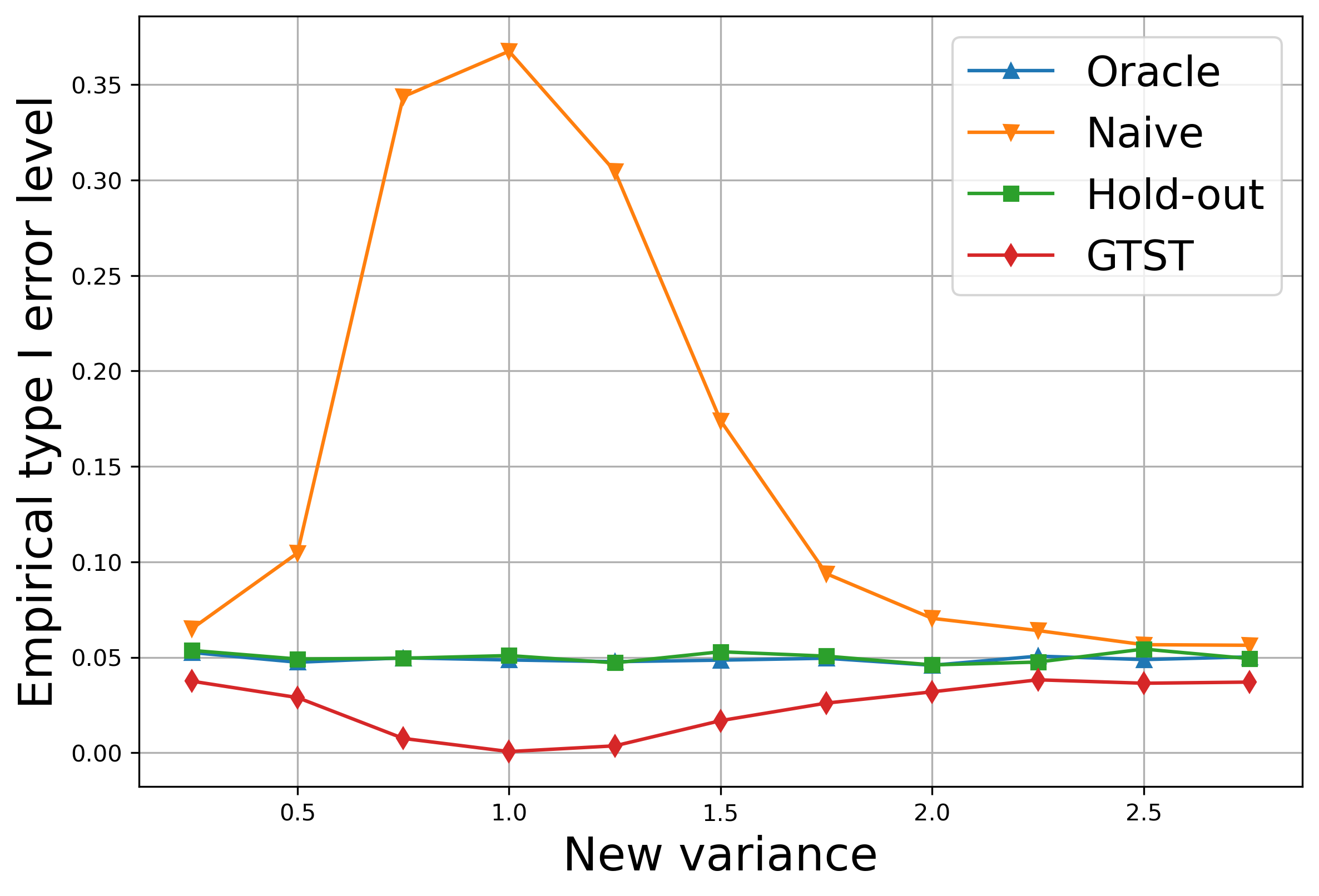}
    \caption{Type I error level}
    \label{fig:var_level}
\end{subfigure}

\caption{
Performance of the \emph{post hoc} procedures as a function of the new variance on the affected coordinates. The pre-change variance is exactly $1.0$. (a) Empirical power; (b) Empirical Type~I error level.
}
\label{fig:var_results}
\end{figure}

\paragraph{Results for variance changes.}
Figure~\ref{fig:var_results} shows that the overall behavior observed for mean shifts remains largely consistent for variance changes. Because the pre-change variance is exactly $1.0$, the empirical power curves exhibit a V-shape as the new variance diverges from this baseline.

\begin{figure}[tbp]
\centering

\begin{subfigure}{0.48\textwidth}
    \centering
    \includegraphics[width=\textwidth]{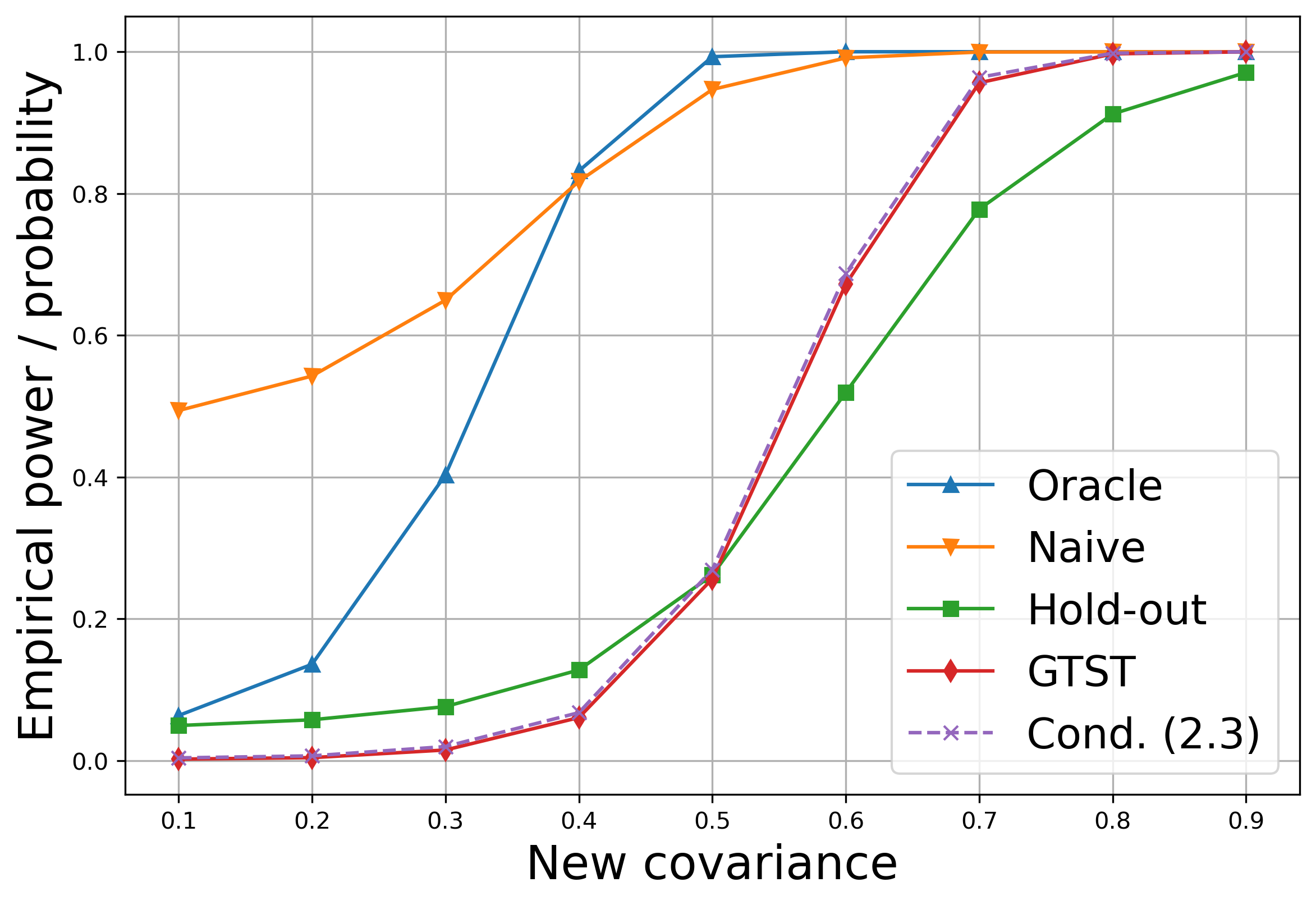}
    \caption{Power}
    \label{fig:cov_power}
\end{subfigure}
\hfill
\begin{subfigure}{0.48\textwidth}
    \centering
    \includegraphics[width=\textwidth]{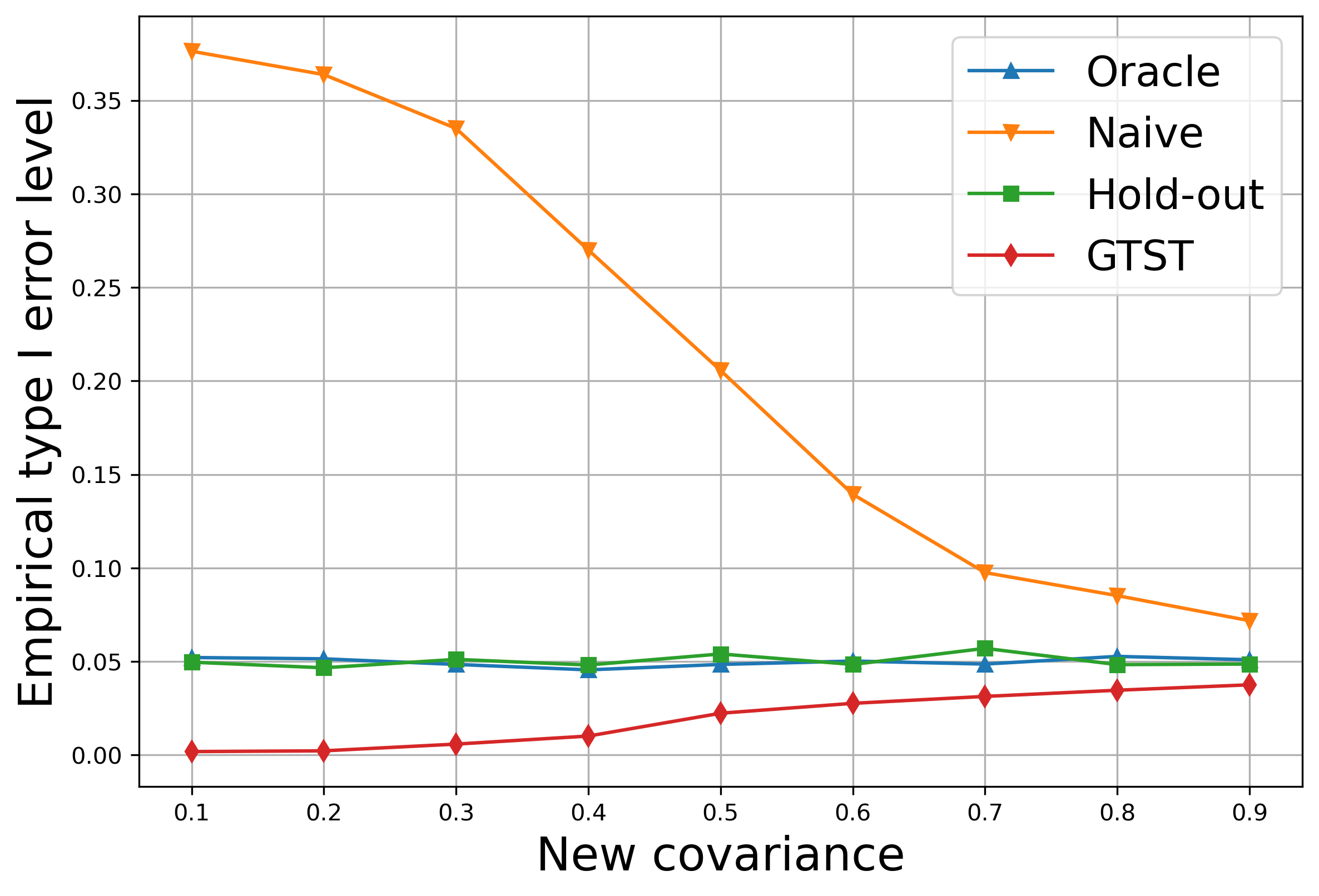}
    \caption{Type I error level}
    \label{fig:cov_level}
\end{subfigure}

\caption{
Performance of the \emph{post hoc} procedures as a function of the covariance jump magnitude. (a) Empirical power; (b) Empirical Type~I error level.
}
\label{fig:cov_results}
\end{figure}

\paragraph{Results for covariance changes.}
Figure~\ref{fig:cov_results} exhibits trends that are consistent with those observed in the mean-shift and variance-change experiments. The relative performance ranking of the \emph{post hoc} procedures remains unchanged. More importantly, these results confirm that the proposed framework is not limited to mean-shift scenarios. Similar patterns are preserved when the change affects other aspects of the distribution, such as the covariance structure. This robustness with respect to the nature of the change constitutes one of the main practical strengths of the proposed approach.

\subsection{Estimation / Noise}
\subsubsection{Influence of the time series length}
In this experiment, we study how the length of the time series influences the performance of the \emph{post hoc} procedures. A single change-point is placed at the center of the sequence and the mean jump amplitude is kept fixed at $\text{jump} = 0.3$, while all other simulation settings remain unchanged except for the sample size $n$, which varies over a predefined range of values. This setup allows us to analyze how the statistical performance of the methods changes as the amount of available data in the two segments increases.
\begin{figure}[tbp]
\centering

\begin{subfigure}{0.48\textwidth}
    \centering
    \includegraphics[width=\textwidth]{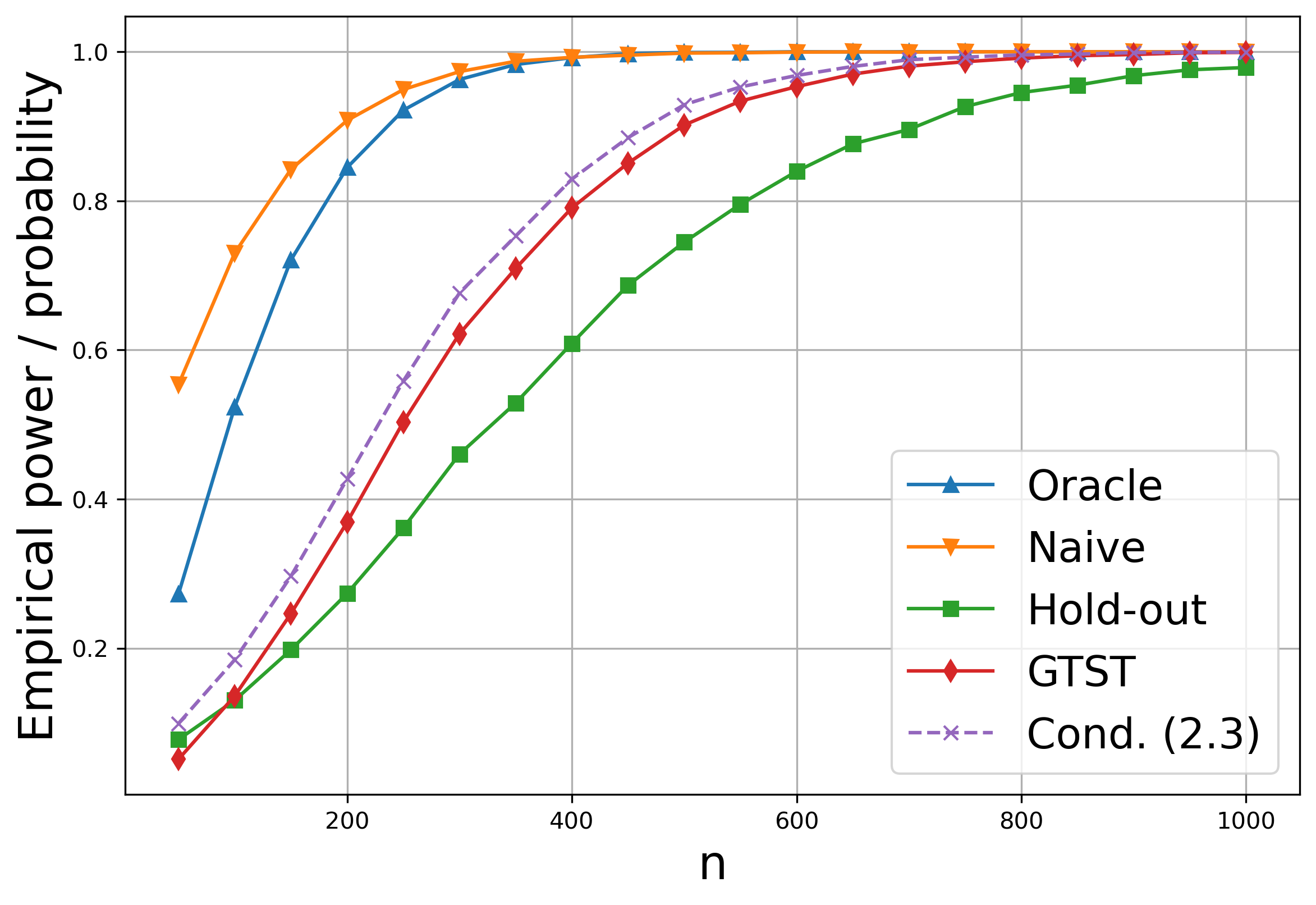}
    \caption{Power}
    \label{fig:n_power}
\end{subfigure}
\hfill
\begin{subfigure}{0.48\textwidth}
    \centering
    \includegraphics[width=\textwidth]{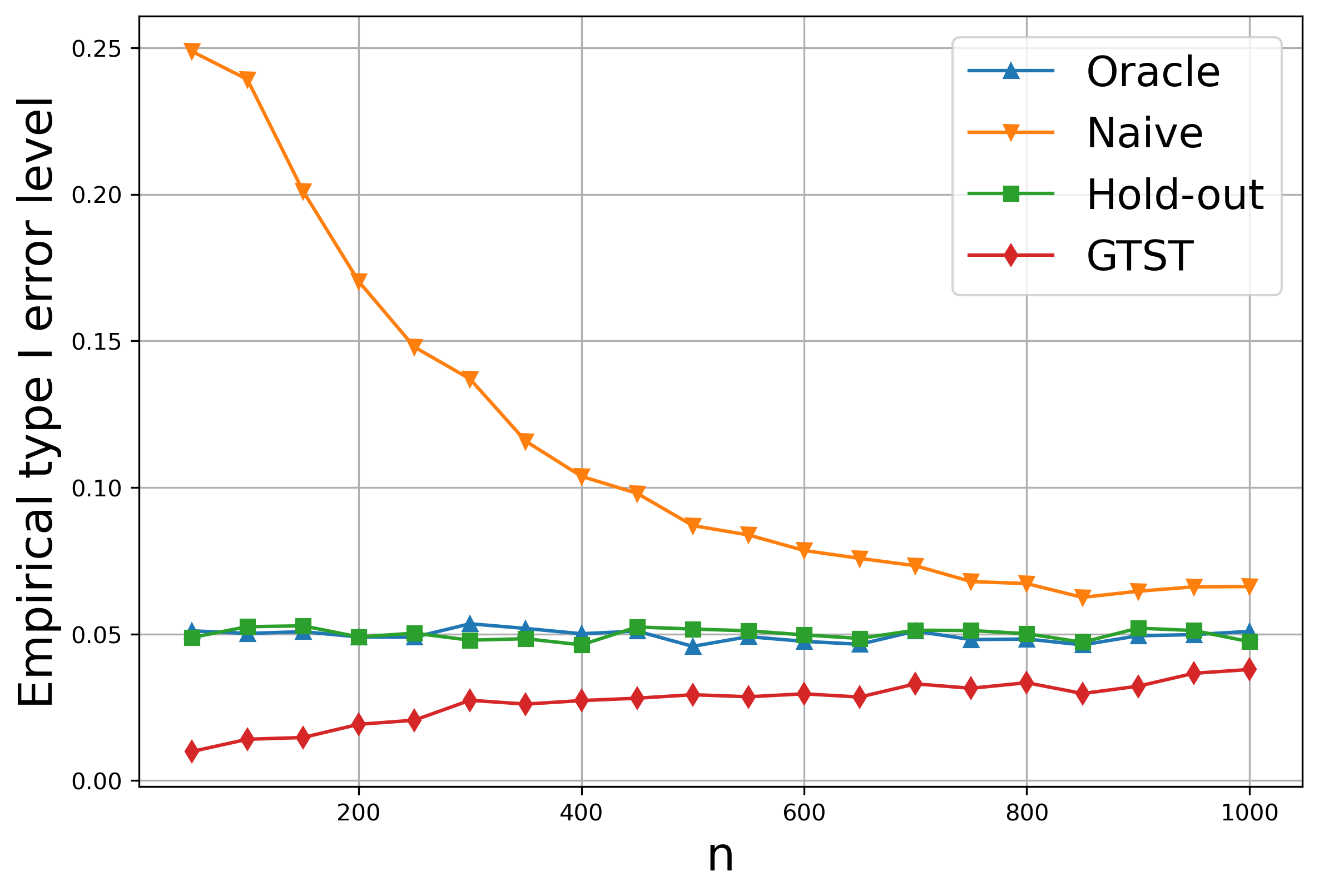}
    \caption{Type I error level}
    \label{fig:n_level}
\end{subfigure}

\caption{
Performance of the \emph{post hoc} procedures as a function of the time series length $n$. A single change-point is located at the center of the sequence, with a fixed mean jump magnitude of $0.3$. (a) Empirical power; (b) Empirical Type~I error level.
}
\label{fig:n_results}
\end{figure}
\paragraph{Results.}
Figure~\ref{fig:n_results} shows that increasing the length of the time series substantially improves the performance of all \emph{post hoc} procedures. For a fixed mean jump amplitude, larger sample sizes lead to higher power. This overall improvement is driven by two main effects. First, a larger $n$ increases the accuracy of the initial change-point detection step, yielding a tighter localization. Second, because the true change-point is located at the center of the sequence, the absolute size of the underlying segments grows proportionally with $n$. This yields larger test segments, which reduces the variance of the empirical $\text{MMD}_b^2$ estimates. Consequently, the global test statistic becomes more reliable, facilitating the correct identification of the active blocks for all methods. Furthermore, the empirical probability that Condition~\eqref{eq:delta_n'} is satisfied increases with the sample size. It is important to note that this is a direct consequence of the central location of the change-point: as the sequence grows, the minimal segment size also diverges, providing sufficient observations to satisfy the theoretical minimal spacing requirements.

\subsubsection{Influence of the number of change-points}

Finally, we investigate scenarios involving multiple change-points in order to assess the behavior of the \emph{post hoc} procedures when several distributional changes occur along the sequence. In these experiments, we consider mean-shift changes and evaluate performance as a function of the jump amplitude.

When $\kappa^\star=3$, two change-points are present. The time series initially follows a standard Gaussian distribution, then undergoes a mean shift affecting $3$ out of $5$ dimensions, and finally returns to the standard Gaussian distribution. The true change-points are located at positions $n/3$ and $2n/3$. \emph{Post hoc} inference is performed for the first change-point, corresponding to $i_0 = 1$.

When $\kappa^\star=4$, three change-points are introduced. The sequence starts from a standard Gaussian distribution, then switches to a mean-shifted regime affecting $3$ out of $5$ dimensions, returns to the standard Gaussian distribution, and finally switches again to the same mean-shifted distribution. In this case, \emph{post hoc} inference focuses on the middle change-point, i.e.\ $i_0 = 2$. For the localized tests (Hold-out and GTST), the uncertainty margin $\delta_n$ is calibrated using the constant $C_0 = 0.1$ in both scenarios. This setting allows us to study the impact of multiple nearby change-points and, in particular, the effect of imperfect localization of neighboring change-points on \emph{post hoc} inference. Performance curves are reported as a function of the mean jump amplitude.

\begin{figure}[tbp]
\centering

\begin{subfigure}{0.48\textwidth}
    \centering
    \includegraphics[width=\textwidth]{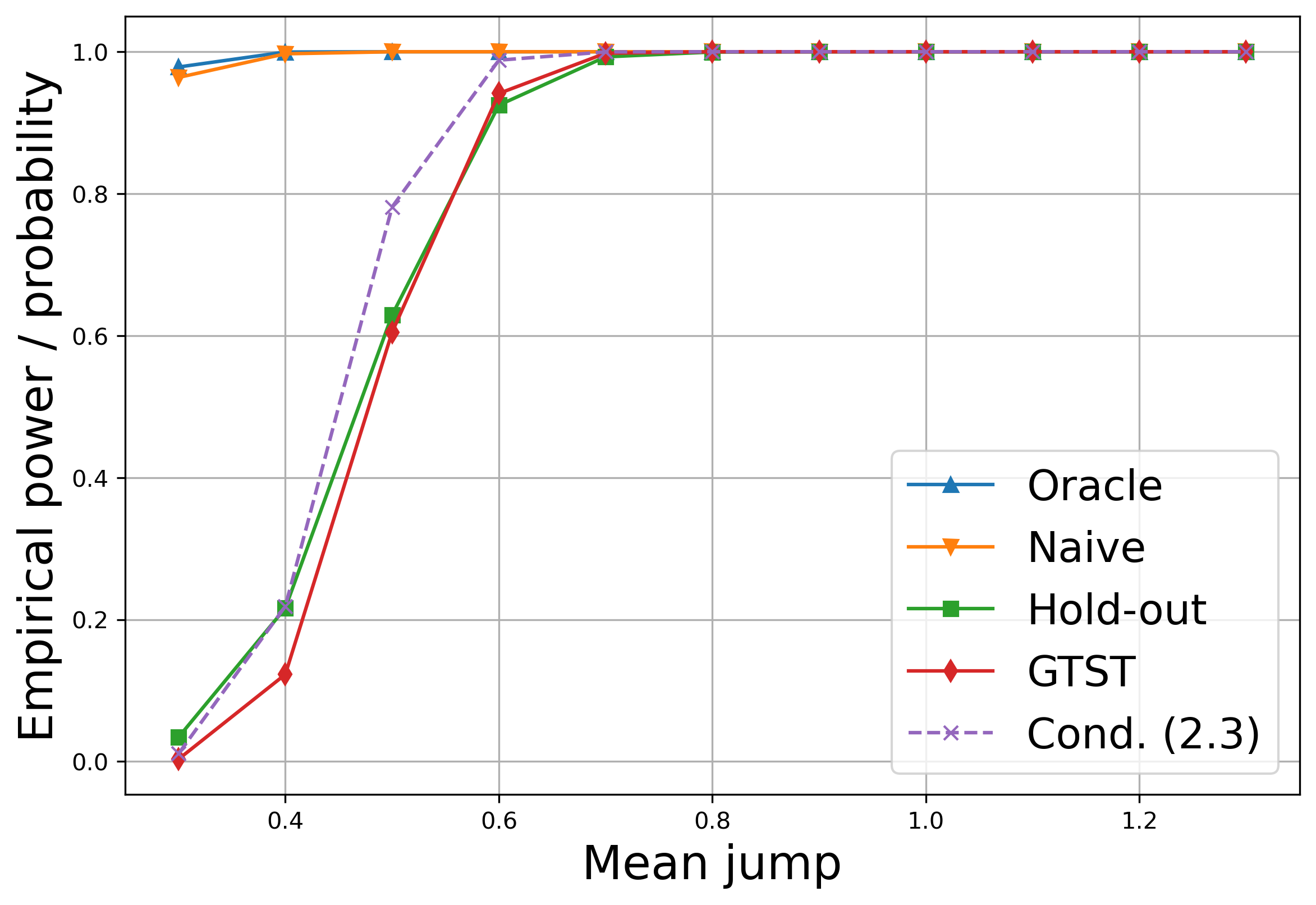}
    \caption{Power}
    \label{fig:D_3_power}
\end{subfigure}
\hfill
\begin{subfigure}{0.48\textwidth}
    \centering
    \includegraphics[width=\textwidth]{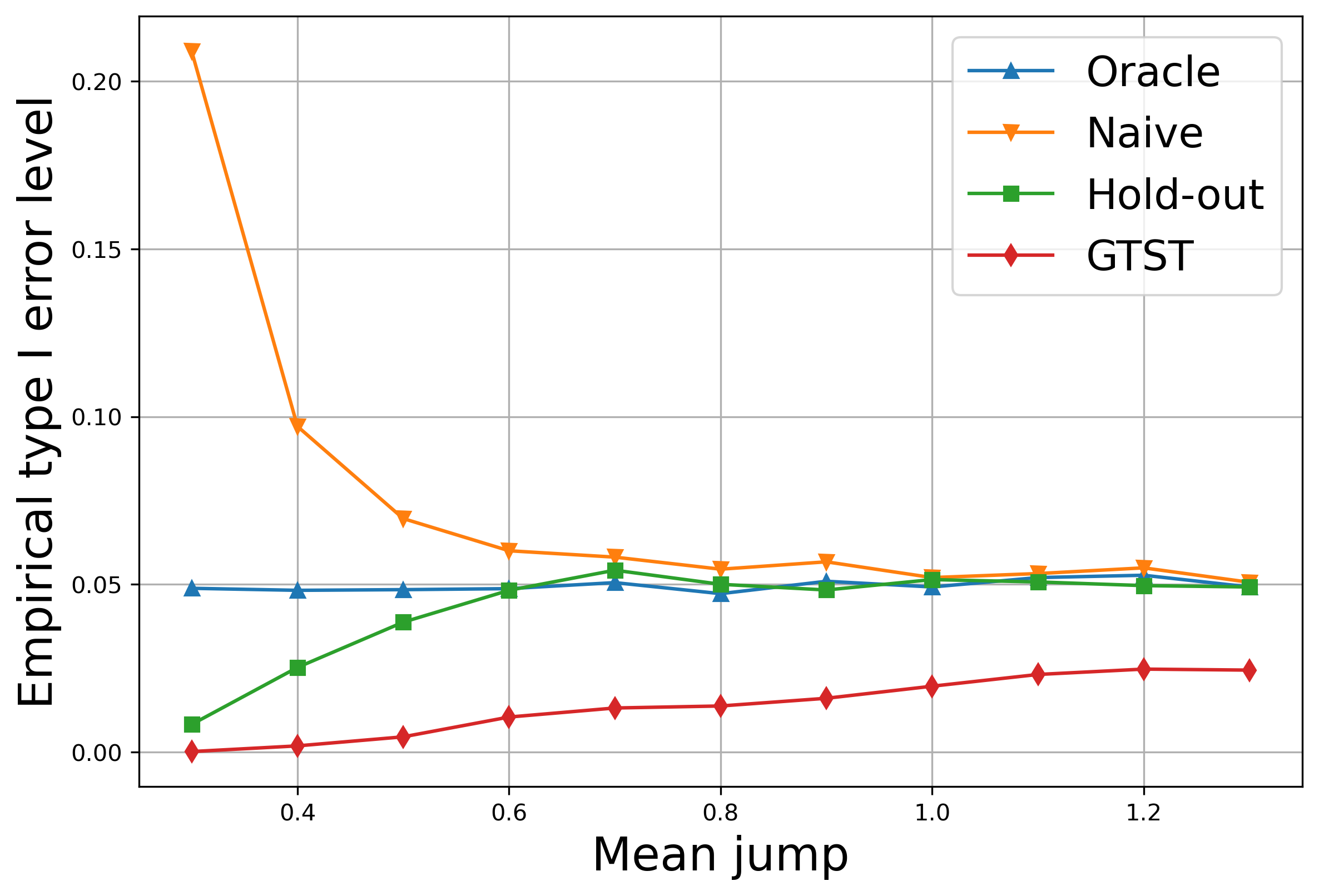}
    \caption{Type I error level}
    \label{fig:D_3_level}
\end{subfigure}

\vspace{0.5cm}

\caption{
Performance of the \emph{post hoc} procedures as a function of the mean jump magnitude in a three-segment scenario ($\kappa^\star=3$). The sequence contains two true change-points, and inference is specifically performed on the first one ($i_0 = 1$). (a) Empirical power; (b) Empirical Type~I error level.
}
\label{fig:D_3_results}
\end{figure}

\paragraph{Results for $\kappa^\star=3$.}

Figure~\ref{fig:D_3_results} reveals a limitation of GTST in the multiple change-point setting. In contrast to the single change-point case, the separability condition underpinning GTST becomes harder as soon as $\kappa^\star>2$. More precisely, the requirement changes from
\[
n\,\underline{\Lambda}_{\tau^\star} > \delta_n \qquad (\kappa^\star=2)
\]
to
\[
n\,\underline{\Lambda}_{\tau^\star} > 2\delta_n \qquad (\kappa^\star>2),
\]
which makes the condition harder to satisfy in practice.

This effect is clearly reflected in the empirical behavior of GTST: when the strengthened condition fails, the procedure conservatively retains the null hypothesis by construction, which leads to a substantial loss of power. In the present configuration with $\kappa^\star = 3$, the hold-out test still executes when $n\,\Lambda_{\tau^\star}$ falls between $\delta_n$ and $2\delta_n$---a regime in which the strengthened separability condition required by GTST fails---albeit without the formal $\kappa^\star > 2$ guarantee in Lemma~\ref{lem:IC}, as discussed in Remark~\ref{rem:holdout_multiple}. This explains its stronger empirical performance in that range.

An additional phenomenon appears in the multiple change-point setting. For the range of jump amplitudes considered here, the oracle and naive procedures quickly attain power close to one, whereas GTST and the hold-out approach may still fail for some of these values. This discrepancy is not due to a weaker signal, but rather to the increased difficulty of the inference problem when several change-points are present, reflecting the advantage of performing inference without explicitly accounting for localization uncertainty. In fact, even for large mean shifts, the presence of multiple change-points reduces the effective amount of data available within each segment and makes localization more challenging, since several boundaries must be estimated simultaneously instead of a single one. As a consequence, errors in change-point localization have a stronger impact on \emph{post hoc} inference. Moreover, the hold-out strategy further reduces the effective sample size by construction, as part of the data is removed to prevent contamination from neighboring segments during the comparison step. Similarly, GTST becomes more demanding in this setting: the uncertainty on both neighboring boundaries makes the testing region random, leading to a larger number of tests and a higher chance of remaining in the conservative regime where the null hypothesis is retained.

Taken together, these results show that multiple change-point configurations are intrinsically harder than the single change-point case, and lead to reduced empirical performance for methods that explicitly account for localization uncertainty. In such settings, the hold-out approach may provide a more robust alternative than GTST because it relies on a weaker applicability condition. However, once the theoretical condition is satisfied for both procedures, GTST tends to outperform the hold-out strategy while maintaining a lower Type~I error level, which ultimately makes it more precise.

\begin{figure}[tbp]
\centering

\begin{subfigure}{0.48\textwidth}
    \centering
    \includegraphics[width=\textwidth]{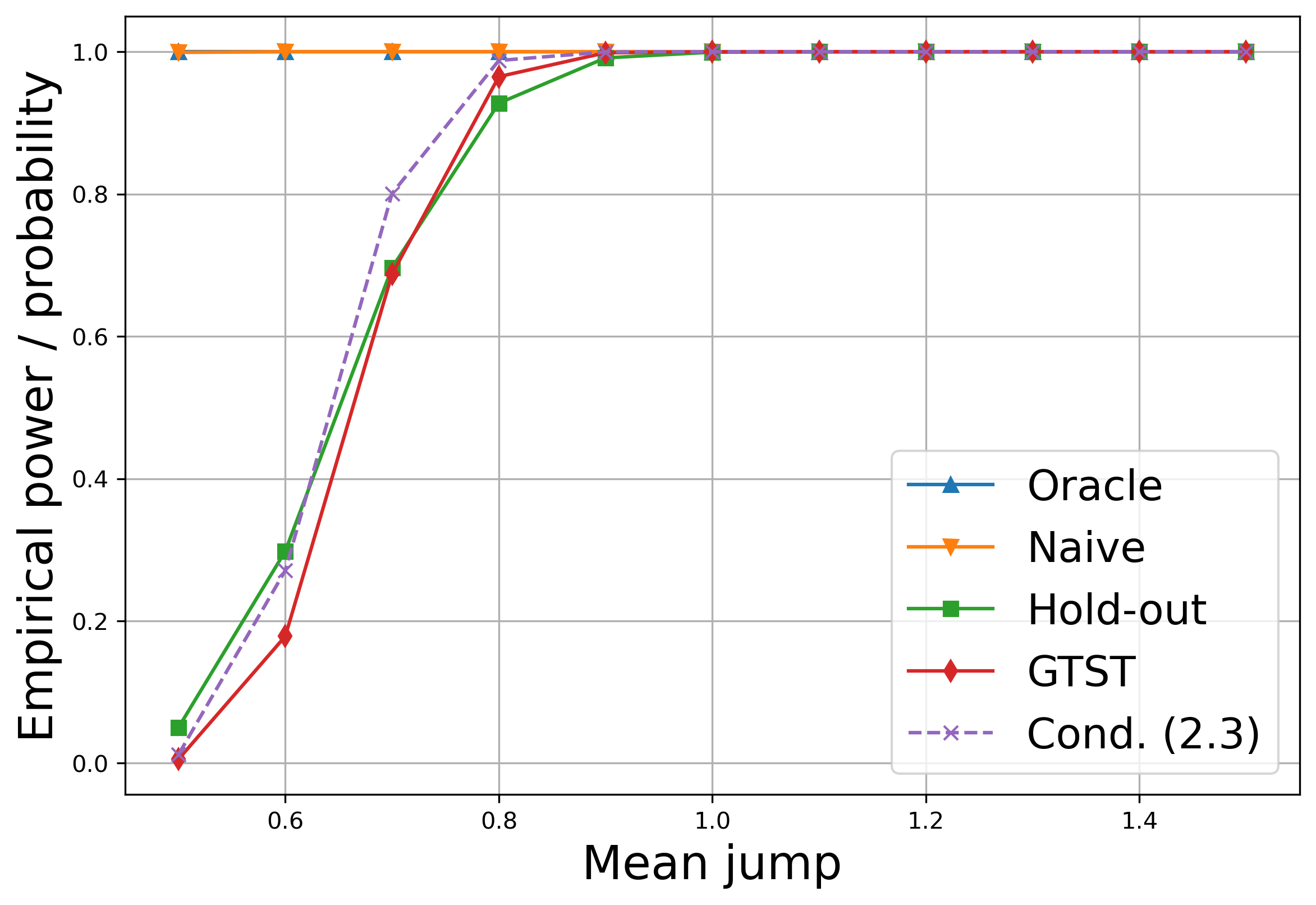}
    \caption{Power}
    \label{fig:D_4_power}
\end{subfigure}
\hfill
\begin{subfigure}{0.48\textwidth}
    \centering
    \includegraphics[width=\textwidth]{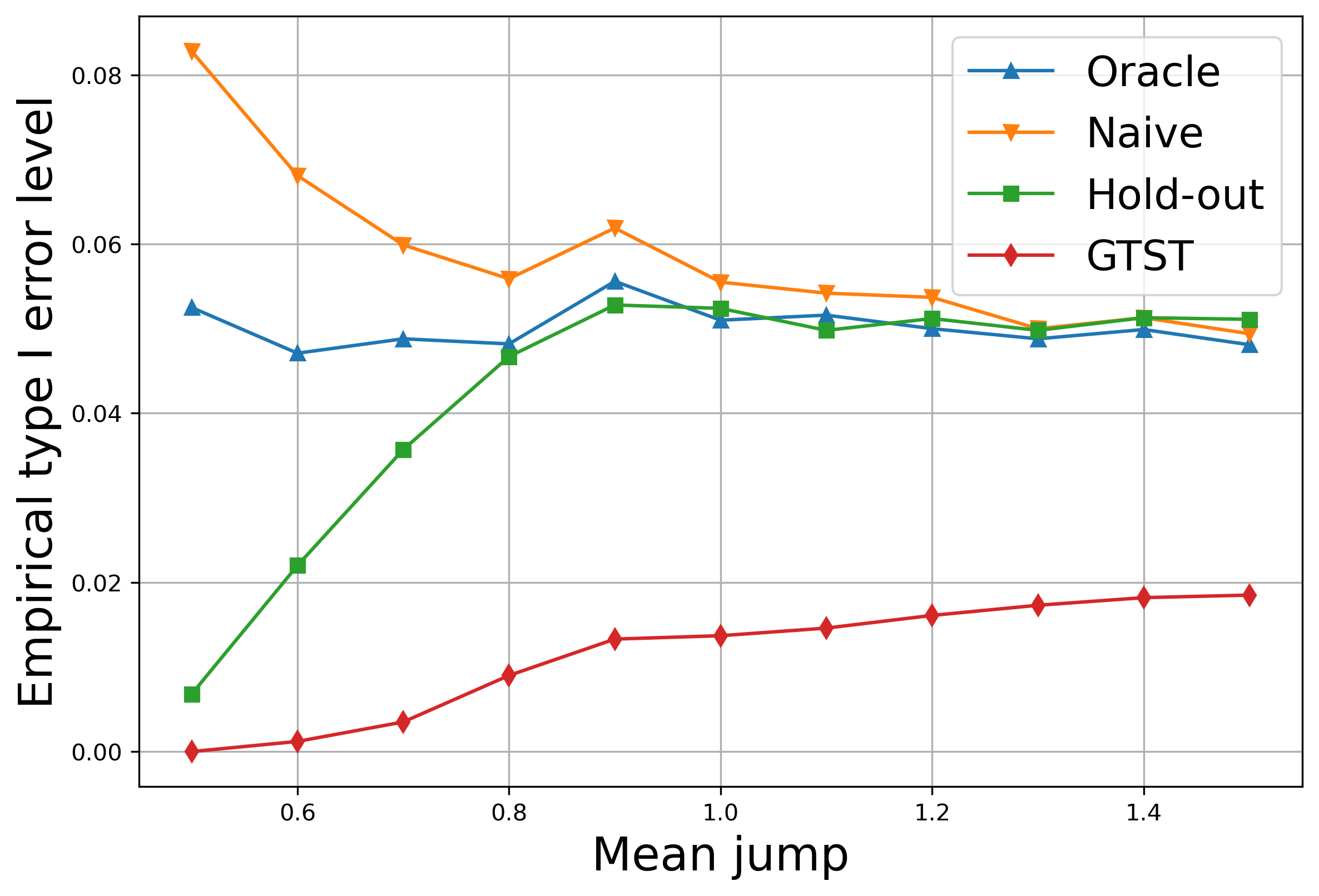}
    \caption{Type I error level}
    \label{fig:D_4_level}
\end{subfigure}

\caption{
Performance of the \emph{post hoc} procedures as a function of the mean jump magnitude in a four-segment scenario ($\kappa^\star=4$). The sequence contains three true change-points, and inference is explicitly focused on the middle one ($i_0 = 2$). (a) Empirical power; (b) Empirical Type~I error level.
}
\label{fig:D_4_results}
\end{figure}

\paragraph{Results for $\kappa^\star=4$.}
When $\kappa^\star=4$, we observe behavior similar to the case $\kappa^\star=3$. In particular, both GTST and the hold-out procedure are governed by the same separability requirement as before. 

However, the empirical power differs slightly between the two configurations. In the case $\kappa^\star=3$, both GTST and the hold-out procedure reach power close to one for mean jumps greater than approximately $0.7$, whereas when $\kappa^\star=4$, a larger jump (around $0.9$) is required to achieve the same level of power.

This difference can be explained by the structure of the segmentation problem. When $\kappa^\star=3$, the change-point under investigation lies between the true start/end of the sequence and an estimated boundary. In contrast, when $\kappa^\star=4$, the change-point that we choose lies between two estimated boundaries. As a consequence, the localization uncertainty affects both sides of the tested segments. To avoid introducing dependence between the localization step and the \emph{post hoc} test, additional precautions must be taken when constructing the testing segments. These adjustments effectively reduce the amount of usable data for the test, which in turn leads to a decrease in statistical power.

Overall, this illustrates that configurations with multiple change-points are intrinsically more challenging, since uncertainty in the localization of neighboring change-points propagates to the \emph{post hoc} inference step.
\end{appendix}

\bibliography{biblio}

\end{document}